\newtheorem{theorem}{Theorem}
\newtheorem{lemma}{Lemma}
\newtheorem{definition}{Definition}
\newtheorem{proposition}{Proposition}
\newcommand{\rpm}{\sbox0{$1$}\sbox2{$\scriptstyle\pm$}
  \raise\dimexpr(\ht0-\ht2)/2\relax\box2 }
\DeclareMathOperator*{\E}{\mathbb{E}}
\DeclareMathOperator*{\MI}{\mathcal{I}}
\DeclareMathOperator*{\MU}{\mathcal{U}}
\title{Dissertation: On the Theoretical Foundation of  \\Model Comparison and Evaluation for Recommender System}
\author{Dong Li}
\begin{document}
\frontmatter

\prefacesection{Acknowledgements}
  First, I would like to extend my thanks and appreciation to my advisor, Prof. Ruoming Jin. His encouragement and instruction are essential for me to complete this dissertation. The qualities like persistence and patience shining on him benefit me a lot.
Also, I would say thank you to all the professors in Computer Science Department. I shall thank all the professors who taught the courses I took during my years at KSU.
Computer Science Department has so many nice staffs. I would thank them for helping me solve many problems I met.

This work received partial support from a research agreement between Kent State University and iLambda, Inc., as well as from the National Science Foundation under Grant IIS-2142675.

\startthesis


\chapter{Introduction}\label{chpt:one}
\section{Introduction to Recommender System}

Recommender systems have become increasingly important with the rise of the web as a medium for electronic and business transactions. One of the key drivers of this technology is the ease with which users can provide feedback about their likes and dislikes through simple clicks of a mouse. This feedback is commonly collected in the form of ratings, but can also be inferred from a user's browsing and purchasing history. Recommender systems utilize users' historical data to infer customer interests and provide personalized recommendations. The basic principle of recommendations is that significant dependencies exist between user- and item-centric activity, which can be learned in a data-driven manner to make accurate predictions. Collaborative filtering is one family of recommendation algorithms that uses ratings from multiple users to predict missing ratings or uses binary click information to predict potential clicks. However, recommender systems can be more complex and incorporate auxiliary data such as content-based attributes, user interactions, and contextual information. 

\section{Goals of Recommender System}

It is important to understand the different ways in which the recommendation problem can be approached. There are two main models:

\noindent\textbf{Prediction version of the problem:} In this approach, the goal is to predict the rating value for a particular user-item combination, assuming that there is training data available that indicates user preferences for items. The data is represented as an incomplete matrix of m × n dimensions, where the observed values are used for training and the missing values are predicted using the learning algorithm. This is known as the matrix completion problem, as the algorithm completes the missing values in the matrix.

\noindent\textbf{Ranking version of the problem:} In practice, it may not be necessary to predict the ratings of users for specific items in order to make recommendations. Instead, a merchant may want to recommend the top-k items for a particular user or the top-k users for a particular item. The determination of the top-k items is more common than the determination of the top-k users, and this problem is known as the top-k recommendation problem. This is the ranking formulation of the recommendation problem, and throughout this book, we will focus on the determination of the top-k items as it is the more prevalent scenario.

In this dissertation, we would deal with the latter one - the ranking-based recommender system problem.

\section{Design of Offline Recommender System}

\subsection{Implicit Feedback Data}

\begin{figure}
    \centering
    \includegraphics[width = 0.9\linewidth]{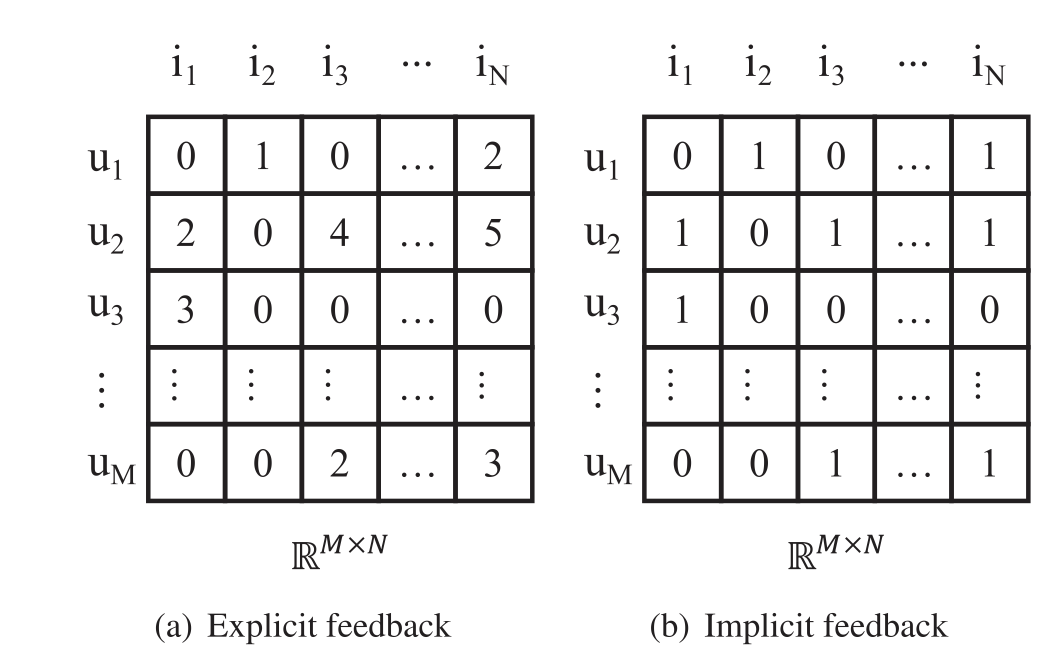}
    \caption{Interaction matrices of explicit and implicit feedback. Figure obtained from \cite{dnnrec}.}
    \label{fig:implicit}
\end{figure}

In \cref{fig:implicit}, (a) is the explicit feedback, for example, the user's ratings over the movies. (b) is the implicit feedback, we can interpret as a user $i$ rated/watched a movie $j$ if the $(i,j)$ term is 1. Throughout this dissertation, we would focus on only the implicit feedback and pure collaborative filtering models without leveraging other auxiliary information.

\subsection{Training Testing Data Splitting}

Online recommender systems typically utilize A/B testing to distinguish the performance of various recommendation models and select the best one. Randomly divide a large population into two or more groups and apply independent recommendation models in each group. By comparing the metrics between different groups, like Revenue, one can determine which one is the good recommendation model.

This is not feasible in an offline recommender system. To design an offline recommender system, researchers typically collect real-world data, for example, \cref{fig:implicit} (b). Then the data has to be split into training and testing two parts (sometimes, we also have extra validation data). 

Here is an example, the raw data is $X\in\mathbb{R}^{5\times 8}$:
\begin{equation*}
    X=\begin{bmatrix}
    1&0&1&0&1&1&0&1\\
    0&1&1&0&0&1&1&0\\
    1&1&0&1&0&1&1&1\\
    0&1&1&0&0&0&0&1\\
    0&0&0&1&1&0&0&1\\
    \end{bmatrix}
\end{equation*}
Each row is the historical clicked information for a user.
The first row means user $1$ clicked/liked item $1,3,5,6,8$, and other items remain unknown.
The splitting can be by ratio or by number.

\noindent\textbf{leave-one-out}: for each user, randomly pick one clicked item to form as test dataset and others as a training dataset.
\begin{equation*}
    X_{train}=\begin{bmatrix}
    1&0&1&0&1&1&0&0\\
    0&1&1&0&0&1&0&0\\
    1&1&0&1&0&1&0&1\\
    0&1&0&0&0&0&0&1\\
    0&0&0&0&1&0&0&1\\
    \end{bmatrix}\quad\quad 
       X_{test}=\begin{bmatrix}
    0&0&0&0&0&0&0&1\\
    0&0&0&0&0&0&1&0\\
0&0&0&0&0&0&1&0\\
  0&0&1&0&0&0&0&0\\
   0&0&0&1&0&0&0&0\\
    \end{bmatrix}
\end{equation*}

\textbf{A recommender system would learn from training data ($X_{train}$) and try to recommend the item in test data ($X_{test}$).}

\subsection{Evaluation of Offline Ranking-based Recommender System}
Taking the same example, a recommendation model $A$ would learn from the training data $X_{train}$ and recommend a personalized ranked item list to each user. For example, assume it would recommend a list $i_2>i_8>i_7>i_4$ to $u_1$, where $i_2$ is the item 2 and $u_1$ is the user 1. (It is guaranteed to recommend only unknown items which means clicked items won't appear in the recommended list).
Since the target item in the test set is $i_8$, we would see that the rank of the target item is $2$. For each user, there would be a personalized recommended item list and for each target item in the test set, it has a rank number. The offline evaluation would take these rank numbers to quantify the performance of the recommendation model. From the above, we know that a good recommendation model would likely rank the test item at the top of the list.

\subsection{Collaborative Filtering Model}

From the last section, we know that the mission of the recommendation model is to generate a personalized item ranking list for each user. There are multiple ways to approach it. The most straightforward way is Matrix Factorization:
\begin{equation*}
    X_{train}\approx V_u\cdot V^T_i = \widehat{X}
\end{equation*}
One can factorize the training matrix into two low-rank matrices $V_u$ and $V_i$, interpreted as the user representation/embedding matrix and item representation/embedding matrix. The values in $\widehat{X}$ can be used to rank the item naturally, thus one can form a ranking list for each user.
\section{Organization}

We organize the remaining of the dissertation as follows: In \cref{chpt:linear},  we discuss the linear models of recommendation \cite{jin@linear,ease,edlae}. Specifically, we reveal the similarities and differences between the matrix factorization approach and 
the regression-based approach; In \cref{chpt:ccl}, we unify the common objective functions from the perspective of contrastive learning; In \cref{chpt:sample-evaluation}, we theoretically and empirically study the sampling-based recommendation evaluation process \cite{jin2021towards,dong@2023aaai,li2022on,walid@sample,WalidRendle20}.



\chapter{Sampling-Based Evaluation}\label{chpt:sample-evaluation}

\section{Introduction}
\label{sec:sampling-evaluation-introduction}

Recommendation models are crucial in the AI-driven economy, and evaluating them rigorously is increasingly important. While online A/B tests are the ultimate criteria for discerning different recommendation models, running such a test often takes days or even weeks to draw a conclusion. Offline evaluations thus play a critical role in helping choose  promising/right recommendation models (based on historical data) for online testing. 

An offline evaluation first splits the user's historical data (for example, clicked items) into training and test sets (based on various data-splitting strategies~\cite{fayyaz2020recommendation}); it then learns the recommendation model parameters using the training data; finally, it performs a top-K metric calculation on the test set. 
The last step is often prohibitively expensive in computation because it needs to compute preference or likelihood scores of all items for each user to calculate the evaluation metrics (e.g., Recall and NDCG~\cite{gunawardana2009survey}).
To speed up the such calculation,  researchers/practitioners have resorted to the item-sampling strategy, i.e., calculating the top-$K$ evaluation metrics using only a small set of item samples~\cite{Koren08, CremonesiKT@10,he2017neural, ebesu2018collaborative,HuSWY18,krichene2018efficient,wang2018explainable,YangBGHE18,YangCXWB18}.

However, item sampling-based evaluation has given rise to a major controversy among the recommendation community, as \cite{rendle2019evaluation,krichene@kdd20} pointed out that commonly used (top-$K$) evaluation metrics, such as Recall (Hit-Ratio)/Precision, Average Precision (AP) and NDCG, (other than AUC), are all ``inconsistent'' with respect to the global metrics (even in expectation). They suggested a cautionary use (avoiding if possible) of the sampled metrics for recommendation evaluation. Due to the ubiquity of sampling methodology, it is not only of theoretical importance but also of practical interest to understand item-sampling evaluation. Indeed, since the number of items in any real-world recommendation system is typically quite large (easily in the order of tens of millions), efficient model evaluation based on item sampling can be very useful for recommendation researchers and practitioners.  

To address the discrepancy between item sampling results and the exact top-$K$ recommendation evaluation, \cite{krichene@kdd20,Li@KDD20} have proposed a few estimators in recovering global top-$K$ metrics from sampling. Despite these latest works on item-sampling estimation,  there remain some major gaps in making item-sampling reliable and accurate for top-$K$ metrics. To this end, we make the following contributions in this chapter: 
\begin{itemize}
    \item We reveal the alignment relationship between the sampling-based top-K Recall metric curve and global top-K Recall and propose a few mapping functions that help verify such an approximating linear relation.

    \item We propose two estimators $MES$ and $MLE$ that can estimate the global user rank distribution and also can help estimate the global true metric 
    
        \item We derive an optimal item-sampling estimator and highlight a subtle difference between the estimators derived by \cite{krichene@kdd20} and ours, and point out the potential issues of former estimator because it fails to link the user population size with the estimation variance.  To the best of our knowledge, this is the first estimator that directly optimizes the estimation errors. 

        \item  We address the limitation of the current item sampling approaches by proposing a new  adaptive sampling method. This provides a simple and effective remedy that helps avoid the sampling ``blind spot'', and significantly improves the accuracy of the estimated metrics with low sample complexity. 

        \item We study the effect of sampling from the perspective of users. We found that this is practical and efficient for a such dataset with much more users than items.

        \item We perform a thorough experimental evaluation of the proposed estimators. The experimental results confirm the statistical analysis and the superiority of our estimators.
    \end{itemize}

Our results help lay the theoretical foundation for adopting item sampling metrics for recommendation evaluation and offer a simple yet effective new adaptive sampling approach to help recommendation practitioners and researchers to apply item sampling-based approaches to speed up offline evaluation.   

We organize this chapter as follows: In \cref{sec:sampling-evaluation-introduction}, we introduce the background of the recommendation evaluation; In \cref{sec:sampling-evaluation-problem}, we briefly discuss the main problem of this chapter; In \cref{sec:map}, we take a glance at relationship between the global Top-K recall and sampling Top-K recall; In \cref{sec:fix_estimator_learning_rank}, we talk about various sampling-based estimators; In \cref{sec:adaptive_estimator}, we propose the new adaptive sampling and estimation method; In \cref{sec:user_sampling}, we discuss sampling from user perspective; Finally, in \cref{sec:sampling-evaluation-experiement}, we present the experimental results.

\section{Problem Formulation}
\label{sec:sampling-evaluation-problem}

\begin{table}
\label{tab:notations}

\begin{scriptsize}
 \begin{tabular}{|l|l|}
\hline
$U$	& the set of overall users, and $|U| = M$ \\
\hline
$I$	& the set of overall items, and $|I| = N$ \\
\hline
$M$ & total \# of users\\
\hline
$N$ & total \# of items \\
\hline
$i_u$	& the target item (to be ranked) for each user $u$\\
\hline 
$I_u$	& sampled test set for user $u$, consists of 1 target item $i_u$, $n-1$ sampled items \\
\hline 
$n$ & sample set size, $|I_u| = n$\\
\hline
$R_u$	& rank of item $i_u$ among all items $I$ for user $u$\\
\hline
$r_u$	& rank of item $i_u$ among $I_u$ for user $u$\\
\hline
$T$ & evaluation metric (for a recommendation model), e.g. $Recall$, $Recall@K$, etc \\
\hline
$\mathcal{F}$ & individual evaluation metric function (for each item rank), e.g. $Recall(\cdot)$, $Recall@K(\cdot)$, etc\\
\hline
$T^S$ & sampled evaluation metric \\
\hline
$\widehat{T}$ & estimated evaluation metric\\
\hline
  \end{tabular}
  \end{scriptsize}
  \caption{Notations for Leave-one-out evaluation setting.}
\end{table}

\subsection{Leave-one-out Recommendation Evaluation}\label{subsec:evaluation}
There are user set $U$, ($|U|=M$) and item set $I$, ($|I|=N$). Each user $u$ is associated with one and only one target item $i_u$ (hold out from training set for $u$). A recommendation model $A$ is trained on the training set and would compute a personalized rank $R_u =A(i_u| u, I)$ for item $i_u$ among all the items $I$, and $\{R_u\}^{M}_{u=1}$ is called \textit{global rank} distribution. Intuitively speaking, $R_u$ is the prediction rank of item $i_u$ among all items, $R_u\in [1, N]$.
In contrast, one can also compute the other type personalized rank $r_u = A(i_u|u, I_u)$, $r_u\in [1, n]$, for item $i_u$ among sampled set $I_u=\{\ i\sim I\backslash i_u\}\cup \{i_u\}$, ($|I_u|=n$). In this case, $\{r_u\}_{u=1}^M$ is called \textit{sampled rank} distribution. Given $\{R_u\}^{M}_{u=1}$ or $\{r_u\}^{M}_{u=1}$, the (global or sampling-based) evaluation metric can be computed according to a specific metric function.

\subsection{Top-K Evaluation Metrics}\label{subsec:evaluation_metrics}
A metric function $\mathcal{F}$ maps its integer input (any rank $R_u$ or $r_u$) to a real-valued score. The aggregation of these scores over all users is called metric $T$:

\begin{equation}\label{eq:def_metric_global}
    \begin{split}
        T=\frac{1}{M}\sum\limits_{u=1}^M{\mathcal{F}(R_u)}
    \end{split}
\end{equation}

It is worth noting that a metric function $\mathcal{F}$ is an operation on some integer rank input and a metric $T$ is a corresponding aggregation of the output, which is a real value and represents the performance of a recommendation model. When we talk about $Recall$, $NDCG$ or $Recall@K$, $NDCG@K$, they can be both metric functions and metrics, depending on the scenario.
Similar to \citep{krichene@kdd20}, we define the simplified top-K metric functions $\mathcal{F}$ in the following way:

\begin{equation}\label{eq:def_metric_fn_1}
    \begin{split}
        \mathcal{F}_{Recall@K}(x)&=\delta(x\le K)\\
        \mathcal{F}_{NDCG@K}(x)&=\delta(x\le K)\cdot \frac{1}{\log_2(x+1)}\\
        \mathcal{F}_{AP@K}(x) &= \delta(x\le K)\cdot \frac{1}{x}
    \end{split}
\end{equation}
where $\delta(x) = 1$ if $x$ is True, $0$ otherwise. 
And the corresponding (global) metrics $T$ for a given $\{R_u\}_{u=1}^M$  are:
\begin{equation}\label{eq:def_metric_1}
    \begin{split}
        T_{Recall@K}&=\frac{1}{M}\sum\limits_{u=1}^M{ \mathcal{F}_{Recall@K}{(R_u)}}= \frac{1}{M}\sum\limits_{u=1}^M{\delta(R_u\le K)}\\
        T_{NDCG@K}&=\frac{1}{M}\sum\limits_{u=1}^M{ \mathcal{F}_{NDCG@K}{(R_u)}}= \frac{1}{M}\sum\limits_{u=1}^M{\delta(R_u\le K)\cdot \frac{1}{\log_2(R_u+1)}}\\
        T_{AP@K} 
        &=\frac{1}{M}\sum\limits_{u=1}^M{ \mathcal{F}_{AP@K}{(R_u)}}= 
        \frac{1}{M}\sum\limits_{u=1}^M{ \delta(R_u\le K)\cdot \frac{1}{R_u}}
    \end{split}
\end{equation}
\subsection{Sampling-based Top-K Metrics Estimation Problem}
Its also a common choice \citep{Koren08, CremonesiKT@10,he2017neural, ebesu2018collaborative,HuSWY18,krichene2018efficient,wang2018explainable,YangBGHE18,YangCXWB18} that use sampling-based top-K metrics to evaluate recommendation models, denoted as $T^S$ in general:
\begin{equation}\label{eq:def_metric_sample}
    \begin{split}
        T^S \triangleq \frac{1}{M} \sum\limits_{u=1}^{M}\delta ({r_u\le K})\cdot \mathcal{F}(r_u)
    \end{split}
\end{equation}

It's obvious that $r_u$ and $R_u$ differ substantially. For instance, $r_u \in [1,n]$ whereas $R_u \in [1,N]$. Therefore, for the same $K$, the sampling-based top-K metric $T^S$ and the global top-K metric $T$ correspond to distinct measures (no direct relationship): $T  \neq T^S$ ($T_{Recall@K} \neq T^S_{Recall@K}$ for example). 
This problem is highlighted  in~\cite{krichene@kdd20, rendle2019evaluation}, referring to these two metrics being {\em inconsistent}.
From the perspective of statistical inference, the basic sampling-based top-$K$ metric $T^S$  is not a reasonable or good {\em estimator} ~\cite{theorypoint} of $T$.

We consider one fundamental question for item sampling-based evaluation: \textbf{Given the sampled rank distribution $\{r_u\}_{u=1}^M$, (one can compute $T^S$ according to \cref{eq:def_metric_sample} naturally), what would be the estimation value of $T$ defined in \cref{eq:def_metric_global}?}

\subsection{Efforts in Metric Estimation}
There is one recent work studying the general metric estimation problem using the item sampling metrics. Specifically, given the sampling ranked results in the test data set, $\{r_u\}^{M}_{u=1}$, how to infer/approximate the $T$, without the knowledge $\{R_u\}_{u=1}^M$? 

\noindent{\bf Krichene and Rendle's approaches:} In ~\cite{krichene@kdd20}, Krichene and Rendle develop a discrete function $\widehat{\mathcal{F}}(r)$ so that: 
\begin{small}
\begin{equation}
\begin{split}
    T&=\frac{1}{M}\sum_{u=1}^M \delta(R_u\le K)\cdot \mathcal{F}(R_u) \\
    &\approx \frac{1}{M}\sum_{u=1}^M \widehat{\mathcal F}(r_u) = \sum_{r=1}^n \tilde{P}(r) \widehat{\mathcal F}(r)=\widehat{T}
    \end{split}
\end{equation}
\end{small}
where $\tilde{P}(r)$ is the empirical rank distribution on the sampling data (~\cref{tab:notations}). They have proposed a few estimators based on this idea, including estimators that use the unbiased rank estimators,  minimize bias with monotonicity constraint ($CLS$), and utilize Bias-Variance ($BV$) tradeoff. Their study shows that only the last one ($BV$) is competitive ~\cite{krichene@kdd20}. They derive a solution based on the Bias-Variance ($BV$) tradeoff ~\cite{krichene@kdd20}:
\begin{equation} \label{eq:leastsquare}
\widehat{\mathcal{F}} = \Big((1.0-\gamma)A^TA+\gamma \text{diag}(\pmb{c})  \Big)^{-1}A^T\pmb{b}
\end{equation}
where 
\begin{equation}\label{eq:BV_param}
    \begin{split}
        &A\in \mathbb{R}^{N\times n},\quad A_{R,r} = \sqrt{P(R)}P(r|R)\\
        &\pmb{b}\in \mathbb{R}^N,\quad b_{R} = \sqrt{P(R)}\mathcal{f}(R)\\
        &\pmb{c}\in \mathbb{R}^n, \quad c_{r} = \sum\limits_R^{N}{P(R)P(r|R)}
    \end{split}
\end{equation}

\section{Top-K Recall Metric Estimation via Mapping Function}\label{sec:map}

\subsection{Overview}

\begin{figure}[htbp]
\includegraphics[width=\textwidth]{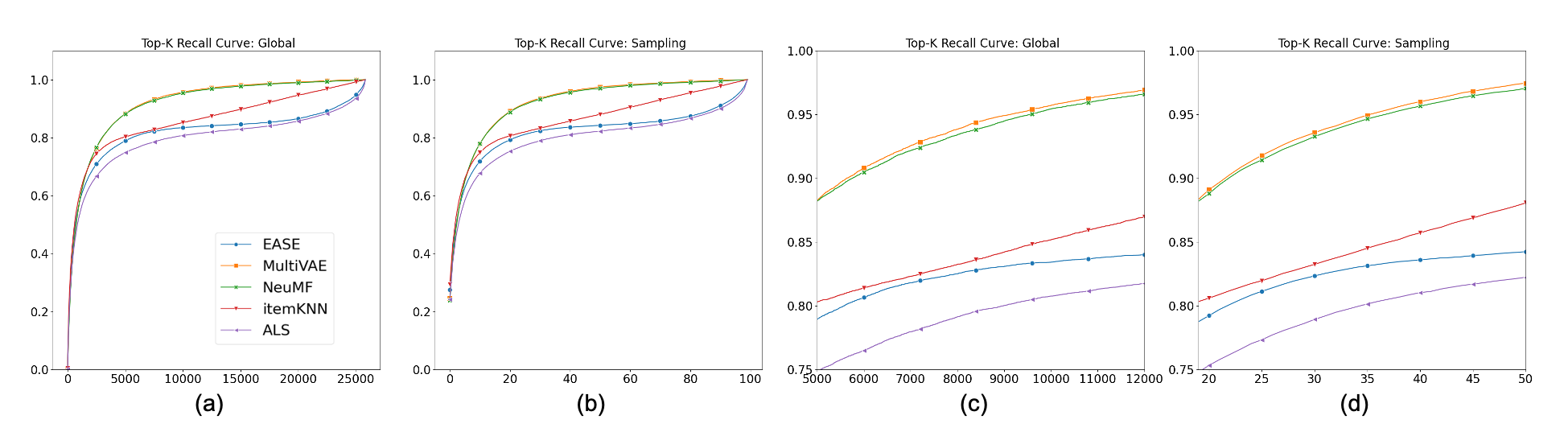}
\caption{Global vs Sampling Top-$K$ Hit-Ratio on $yelp$ dataset.
To display the details clearly, we zoom in (a) the global recall curve and (b) the sampling recall curve at different range scales, to $(c)$ and $(d)$ respectively. Comparing two figures, we can easily conclude that sampling evaluation maintains the same curve trend as global evaluation for different algorithms even at a small error range.}
\label{figure:yelp:correspondence}
\end{figure}

In this section, we would like to focus on one specific metric - "$Recall@K$". At a high level, we shall show that there is an approximately linear function $f(k)$ enables $T^S_{Recall@K}\approx T_{Recall@f(K)}$. For instance, in \cref{figure:yelp:correspondence}. Intuition: \cref{figure:yelp:correspondence} (a) and (b) plot two Recalll curves and they look similar in the whole trend. This suggests the mapping relation between the global Recall metric and sampling metric.

Let us consider the global top-$K$ Recall (or Hit-Ratio) metric (rewrite the \Cref{eq:def_metric_1}): 
\begin{equation}\label{eq:def_T_Recall@K}
\begin{split}
    T_{Recall@K} = \frac{1}{M}\sum\limits_{u = 1}^{M}{\delta(R_u \le K)} = \sum\limits_{R = 1}^{N}{\tilde{P}(R) \cdot \delta(R\le K)}
    \end{split}
\end{equation}
where $\tilde{P}(R)$ is the frequency of users with item $i_u$ rank in position $R$, also known as empirical global rank distribution:  
\begin{small}
\begin{equation}\label{eq: PR_empirical_def}
    \tilde{P}(R) = \frac{1}{M}\sum_{u = 1}^{M}{\delta (R_u = R)}
\end{equation}
\end{small}

Next, let us revisit the top-$k$ Recall (Hit-Ratio) under sampling in \Cref{eq:def_metric_sample}. For a given user $u$ and the relevant item $i_u$, we first sample $n - 1$ items from the entire set of items $I$, forming the subset $I_u$ (including $i_u$). The relative rank of $i_u$ among $I_n$ be denoted as $r_u = A(i_u|u, I_u)$. Note that $r_u$ is a random variable depending on the sampling set $I_u$. 

Given this, the sampling top-$k$ Recall metric can be written as: 
\begin{equation}\label{eq:def_TS_Recall@K}
    T^S_{Recall@K}=\frac{1}{M} \sum\limits_{u=1}^{M}{Z_u}, \ \ Z_u \sim Bernoulli(p_u=Pr(r_u\le K))
\end{equation}

where, $Z_u$ is a random variable for each user $u$, and follows a Bernoulli distribution with probability $p_u=Pr(r_u\le k)$. 

Now, recall that we are trying to study the relation between $T^S_{Recall@K}$ and $T_{Recall@K}$.

\subsection{Statistical View of Sampling Process}

We note that the population sum $\sum_{u=1}^M Z_u$ is a Poisson binomial distributed variable (a sum of $M$ independent Bernoulli distributed variables). Its mean and variance will simply be sums of the mean and variance of the $n$ Bernoulli distributions:

\begin{equation*}
    \mu = \sum_{u=1}^M{p_u}, \ \ \ \ 
    \sigma^2 = \sum_{u=1}^M p_u (1-p_u)
\end{equation*}

Given this, the expectation and variance of $T^S_{Recall@K}$ in \Cref{eq:def_TS_Recall@K}:

\begin{equation}\label{eq: E_TS_Recall@K_mean}
\begin{split}
    &\mathbb{E}[T^S_{Recall@K}] = \frac{1}{M} \sum\limits_{u=1}^{M}{p_u} 
    = \sum\limits_{R = 1}^{N}{\tilde{P}(R) \cdot P(R)}  
\end{split}
\end{equation}

\begin{equation}\label{eq: E_TS_Recall@K_var}
\begin{split}
    &Var[T^S_{Recall@K}] = \frac{1}{M^2} 
    \sum\limits_{u=1}^{M}{p_u(1-p_u)} 
    =\frac{1}{M} \sum\limits_{R = 1}^{N}{\tilde{P}(R) \cdot P(R)(1-P(R))}  
\end{split}
\end{equation}

where $P(R)$ is the probability for users who are in the same group ($R_u = R$), share the same $p^u$.

To define $p_u=Pr(r_u \le K)$ precisely, let us consider the two commonly used types of sampling strategy (with and without replacements). 

\noindent{\bf Sampling with replacement (Binomial Distribution):}

For a given user $u$, let $X_u$ denote the number of sampled items that are ranked in front of relevant item $i_u$: 

\begin{equation*}\label{eq:bern_dist}
    X_u = \sum\limits_{i=1}^{n-1}{X_{ui}}, \quad X_{ui} \sim Bernoulli(b_u = \frac{R_u - 1}{N-1}) 
\end{equation*}

where  $X_{ui}$ is a Bernoulli random variable for each sampled item $i$: $X_{ui}=1$ if item $i$ has rank range in $[1, R_u -1]$ ($b_u$ is the corresponding probability) and $X_{ui}=0$ if $i$ is located in $[R_u + 1, N]$. Thus, $X_u$ follows binomial distribution: 

\begin{equation}\label{eq:binom_dist}
\begin{split}
    \quad X_u \sim Binomial(n-1, b_u=\frac{R_u - 1}{N-1}) 
    \end{split}
\end{equation}

And the random variable $r_u=X_u +1$, and we have 

\begin{equation*}
\begin{split}
    p_u & =CDF(K;n-1,b_u)=Pr(r_u\le K) \\
    & = \begin{cases}
    \sum\limits_{l=0}^{k-1}{    {\binom{n-1}{l}} b_{u}^{l} (1-b_{u})^{n-1-l}    } &, R_u \geq K\\
    1 &, R_u < K
    \end{cases}
    \end{split}
\end{equation*}

\noindent{\bf Sampling without replacement (Hypergeometric Distribution):} 
If we sample $n-1$ items from the total $N-1$ items without replacement, and the total number of successful cases is $R_u-1$, then let $X_u$ be the random variable for the number of items appearing in front of relevant item $i_u$ ($r_u=X_u+1$):

\begin{equation*}
    X^u\sim Hypergeometric(N-1, R_u-1, n-1)
\end{equation*}

 \begin{equation*}
 \begin{split}
    p_u & =CDF(K;N-1, R_u-1, n-1)=Pr(r_u \le K) \\
    & =\begin{cases}
    \sum\limits_{l=0}^{k-1}{\frac{{\binom{R_u-1}{l}}{\binom{N - R_u}{n - 1 - l}} }{{\binom{N-1}{n -1}} }} &, R_u \geq K\\
    1 &, R_u < K
    \end{cases}
    \end{split}
\end{equation*}

It is well-known that, under certain conditions, the hypergeometric distribution can be approximated by a binomial distribution. We will focus on using binomial distribution for analysis, and we will validate the results of hypergeometric distribution experimentally.

\subsection{A Functional View of $T_{Recall@K}$ and $T^S_{Recall@K}$}

To better understand the relationship between $T_{Recall@K}$ (global top-K Recall or equivalently Hit-Ratio in \textit{leave-one-out} setting) and $T^S_{Recall@K}$ (the sampling version), it is beneficial to take a functional view of them. Let $\mathcal{R}$ be the random variable for the user's item rank, with probability mass function $Pr({\mathcal R}=R)$;  then, $T_{Recall@K}$ is simply the empirical cumulative distribution of ${\mathcal R}$ ($\widehat{Pr}$): 

\begin{equation}\label{eq:HR2}
T_{Recall@K} = \widehat{Pr}({\mathcal R} \leq K), \quad \tilde{P}(R) = \widehat{Pr}({\mathcal R}=R)
\end{equation}

For $T^S_{Recall@K}$, its direct meaning is more involved and will be examined below. For now, we note that $T^S_{Recall@K}$ is a function of $K$ varying from $1$ to $n$, where $n-1$ is the number of sampled items. 


\Cref{figure:yelp:correspondence}(a) displays the curves of functional fitting of empirical accumulative distribution $T_{Recall@K}$ (aka the global top-$K$ Hit- Ratio, varying $K$ from $1$ to $N=25815$), for $5$ representative recommendation algorithms ($3$ classical and $2$ deep learning methods), on the $yelp$ dataset. To observe the performance of these methods more closely, we zoom in the range $K$ from $5000$ to $12000$ in \Cref{figure:yelp:correspondence}(c). \Cref{figure:yelp:correspondence}(b) displays the curves of functional fitting of function $T^S_{Recall@K}$ (the sampling top-$k$ Hit-Ratio, varying $k$ from $1$ to $n=100$) with $n-1$ samples, under sampling with replacement, for the same $5$ representative recommendation algorithms on the same dataset. Similarly, we zoom in and highlight $K$ from $20$ to $50$ in \Cref{figure:yelp:correspondence}(d).

How can the sampling Hit-Ratio curves help to reflect what happened in the global curves? Before we consider the more detailed relationship between them, we introduce the following results:

\begin{theorem}[Sampling Theorem]
\label{theorem1}
Let us assume we have two global Recall curves (empirical cumulative distribution), $T^{(1)}_{Recall@K}$ and $T^{(2)}_{Recall@K}$, and assume one curve dominates the other one, i.e., $T^{(1)}_{Recall@K} \geq T^{(2)}_{Recall@K}$ for any $1\leq K \leq N$; then, for their corresponding sampling curve at any $k$ for any size of sampling, we have 
$$\mathbb{E}[T^{S, (1)}_{Recall@K}] \geq \mathbb{E}[T^{S, (2)}_{Recall@K}]$$ 
\end{theorem}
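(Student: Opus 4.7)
The plan is to reduce the claim to a standard first-order stochastic dominance argument combined with Abel summation. From \Cref{eq: E_TS_Recall@K_mean}, we already have the representation
\begin{equation*}
\mathbb{E}[T^S_{Recall@K}] \;=\; \sum_{R=1}^{N} \tilde{P}(R)\,P(R),
\end{equation*}
where $P(R) = \Pr(r_u \le K \mid R_u = R)$ is, under the binomial sampling model of \Cref{eq:binom_dist}, just the binomial CDF $\mathrm{CDF}(K-1;\,n-1,\,(R-1)/(N-1))$. The hypothesis $T^{(1)}_{Recall@K} \ge T^{(2)}_{Recall@K}$ for all $K$ is exactly the statement that the empirical rank distribution $\tilde{P}^{(1)}$ first-order stochastically dominates $\tilde{P}^{(2)}$ from below, i.e.\ the cumulative distribution $F^{(i)}(L):=\sum_{R=1}^L \tilde{P}^{(i)}(R) = T^{(i)}_{Recall@L}$ satisfies $F^{(1)}(L)\ge F^{(2)}(L)$ for every $L$.

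First I would establish the key monotonicity fact: $P(R)$ is non-increasing in $R$. Since $b(R):=(R-1)/(N-1)$ is non-decreasing in $R$, this reduces to the well-known property that the binomial CDF $\mathrm{CDF}(K-1;\,n-1,\,p)$ is non-increasing in $p$ (equivalently, $\mathrm{Binomial}(n-1,p_1)$ is stochastically smaller than $\mathrm{Binomial}(n-1,p_2)$ whenever $p_1\le p_2$, shown by a standard coupling of independent Bernoulli trials). This gives $P(R)-P(R+1)\ge 0$ for $R=1,\dots,N-1$.

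Next I would apply Abel (summation-by-parts) rewriting. Using the telescoping identity $P(R)=P(N)+\sum_{L=R}^{N-1}\bigl(P(L)-P(L+1)\bigr)$ and swapping the order of summation:
\begin{equation*}
\sum_{R=1}^{N}\tilde{P}(R)P(R) \;=\; P(N) \;+\; \sum_{L=1}^{N-1}\bigl(P(L)-P(L+1)\bigr)\,T_{Recall@L}.
\end{equation*}
Subtracting the analogous expressions for the two models, the $P(N)$ terms cancel and we obtain
\begin{equation*}
\mathbb{E}[T^{S,(1)}_{Recall@K}]-\mathbb{E}[T^{S,(2)}_{Recall@K}] \;=\; \sum_{L=1}^{N-1}\bigl(P(L)-P(L+1)\bigr)\,\bigl(T^{(1)}_{Recall@L}-T^{(2)}_{Recall@L}\bigr).
\end{equation*}
Every factor on the right is non-negative by the monotonicity of $P$ and the dominance hypothesis, yielding the desired inequality.

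The hard part is really only the monotonicity of $P(R)$: once it is in hand, the Abel summation step is a mechanical rearrangement and the conclusion is immediate. For the sampling-without-replacement case, an entirely parallel argument works because the hypergeometric CDF is also non-increasing in the number of ``successes'' $R-1$ (again via a coupling argument over urn draws), so the same decomposition and sign-chasing apply. I would therefore write the core argument in the binomial case and note that the without-replacement variant is handled identically.
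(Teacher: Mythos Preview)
Your proposal is correct and takes a genuinely different route from the paper's own proof. The paper proceeds combinatorially: it builds a bipartite graph on the two user populations, connecting $u$ (from curve~1) to $v$ (from curve~2) whenever $R_u \le R_v$, and then invokes Hall's marriage theorem, using the CDF dominance $\sum_{R=1}^K \tilde{P}^{(1)}(R) \ge \sum_{R=1}^K \tilde{P}^{(2)}(R)$ to verify Hall's condition. The resulting perfect matching pairs each user in curve~1 with a user in curve~2 of no smaller global rank, and (implicitly using the same monotonicity of $P(R)$ that you isolate) concludes $\Pr(r_u \le K) \ge \Pr(r_v \le K)$ termwise before summing.

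Your Abel-summation argument is the standard analytic version of this stochastic-dominance fact: if $P(R)$ is non-increasing and $F^{(1)} \ge F^{(2)}$ pointwise, then $\sum_R \tilde{P}^{(1)}(R) P(R) \ge \sum_R \tilde{P}^{(2)}(R) P(R)$. It is shorter, avoids the detour through Hall's theorem, and does not require the two empirical distributions to arise from the same number of users $M$ (the paper's matching argument implicitly uses equal-size user sets). Both proofs ultimately rest on the same lemma---monotonicity of the binomial/hypergeometric CDF in the rank $R$---which you make explicit and the paper leaves unstated. Your treatment of the without-replacement case via the analogous hypergeometric monotonicity is also sound.
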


\begin{proof}
\sloppy Recall \Cref{eq: E_TS_Recall@K_mean}: $\mathbb{E}[T^{S}_{Recall@K}]=\sum_{R=1}^N \tilde{P}(R) \cdot P(R) =\sum_{u=1}^M Pr(r_u \leq K)$.  Let us assign each user $u$ the weight $Pr(r_u \leq K)$ for both curves, $T^{(1)}_{Recall@K}$ and $T^{(2)}_{Recall@K}$. Now, let us build a bipartite graph by connecting any $u$ in the $T^{(1)}_{Recall@K}$ with user $v$ in $T^{(2)}_{Recall@K}$, if $R_u \leq R_v$. We can then apply Hall's marriage theorem to claim there is a one-to-one matching between users in $T^{(1)}_{Recall@K}$ to users in $T^{(2)}_{Recall@K}$, such that $R_u \leq R_v$, and $Pr(r_u \leq K) \geq Pr(r_v \leq K)$.  (To see that, use the fact that $\sum_{R=1}^K \tilde{P}^{(1)}(R) \geq \sum_{R=1}^K \tilde{P}^{(2)}(R)$, where $\tilde{P}^{(1)}(R)$ and $\tilde{P}^{(2)}(R)$ are the empirical probability mass distributions of user-ranks, or equivalently, $\sum_{R=K}^N \tilde{P}^{(1)}(R) \leq \sum_{R=K}^N \tilde{P}^{(2)}(R)$. Thus, any subset in $T^{(1)}_{Recall@K}$ is always smaller than its neighbor set $N(T^{(1)}_{Recall@K})$ in $T^{(2)}_{Recall@K}$.  Given this, we can observe that the theorem holds.  
\end{proof}

The above theorem shows that, under the strict order of global Hit-Ratio curves (though it may be quite applicable for searching/evaluating better recommendation algorithms, such as in Figure~\ref{figure:yelp:correspondence}), sampling hit ratio curves can maintain such order. 

However, this theorem does not explain the stunning similarity, shapes and trends shared by the global and their corresponding sampling curves. Basically, the detailed performance differences among different recommendation algorithms seem to be well-preserved through sampling. 
However, unless $n\approx N$, $T^{S}_{Recall@K}$ does not correspond to $T_{Recall@K}$ (as in what is being studied by Rendle~\cite{rendle2019evaluation}).

Those observations hold on other datasets and recommendation algorithms as well, not only on this dataset. 
Thus, intuitively and through the above experiments, we may conjecture that it is the overall curve $T_{Recall@K}$ that is being approximated by $T^{S}_{Recall@K}$. Since these functions are defined on different domain sizes $N\ vs\ n$, we need to define such approximation carefully and rigorously.

\subsection{Mapping Function $f$}
To explain the similarity between the global and sampling top-$K$ Recall curves, we hypothesize that:

\textbf{there exists a function $f(K)$ such that the relation $T^S_{Recall@K} \approx T_{Recall@f(K)}$ holds for different ranking algorithms on the same dataset}. 

In a way, the sampling metric $T^S_{Recall@K}$ is like ``signal sampling''~\cite{rao2018signals}, where the global metrics between top $1$ to $N$ are sampled (and approximated) at only $f(1)<f(2)<\cdots<f(n)$ locations, which corresponds to $T^S_{Recall@K}$ ($k=1,2, \cdots, n$). In general, $f(k)\neq k$ (when $n << N$) (~\cite{rendle2019evaluation}).

In order to identify such a mapping function, let us take a look at the error between $T^S_{Recall@K}$ and $HR@ f(k)$ : 

\begin{equation}\label{eq:error}
    \begin{split}
    |T^S_{Recall@K} - T_{Recall@f(K)}|
 \le |T^S_{Recall@K} - \mathbb{E}[T^S_{Recall@K}]| + |\mathbb{E}[T^S_{Recall@K}] - T_{Recall@f(K)}|
    \end{split}
\end{equation}

Thanks to the Hoeffding's bound, we observe, 
\begin{small}
\begin{equation*}
    Pr(|T^S_{Recall@K} - \mathbb{E}[T^S_{Recall@K}]|\ge t) \le 2\exp(-2Mt^2)
\end{equation*}
\end{small}
This can be a rather tight bound, due to the large number of users in the population.
For example, if $M = 30K, t = 0.01$:
\begin{small}
\begin{equation*}
    Pr(|T^S_{Recall@K} - \mathbb{E}[T^S_{Recall@K}] \ge 0.01|\le 0.005
\end{equation*}
\end{small}
If we want to look at it more closely, we may use the law of large numbers and utilize the variance in equation ~\ref{eq: E_TS_Recall@K_var} for deducing the difference between $T^S_{Recall@K}$ and its expectation. 
Overall, for a large user population, the sampling top-$k$ Hit-Ratio will be tightly centered around its mean. Furthermore, if the user number is, indeed, small, an average of multiple sampling results can reduce the variance and error. In the publicly available datasets, we found that one set of samples is typically very close to the average of multiple runs. 

Given this, our problem is {\bf how to find the mapping function $f$, such as $|\mathbb{E}[T^S_{Recall@K}] - T_{Recall@f(K)}|$ can be minimized (ideally close to or equal to $0$).} Note that $f$ should work for all the $K$ (from $1$ to $n$), and it should be independent of algorithms on the same dataset.

\subsection{ Approximating Mapping Function $f$} 
\label{subsec: mapping_function}

\noindent{\bf Baseline:}
To start, we may consider the following naive mapping function.
We notice that for any $n$, 
\begin{equation*}
\begin{split}
    \mathbb{E}[X_u] &=(n-1) \cdot b_u= (n-1) \frac{R_u - 1}{N - 1}=\mathbb{E}[r_u]-1\\
    \mathbb{E}[r_u] & = 1 + (n-1) \frac{R_u - 1}{N - 1}
    \end{split}
\end{equation*}

When the sample set size $n$ is large, we simply use the indicator function $\delta(\mathbb{E}[r_u]) \leq K$ to approximate and replace $Pr(r_u \leq K)$. Thus,

\begin{equation}\label{eq:fk1}
\begin{split}
1 + (n-1)\frac{R_u-1}{N-1} &\le K\\
   R_u \leq \frac{K - 1}{n - 1}*(N-1) + 1 &\triangleq f(K)\\
   \end{split}
\end{equation}
To wrap to, this baseline function $f(K)$ enable us that: for any given $T^S_{Recall@K}$, we can obtain a approximation for $T_{Recall@K^\prime}$, where $K^\prime = f(K)$. This guarantees us that we can plot $T^S_{Recall@K}$ curve in the global range (1 to N) as well, which would help us directly observe the relation between $T^S_{Recall@K}$ and $T_{Recall@K}$, see \Cref{fig:unweighted_scalability}.

Since the indicator function, $\delta(\mathbb{E}[r_u] \leq K)$ is a rather crude estimation of the CDF of $r_u$ at $K$, this only serves as a baseline for our approximation of the mapping function $f$. 

\noindent{\bf Approximation Requirements:} Before we introduce more carefully designed approximations of the mapping function $f$, let us take a close look of the expectation of the sampling top-$K$ Recall $\mathbb{E}[T^S_{Recall@K}]$ and $T^S_{Recall@f(K)}$. \Cref{fig:unweighted_scalability} shows how the user empirical probability mass function $\tilde{P}(R)$ works with the step indicator function $\delta(R_u\leq f(K))$, and $b_u$ (assuming a hypergeometric distribution), to generate the global top-$K$ and sampling Recall.

\begin{figure}[h]
    \centering
        \includegraphics[width=0.7\linewidth]{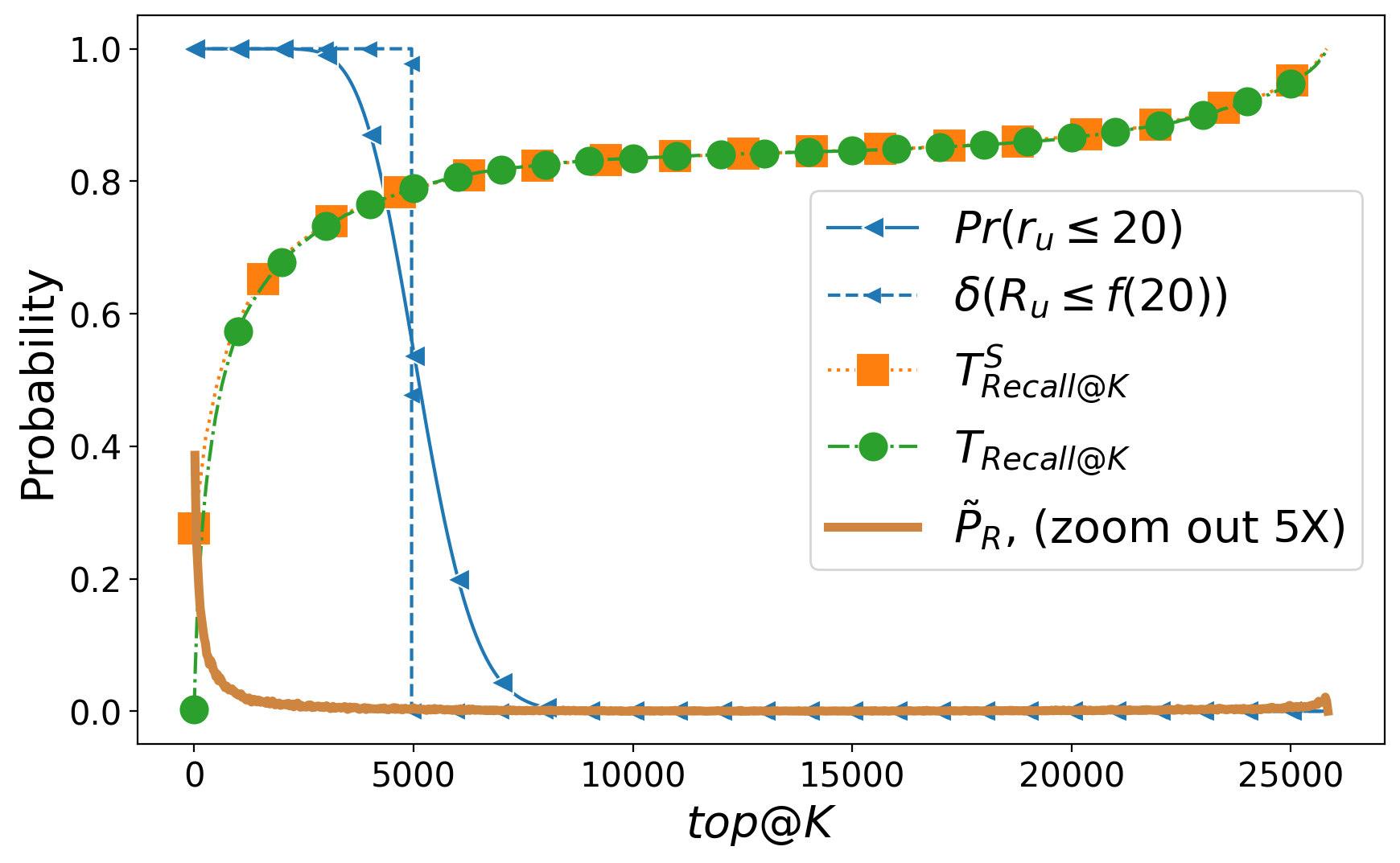}
    \caption{Curve Relationship of model $EASE$ on $yelp$ dataset. $T_{Recall@K}$ is the top-K global Recall curve;  $T^S_{Recall@K}$ is the sampling top-K Recall curve shown in global scale (by baseline \Cref{eq:fk1}); $\tilde{P}(R)(zoom out 5X)$ is the empirical user ranking distribution, where we make it 5 times larger for displaying purpose.}
    \label{fig:unweighted_scalability}
\end{figure}

We make the following observations (as well as requirements):

\begin{itemize}
    \item \textbf{Existence of mapping function $f$ for each individual $T_{Recall@K}$ curve:} Given any $K$, assuming $T_{Recall@f(K)}$ is a continuous cumulative distribution function (i.e., assuming that there is no jump/discontinuity on the CDF, and that $f(K)$ is a real value), then, there is $f(K)$ such that $T_{Recall@f(K)}=\mathbb{E}[T^S_{Recall@K}]$. 
In our problem setting, where $f(K)$ is integer-valued and ranges between $0$ and $N$, the best $f(K)$, theoretically, is 
\begin{equation*}
    f(K)=\arg\limits_f \min |T_{Recall@f(K)}-\mathbb{E}[T^S_{Recall@K}]|
\end{equation*}
\item \textbf{ Mapping function $f$ for different $T_{Recall@f(K)}$ curves:} Since our main purpose is for $T^S_{Recall@K}$ to be comparable across different recommendation algorithms, we prefer $f(K)$ to be the same for different $Recall$ curves (on the same dataset).
Thus, by comparing different $T^S_{Recall@K}$, we can infer  their corresponding Recall $T_{Recall}$ at the same $f(K)$ location. \Cref{figure:yelp:correspondence} and \Cref{fig:unweighted_scalability} show that the sampling $Recall$ curves are comparable with respect to their respective counterparts, and suggests that such a mapping function, indeed, may exist. 
\end{itemize}

But how does these the second requirement coexists with the first requirement of the minimal error of individual curves? We note that, for most of the recommendation algorithms, their overall $Recall$ curve $T_{Recall@K}$, and the empirical probability mass function $\tilde{P}(R)$ (~\Cref{eq: PR_empirical_def}), are actually fairly similar. 
From another viewpoint, if we allow individual curves to have different optimal $f(K)$, the difference (or shift) between them is rather small and does not affect the performance comparison between them, using the sampling curves $T^S_{Recall@K}$. In this section, we will focus on studying dataset-algorithm-independent mapping functions.



\subsection{Boundary Condition Approximation}
Consider that sampling with replacement, for any individual user, $X_u$ from \Cref{eq:binom_dist}, obeys binomial distribution. Apply the general case of bounded variables Hoeffding's inequality: 
\begin{equation*}
    Pr(|X_u - \mathbb{E}[X_u]|\ge t) \le 2e^{-\frac{2t^2}{n-1}}
\end{equation*}

since $r_u = X_u + 1$, and $\mathbb{E}[r_u] = \mathbb{E}[X_u]+1 = (n-1)b_u + 1$  :
 \begin{equation}\label{eq:ineq_lr0}
    \begin{cases}
    &Pr(r_u \ge (n-1)b_u + 1 + t)\le 2e^{-\frac{2t^2}{n-1}} \\
    &Pr(r_u \le(n-1)b_u +1 - t)\le 2e^{-\frac{2t^2}{n-1}}
    \end{cases}
\end{equation}

The above inequalities indicate that $r_u$ is restricted around its expectation within the range defined by $t$. The second term of error in equation \ref{eq:error} can be written as:
\begin{equation}\label{eq:bound1}
    \begin{split}
        \mathbb{E}[T^S_{Recall@K}]-T_{Recall@f(K)}
        = - \sum\limits_{R = 1}^{f(K)}\tilde{P}(R)\cdot Pr(r_R \ge k + 1)+ \sum\limits_{R = f(K)+ 1}^{N}\tilde{P}(R)\cdot Pr(r_R \le K)
    \end{split}
\end{equation}

where, $r_R=r_u$ for $R_u=R$. 
For some relatively large $t$ (compared to $\sqrt{n-1}$), the probability in \Cref{eq:ineq_lr0} can come extremely close to $0$. Based on this fact, if we would like to limit the first term $Pr(r_R\le K+1)$ to approach $0$, $K+1$ must be greater than $(n-1)b_u + 1+t$. And similar to the second term, we have:

 \begin{equation*}
    \begin{cases}
    r_u \geq K+1 \ge (n-1)\frac{R-1}{N-1} + 1 + t), & R = 1,\dots,f^{lower}(K) \\
   r_u \leq K \le(n-1)\frac{R-1}{N-1} +1 - t, & R =f^{upper}(k) +1,\dots,N
    \end{cases}
\end{equation*}

where $f^{lower}(K)$ and $f^{upper}(K)$ are the lower bound and upper bound for $f(K)$, respectively. Explicitly, 

 \begin{equation}\label{eq:bound_lr2}
    f^{lower}(K) \le (k-t)\cdot\frac{N-1}{n-1} +1, \quad 
   f^{upper}(K)\ge (k+t-1)\cdot\frac{N-1}{n-1}
\end{equation}

Given this, let the average of these two for $f$:
\begin{small}
\begin{equation}\label{eq:bound_mid}
     \boxed{f(K)= \lfloor\frac{f^{lower}(K)+f^{upper}(K)}{2} \rfloor=\lfloor (k-\frac{1}{2})\frac{N-1}{n-1}+\frac{1}{2} \rfloor}
\end{equation} 
\end{small}
Note that, although this formula appears similar to our baseline ~\Cref{eq:fk1}, the difference between them is actually pretty big ($\approx \frac{1}{2}\frac{N-1}{n-1}$).  
As we will show in the experimental results, this formula is remarkably effective in reducing the error $|\mathbb{E}[T^S_{Recall@K}]-T_{Recall@f(K)}|$. 

\subsection{Beta Distribution Approximation}

In this approach, we try to directly minimize $\mathbb{E}[T^S_{Recall@K}]-T_{Recall@f(K)}$, and this is equivalent to: 

\begin{equation}\label{eq:lr0}
\begin{split}
&\sum\limits_{R = 1}^{N}{\tilde{P}(R) \cdot \delta(R\le f(K)})= \sum\limits_{R = 1}^{N}{\tilde{P}(R)\cdot Pr(r_R\le K)}
\end{split}
\end{equation}

In order to get a closed-form solution of $f(K)$ from the above equation, we leverage the Beta distribution $Beta(a,1)$ to represent the user ranking distribution $\tilde{P}(R)$, inspired by ~\cite{LiMG@Linguistic}: 
\begin{equation*}
    \tilde{P}(R) = \frac{1}{\mathcal{B}(a, 1)}(\frac{R -1}{N-1})^ {a-1} \frac{1}{N-1}
\end{equation*}
where $a$ is a constant parameter and $\frac{1}{N-1}$ is the constant for discretized Beta distribution. Note that $\frac{R -1}{N-1}$ normalizes the user rank $R_u$ from $[1, N]$, to $[0, 1]$. Especially, when $a<1$, this distribution can represent exponential distribution, which can help provide fit for the $Recall$ distribution. \Cref{fig:beta_user} illustrates the Beta distribution fitting of $\tilde{P}(R)$.


\begin{figure}[h]
    \centering
    \includegraphics[ width=0.7\linewidth]{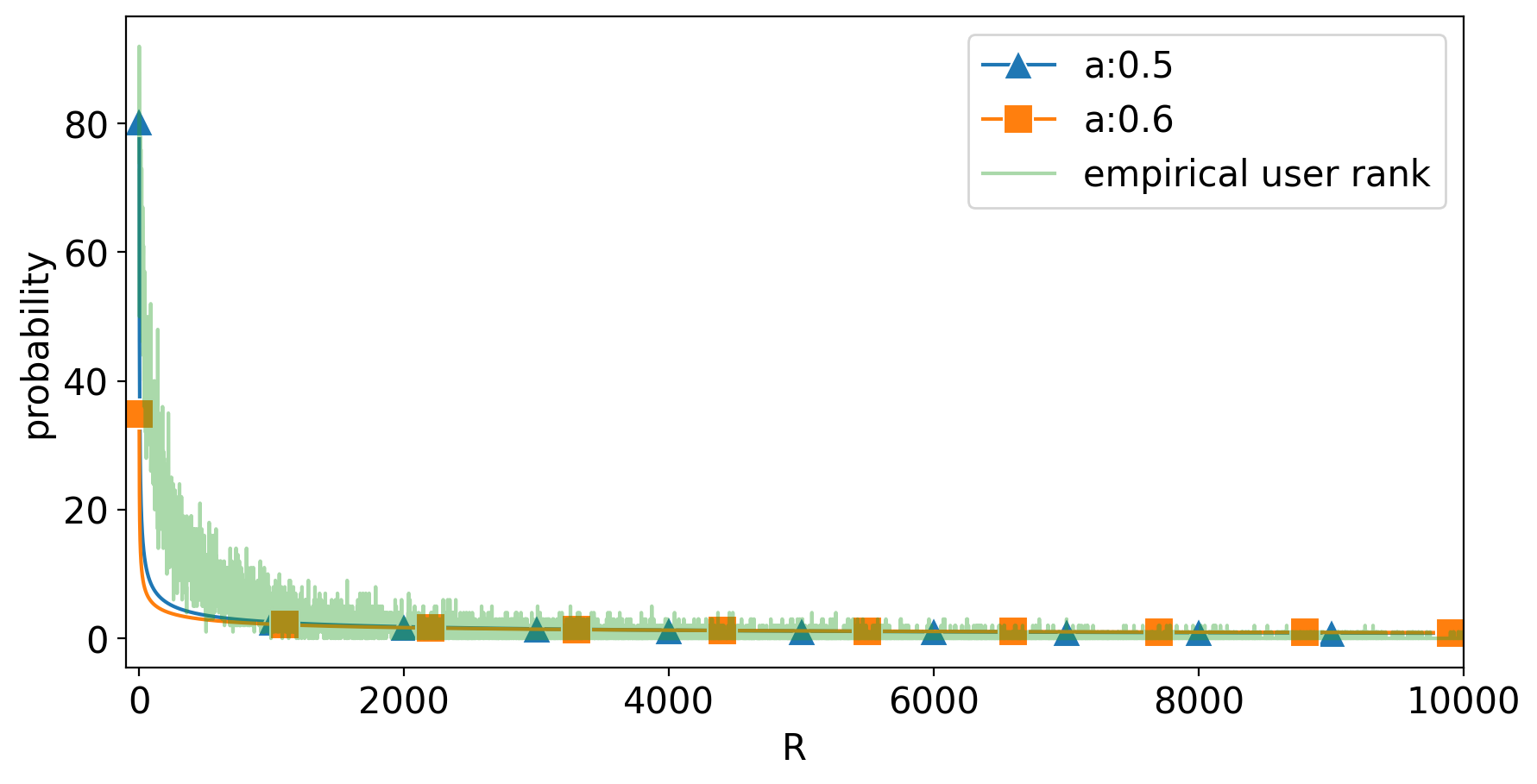}
    \caption{Beta distributions and empirical user rank distribution $\tilde{P}(R)$}
    \label{fig:beta_user}
\end{figure}

Let he left term of \Cref{eq:lr0} is denoted as $\mathcal{L}_k$:
\begin{equation*}\label{eq:left0}
\begin{split}
    \mathcal{L}_k&=  \sum\limits_{R = 1}^{N}{\tilde{P}(R) \cdot {\bf 1}_{R\le f(k)}}  = \sum\limits_{R = 1}^{f(k)}{\tilde{P}(R)}
    =\frac{1}{\mathcal{B}(a,1)}\sum\limits_{R = 1}^{f(k)}{(\frac{R -1}{N-1})^ {a-1}\cdot \frac{1}{N-1}} \\
    & = \frac{1}{\mathcal{B}(a,1)} \sum\limits_{x = 0}^{\frac{f(k) -1}{N -1} } {x^ {a-1}\cdot \Delta x} \quad \text{where}, x = \frac{R - 1}{N - 1}, \text{ and} \Delta x = \frac{1}{N-1}, \\
    &\approx \frac{1}{\mathcal{B}(a,1)}\int_{0}^{\frac{f(k) -1}{N -1} }x^{a-1}dx 
     = \frac{1}{a\mathcal{B}(a,1)}[\frac{f(k) -1}{N - 1}]^a
\end{split}
\end{equation*}

Considering sampling with replacement, then the right term of \Cref{eq:lr0} is denoted as: 

\begin{equation*} \label{eq:right0}
\begin{split}
&\mathcal{R}_k =  \sum\limits_{i = 0}^{k-1}{\binom{n-1}{i}\sum\limits_{R = 1}^N{\tilde{P}(R)(\frac{R-1}{N-1})^i (1-\frac{R-1}{N-1})^{n-i-1}} }
\end{split}
\end{equation*}

Calculate the difference:

\begin{equation*}\label{eq:right1}
\begin{split}
&\mathcal{R}_{k+1} -\mathcal{R}_k =  {\binom{n-1}{k}\sum\limits_{R = 1}^N{\tilde{P}(R)(\frac{R-1}{N-1})^k (1-\frac{R-1}{N-1})^{n-1-k}} }\\
&\approx \binom{n-1}{k}\frac{1}{\mathcal{B}(a, 1)}\int_{x = 0}^{1} { x^{a+k-1} (1-x)^{n-1-k} dx} \\ 
& = \binom{n-1}{k}\frac{1}{\mathcal{B}(a, 1)}\mathcal{B}(a+k, n-k)
 =\frac{1}{\mathcal{B}(a, 1)} \frac{\Gamma(n)}{\Gamma(n+a)}\frac{\Gamma(k+a)}{\Gamma(k+1)}
\end{split}
\end{equation*}

Based on above equations: $\mathcal{L}_{k+1}-\mathcal{L}_{k}\approx \mathcal{R}_{k+1}-\mathcal{R}_{k}$, we have 
(we denote the mapping function as $f(k;a)$ for parameter $a$).

\begin{equation}\label{eq:l2r}
 \begin{split}
        &[f(k+1;a)-1]^a-[f(k;a)-1]^a \\
        &=a[N-1]^a\binom{n-1}{k}\mathcal{B}(a+k, n-k)
    \end{split}
\end{equation}

Then we have the following recurrent formula:
\begin{equation}\label{eq:fk}
\boxed{
\begin{split}
    f(k+1;a)=&\Big[ a [N-1]^a \binom{n-1}{k}\mathcal{B}(a+k, n-k)\\
    &+ [f(k;a)-1]^a\Big]^{1/a} + 1
    \end{split}
}
\end{equation}
And $f(1)$ is given by $\mathcal{L}_1 = \mathcal{R}_1$:
\begin{equation}\label{eq:f1}
f(1;a) = (N-1)[a\mathcal{B}(a,n)]^{1/a} + 1 
\end{equation}

In \cref{app_subsec:property}, we enumerate a list of interesting properties of this recurrent formula of $f$ based on Beta distribution.

\begin{figure}
\centering
\begin{minipage}{.45\textwidth}
\centering
 \includegraphics[width =0.9\linewidth]{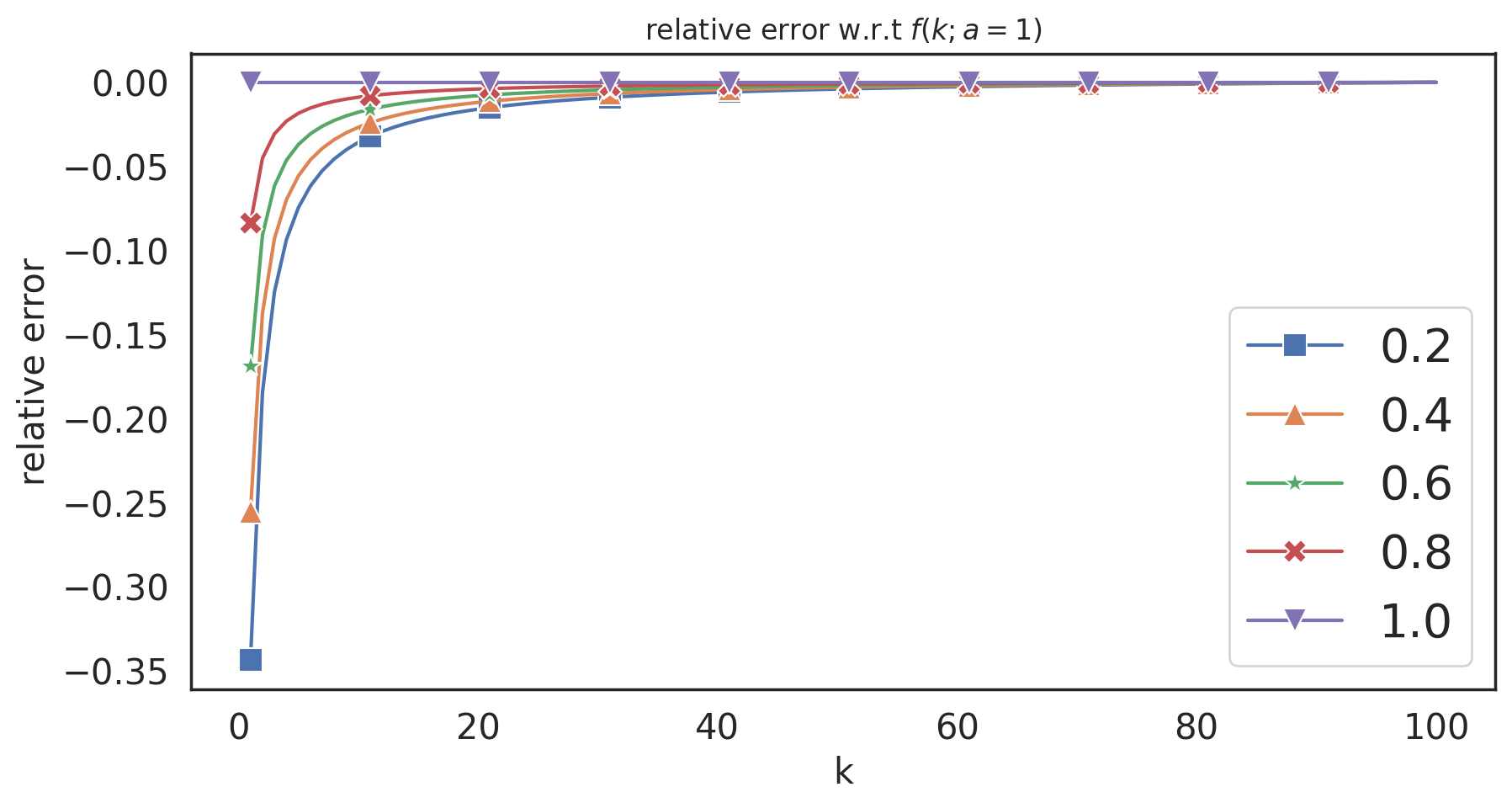}
    \captionof{figure}{Relative error w.r.t. $f(k;a=1)$. example on yelp dataset, n= 100.}
    \label{fig:rela_a1}
\end{minipage}%
\hfill
\begin{minipage}{.55\textwidth}
  \centering
  \includegraphics[width=0.95\linewidth]{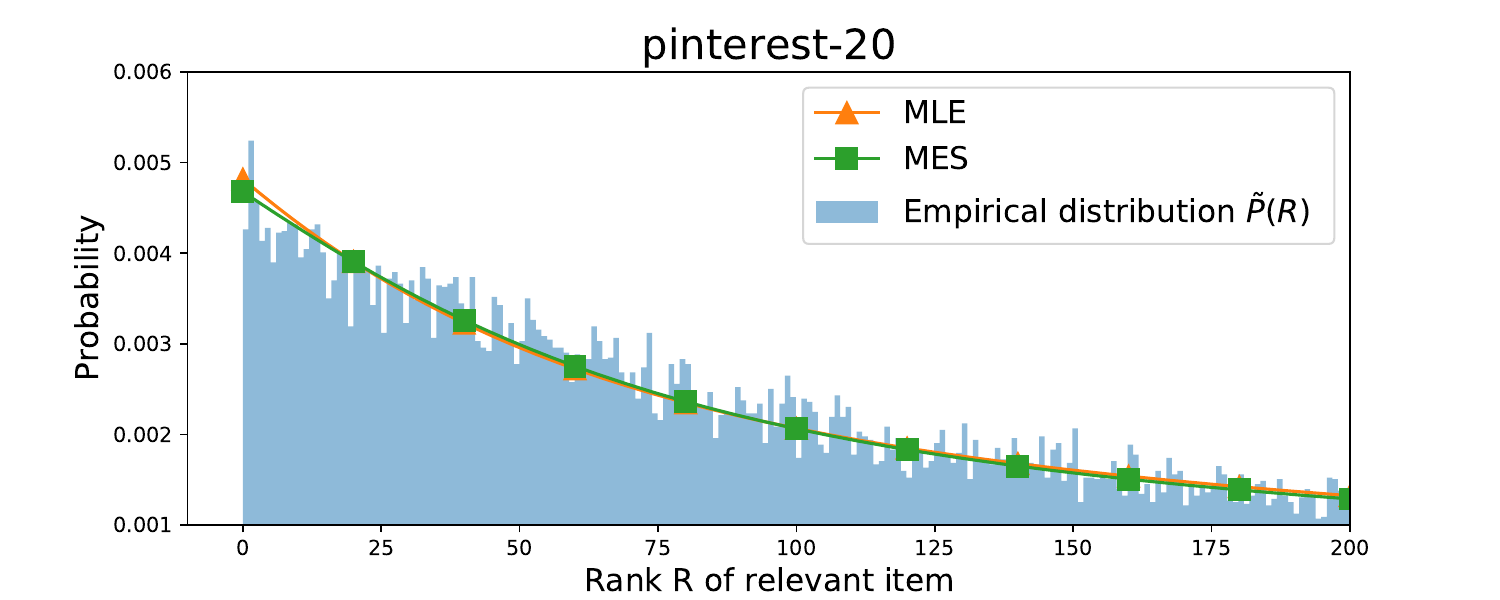}
  \captionof{figure}{Learning Empirical Rank Distribution $P(R)$}
 \label{fig:P(R)}
\end{minipage}
\end{figure}

\section{Top-K Metric Estimation Foundation via Global Rank Distribution Learning}\label{sec:fix_estimator_learning_rank}

\subsection{Learning Empirical Rank Distribution Problem}\label{subsec:learning_empirical}

In this section, our new proposed approach is based on the following observation: 

\begin{equation}\label{eq:metric_observe}
T=\frac{1}{M}\sum_{u=1}^M \mathcal{F}(R_u)=\sum_{R=1}^K \tilde{P}(R) \cdot \mathcal{F}(R) 
\end{equation}
Again, $T$ is any metric to quantify the quality of a recommendation model and $\mathcal{F}$ is the corresponding specific metric function that we discussed in \Cref{subsec:evaluation_metrics}.

Thus, if we can estimate $\widehat{P}(R)\approx P(R) \approx \tilde{P}(R)$, then we can derive the metric estimator as 

\begin{equation}\label{eq:metric_from_rank}
\widehat{T}=\sum_{R=1}^K \widehat{P}(R) \cdot \mathcal{F}(R)
\end{equation}

Given this, we introduce the problem of learning and estimating the empirical rank distribution $\{\tilde{P}(R)\}_{R=1}^N$ based on sampling $\{r_u\}_{r=1}^M$. To our best knowledge, this  problem has not been formally and explicitly studied before for sampling-based recommendation evaluation. 

We note that the importance of the problem is two-fold. 
On one side, the learned empirical rank distributions can directly provide estimators for any metric $T$; on the other side, since this question is closely related to the underlying mechanism of sampling for recommendation, tackling it can help better understand the power of sampling and help resolve the questions as to if and how we should use sampling for evaluating recommendation. 

Furthermore, since metric $T$ is the linear function of $\{\tilde{P}(R)\}_{R=1}^K$, the statistical properties of estimator $\widehat{P}(R)$ can be nicely preserved by $\widehat{T}$~\cite{theorypoint}. In addition, this approach can be considered as metric-independent: We only need to estimate the empirical rank distribution $\tilde{P}(R)$ once; then we can utilize it for estimating all the top-$K$ evaluation metrics $T$ (including for different $K$) based on \Cref{eq:metric_from_rank}. 

Finally, we note that we can utilize the $BV$ estimator \citep{krichene@kdd20} to estimate $P(R)$ as follows: Let $\widehat{\mathcal{F}}_{BV}(R)$ be the Recall metric function estimator from \citep{krichene@kdd20}. Then we have 

\begin{equation} \label{eq:BV_recall_pr}
\begin{split}
   \widehat{P}(R) &=\widehat{\mathcal{F}}_{BV}(R)-\widehat{\mathcal{F}}_{BV}(R-1) \\
   &= (\tilde{P}(r))_{r=1}^n \Big((1.0-\gamma)A^TA+\gamma \text{diag}(\pmb{c})  \Big)^{-1}A^T\cdot\pmb{b}_R
\end{split}
\end{equation}

where $\widehat{\mathcal{F}}_{BV}(R)$ is the $BV$ estimator for the $Recall@R$ metric function, $(\tilde{P}(r))_{r=1}^n$ is the row vector of empirical rank distribution over the sampling data, and $\pmb{b}_R$ has the $R$-th element as $b_R$ (\cref{eq:BV_param}) and other elements as $0$. 
We consider this as our baseline for learning the empirical rank distribution. 

In this section, we will introduce a list of estimators for the empirical rank distribution $\{P(R)\}_{R=1}^N$ based on sampling ranked data: $\{r_u\}_{r=1}^M$. 
\Cref{fig:P(R)} illustrates the different approaches of learning the empirical rank distribution $P(R)$, including the Maximal Likelihood Estimation (MLE), and the Maximal-Entropy-based approach (MES),  for $R \leq 200$ on \textit{pinterest-20}.


\subsection{Sampling Rank Distribution: Mixtures of Binomial Distributions}
Let us consider the sampling with replacement scheme (the results can be extended to sampling without replacement). Now, assume an item $i$ is ranked $R$ in the entire set of items $I$. Then there are $R-1$ items whose rank is higher than item $i$ and $N-R$ items whose rank is lower than $i$. Under the (uniform) sampling (sampling with replacement), we have $\theta \coloneqq \frac{R-1}{N-1}$ probability to pick up an item with higher rank than $R$. 
Let $x$ be the number of irrelevant items ranked in front of the relevant one, and $x=r-1$. Thus, 
the rank $r-1$ under sampling follows a binomial distribution: $r-1 \sim Binomial(n-1, \theta)$, and the conditional rank distribution $P(r|R)$ is
\begin{equation}
    \begin{split}
        &P(r|R)=Binomial(r-1;n-1,\theta)= \binom{n-1}{r-1}\theta^{r-1}(1-\theta)^{n-r} 
    \end{split}
\end{equation}

Given this, an interesting observation is that the sampling ranked data $\{r_u\}_{r=1}^M$ can be directly modeled as a mixture of binomial distributions. 
Let $\pmb{\Theta}=(\theta_1\dots,\theta_R,\dots,\theta_N)^T$ where 
\begin{equation}
    \theta_R \coloneqq \frac{R-1}{N-1},\quad R = 1, \dots, N
\end{equation}
Let the empirical rank distribution $\pmb{\tilde{P}}=\{\tilde{P}(R)\}_{R=1}^N$, then the sampling rank follows the distribution $P(r|\pmb{\tilde{P}})=$

\begin{equation}
\begin{split}
&\sum_{R=1}^N P(r|R)\cdot P(R)=\sum_{R=1}^N Bin(r-1; n-1, \theta_R) \cdot P(R)  \\ 
=&\sum_{R=1}^N  P(R) \binom{n-1}{r-1} \big(\frac{R-1}{N-1}\big)^{r-1}\big(1-\frac{R-1}{N-1}\big)^{n-r} 
\end{split}
\end{equation}

Thus, $P(R)$ can be considered as the parameters for the mixture of binomial distributions.

\subsection{Maximum Likelihood Estimation (MLE)}

The basic approach to learn the parameters of the mixture of binomial distributions ($MB$) given $\{r_u\}_{u=1}^M$ is based on maximal likelihood estimation (MLE). 
Let $\pmb{\Pi} = (\pi_1,\dots,\pi_R,\dots,\pi_N)^T$ be the parameters of the mixture of binomial distributions. Then we have $p(r_u|\pmb{\Pi})=\sum_{R=1}^N \pi_R \cdotp(r_u|\theta_R)$, where 
$p(r_u|\theta_R)=Binomial(r_u-1; n-1, \theta_R)$. 

Then MLE aims to find the particular $\pmb{\Pi}$, which maximizes the log-likelihood: 

\begin{equation}\label{eq:mb_mle}
    \log\mathcal{L} = \sum\limits_{u=1}^{M}{\log{p(r_u|\pmb{\Pi})}} = \sum\limits_{u=1}^{M}{ \log{\sum_{R=1}^N \pi_R p(r_u|\theta_R)}}
\end{equation}

By leveraging EM algorithm (details see \Cref{app_sec:fix-em}):
\begin{equation}\label{eq:em_pi_new}
    \pi^{new}_R = \frac{1}{M} \sum\limits_{u=1}^M \frac{{\pi}^{old}_R p(r_u|\theta_R)}{\sum\limits_{j=1}^{N}{ \pi^{old}_j p(r_u|\theta_j)}}
\end{equation} 

When \Cref{eq:em_pi_new} converges, we obtain $\pmb{\Pi}^*$ and use it to estimate $\pmb{P}$, i.e., $\widehat{P}(R)=\pi^*_R$. 
Then, we can use $\widehat{P}(R)$ in \Cref{eq:metric_from_rank} to estimate the desired metric $T$ in \Cref{eq:metric_observe}. 

\subsubsection{Speedup and Time Complexity}

To speedup the computation, we can further rewrite the updated formula \Cref{eq:em_pi_new} as 
\begin{equation}\label{eq:em_pi_speedup}
    \pi^{new}_R = \sum\limits_{r=1}^n \tilde{P}(r) \frac{{\pi}^{old}_R \cdot p(r|\theta_R)}{\sum\limits_{j=1}^{N}{\pi^{old}_j\cdot p(r|\theta_j)}}
\end{equation} 
where $\tilde{P}(r)=\frac{1}{M} \sum_{u=1}^M {\delta(r_u=r)}$ is the empirical rank distribution on the sampling data. 
Thus the time complexity improves to $O(kNn)$ (from $O(kNM)$ using \Cref{eq:em_pi_new}), where $k$ is the iteration number. This is faster than the least squares solver for the $BV$ estimator (\Cref{eq:leastsquare}) ~\cite{krichene@kdd20}, which is at least $O(n^2N)$.
Furthermore, we note $\widehat{P}(R)$ can be used for any metric $T$ for the same algorithm, whereas $BV$ estimator has to be performed for each metric $T$ separately.

\subsubsection{Weighted MLE}
If we are particularly interested in $\pi_R$ ( also known as $P(R)$) when $R$ is very small (such as $R<10$), then we can utilize the weighted MLE to provide more focus on those ranks. This is done by putting more weight on the sampling rank observation $r_u$ when $r_u$ is small. Specifically, the weighted MLE aims to find the $\pmb{\Pi}$, which maximizes the weighted log-likelihood: 

\begin{small}
\begin{equation}\label{eq:mb_mle}
    \log\mathcal{L} =  \sum\limits_{u=1}^{M}{w(r_u)\cdot \log{p(r_u|\pmb{\Pi})}} = \sum\limits_{u=1}^{M}{w(r_u) \cdot \log{\sum_{R=1}^N \pi_R \cdotp(r_u|\theta_R)}}
\end{equation}
\end{small}
where $w(r_u)$ is the weight for rank $r_u$. Note that the typical MLE (without weight) is the special case of \Cref{eq:mb_mle} ($w(r_u)=1$).

For weighted MLE, its updated formula is
\begin{equation}\label{eq:em_pi_speedup}
    \pi^{new}_R = \sum\limits_{r=1}^n  \frac{\tilde{P}(r) w(r)}{\sum_{r=1}^n \tilde{P}(r) w(r)}  \frac{{\pi}^{old}_R p(r|\theta_R)}{\sum\limits_{j=1}^{N}{\pi^{old}_j p(r|\theta_j)}}
\end{equation} 

For the weight $w(r)$, we can utilize any decay function (as $r$ becomes bigger, than $w(r)$ will reduce). We have experimented with various decay functions and found that the important/metric functions, such as $AP$ and $NDCG$,  $w(r)=\mathcal{F}_{AP}(r/C)$ and  $w(r)=\mathcal{F}_{NDCG}(r/C)$ ($C>1$ is a constant to help reduce the decade rate), obtain good and competitive results. We will provide their results in the experimental evaluation section.

\subsection{Maximal Entropy with Minimal Distribution Bias (ME)}

Another commonly used approach for estimating a (discrete) probability distribution is based on the principal of maximal entropy~\cite{elementsinfo}. 
Assume a random variable $x$ takes values in $(x_1, x_2, \cdots, x_n)$ with pmf: $p(x_1), p(x_2), \cdots, p(x_n)$. Typically, given a list of (linear) constraints in the form of $\sum_{i=1}^n p(x_i) f_k(x_i) \geq F_k$ ($k=1, \cdots m$), together with the equality constraint ($\sum_{i=1}^n p(x_i)=1$), it aims to maximize its entropy: 
\begin{equation}
    H(p)=-\sum_{i=1}^{n} p(x_i) \log p(x_i)
\end{equation}

In our problem, let the random variable $\mathcal{R}$ take on rank from $1$ to $N$. Assume its pmf is  $\pmb{\Pi}=(\pi_1,\dots,\pi_R,\dots,\pi_N)$, and the only immediate inequality constraint is $\pi_R \geq 0$ besides $\sum_{R=1}^N \pi_R=1$. 
Now, to further constrain $\pmb{\pi}$, we need to consider how they reflect and manifest on the observation data $\{r_u\}^M_{u=1}$. The natural solution is to simply utilize the (log) likelihood. However, combining them together leads to a rather complex non-convex optimization problem which will complicate the EM-solver. 

In this section, we introduce a method (to constrain the maximal entropy) that utilizes the squared distance between the learned rank probability (based on $\pmb{\Pi}$) and the empirical rank probability in the sampling data

\begin{equation}\label{eq:entropy}
\begin{split}
    \mathcal{E}&= \frac{1}{M} \sum\limits_{R = 1}^{M}{\Big(p(r_u|\pmb{\Pi}) - \tilde{P}(r_u)  \Big)^2 } \\
    &=\sum\limits_{r = 1}^{n}{\tilde{P}(r)\Big(\sum\limits_{R = 1}^{N}{P(r|R)\pi_R} - \tilde{P}(r)  \Big)^2 }
    \end{split}
\end{equation}
Again, $\tilde{P}(r)$ is the empirical rank distribution in the sampling data. 
Note that $\mathcal{E}$ can be considered to be derived from the  log-likelihood of independent Gaussian distributions if we assume the error term $p(r_u|\pmb{\Pi}) - \tilde{P}(r_u)$ follows the Gaussian distribution.  

Given this, we seek to solve the following optimization problem: 
\begin{equation} \label{eq:MEE}
     \pmb{\Pi}=\arg\max_{\pmb{\Pi}} \eta\cdot H(\pmb{\pi})-\mathcal{E}  
\end{equation}
where $\eta$ is hyperparameter and with constraints: 
\begin{equation*}
    \pi_R \ge 0 \ (1 \leq R \leq N), \quad \sum_R{\pi_R}=1
\end{equation*}
Note that this objective can also be considered as adding an entropy regularizer for the log-likelihood. The objective function: $\eta\cdot H(\pmb{\pi})-\mathcal{E}$ is concave (or its negative is convex). This can be easily observed as both negative of entropy and sum of squared errors are convex function. Given this, we can employ available convex optimization solvers~\cite{CVX} to identify the optimization solution. 
Thus, we have the estimator  $\widehat{P}(R)=\pi^*_R$, where $\Pi^*$ is the optimal solution for  \Cref{eq:MEE}.

\section{Top-K Metric Estimation Foundation via Optimized Bias-Variance Estimator}\label{sec:fix_estimator_learning_metric}

In this section, we introduce a new estimator which aims to directly minimize the expected errors between the item sampling-based top-$K$ metrics and the global top-$K$ metrics. Here, we consider a similar strategy as ~\cite{krichene@kdd20} though our objective function is different and aims to explicitly minimize the expected error. 
We aim to search for a {\em sampled metric} $\widehat{\mathcal F}(r)$  to approach $\widehat{T}\approx T$: 

\begin{equation*}
\begin{split}
    \widehat{T}&=\sum_{r=1}^n \tilde{P}(r) \cdot \widehat{\mathcal F}(r) = \frac{1}{M}\sum_{u=1}^M \widehat{\mathcal F}(r_u) \\ &\approx  \frac{1}{M}\sum_{u=1}^M {\mathcal F}(R_u) = \sum_{R=1}^N P(R)\cdot {\mathcal F}(R) = T 
    \end{split}
\end{equation*}

where $\tilde{P}(r) = \frac{1}{M}\sum\limits_{r=1}^M\delta(r_u=r)$ is the empirical sampled rank distribution
and $\widehat{\mathcal F}(r)$ is the adjusted discrete metric function. An immediate observation is: 
\begin{equation}\label{eq:t_et}
    \begin{split}
        \mathbb{E} [\widehat{T}]&=\sum_{r=1}^n \mathbb{E}[\tilde{P}(r)]\cdot \widehat{\mathcal F}(r) = \sum_{r=1}^n {P}(r) \cdot  \widehat{\mathcal F}(r)
    \end{split}
\end{equation}

Following the classical statistical inference~\cite{casella2002statistical}, the optimality of an estimator is measured by Mean Squared Error:

\begin{small}
\begin{equation}\label{eqn:last_1}
    \begin{split}
    &\mathbb{E}[T-\sum_{R=1}^N P(R) \cdot \mathcal{F}(R)]^2 
 =\mathbb{E}[\mathbb{E}[T]-\sum_{R=1}^N P(R) \cdot \mathcal{F}(R) + T - \mathbb{E}[T]]^2 \\
 =&\Big(\mathbb{E}[T]- \sum_{R=1}^N P(R) \cdot\mathcal{F}(R)\Big)^2 + \mathbb{E}[T-\mathbb{E}[T]]^2  \\
=&\Big(\sum_{r=1}^n P(r) \cdot \widehat{\mathcal F}(r)-\sum_{R=1}^N P(R) \cdot \mathcal{F}(R)\Big)^2 
 +\mathbb{E}[\sum_{r=1}^n \tilde{P}(r) \cdot \widehat{\mathcal F}(r)-\sum_{r=1}^n P(r) \cdot\widehat{\mathcal F}(r)]^2 \\
  =&  \Big(\sum_{r=1}^n \sum_{R=1}^N P(r|R)\cdot P(R) \widehat{\mathcal F}(r)-\sum_{R=1}^N P(R)\cdot \mathcal{F}(R)\Big)^2 \\
  &+ \mathbb{E}[\sum_{r=1}^n \sum_{R=1}^N \tilde{P}(r|R)\cdot P(R) \widehat{\mathcal F}(r)- \sum_{r=1}^n \sum_{R=1}^N  P(r|R)\cdot P(R) \widehat{\mathcal F}(r)]^2 
    \end{split}
\end{equation}
\end{small}

Remark that $\tilde{P}(r|R)$ is the empirical conditional sampling rank distribution given a global rank $R$. 
We next use Jensen's inequality to bound the first term in \Cref{eqn:last_1}. Specifically, we may treat $\sum_{r=1}^n P(r|R) \cdot \widehat{\mathcal F}(r)- \mathcal{F}(R)$ as a random variable and use $(\mathbb{E} X)^2 \leq \mathbb{E} X^2$ to obtain

\begin{equation*}
    \begin{split}
         &\Big(\sum_{r=1}^n \sum_{R=1}^N P(r|R)\cdot P(R) \cdot\widehat{\mathcal F}(r)-\sum_{R=1}^N P(R) \cdot\mathcal{F}(R)\Big)^2   \\
         &\leq 
    \sum_{R=1}^N P(R) \Big(\sum_{r=1}^n P(r|R) \cdot\widehat{\mathcal F}(r)- \mathcal{F}(R)\Big)^2
    \end{split}
\end{equation*}

Therefore, we have

\begin{equation*}
    \begin{split}
         \mathbb{E}[\widehat{T}-\sum_{R=1}^N P(R) \mathcal{F}(R)]^2
    &\leq  \underbrace{\sum_{R=1}^N P(R) \Big\{ \Big(\sum_{r=1}^n P(r|R)  \hat{\mathcal{F}}(r)- \mathcal{F}(R)\Big)^2}_{\mathcal L_1}\\
    &
    + \underbrace{\mathbb{E}[\sum_{r=1}^n \tilde{P}(r|R) \cdot\widehat{\mathcal F}(r)- \sum_{r=1}^n  P(r|R) \cdot\widehat{\mathcal F}(r)]^2 \Big\}}_{\mathcal L_2}.
    \end{split}
\end{equation*}

Let $\mathcal L = \mathcal L_1 + \mathcal L_2$, which gives an upper bound on the expected MSE. Therefore, our goal is to find $\widehat{\mathcal{F}}(r)$ to minimize $\mathcal L$. We remark that a seemingly innocent application of Jensen's inequality results in an optimization objective that possesses a range of interesting properties:

\noindent{\textbf{1. Statistical structure.}} The objective has a variance-bias trade-off interpretation, i.e., 
\begin{equation} \label{eq:L1}
\mathcal{L}_1=\sum\limits_{R=1}^N{P(R)\Big({\mathbb{E}(\widehat{\mathcal F}(r)|R) -\mathcal{F}(R)}\Big)^2} 
\end{equation}
\begin{equation}\label{eq:L2}
\mathcal{L}_2 
 = \sum\limits_{R=1}^{N}{\frac{1}{M} Var[\widehat{\mathcal F}(r)|R]}
\end{equation}

where $\mathcal L_1$ can be interpreted as a bias term and $\mathcal L_2$ can be interpreted as a variance term. Note that while \cite{krichene@kdd20} also introduces a variance-bias trade-off objective, their objective is constructed from heuristics and contains a hyper-parameter (that determines the relative weight between bias and variance) that needs to be tuned in an ad-hoc manner. Here, because our objective is constructed from direct optimization of the MSE, it is more principled and also removes dependencies on hyperparameters. See ~\Cref{sec:analysisl2} for proving ~\Cref{eq:L2} (~\Cref{eq:L1} is trivial) and ~\Cref{sec:closedform} for more comparison against estimators proposed in~\cite{krichene@kdd20}.

\noindent{\textbf{2. Algorithmic structure.}} while the objective is not convex, we show that the objective can be expressed in a compact manner using matrices and we can find the optimal solution in a fairly straightforward manner. In other words, Jensen's inequality substantially simplifies the computation at the cost of having a looser upper bound. See ~\Cref{sec:closedform}. 

\noindent{\textbf{3. Practical performance.}} Our experiments also confirm that the new estimator is effective (~\cref{sec:sampling-evaluation-experiement}), which suggests that Jensen's inequality makes only inconsequential and moderate performance impact to the estimator's quality.

\subsection{Analysis of ${\mathcal L}_2$}\label{sec:analysisl2}
To analyze ${\mathcal L}_2$, let us take a close look of $\tilde{P}(r|R)$. Formally, let $X_r$ be the random variable representing the number of items at rank $r$ in the item-sampling data whose original rank in the entire item set is $R$. Then, we rewrite  $\tilde{P}(r|R) = \frac{X_r}{M\cdot P(R)}$. Furthermore, it is easy to observe $(X_1, \cdots X_n)$ follows the multinomial distribution $Multi(P(1|R), \cdots,P(n|R))$. We also have:   
\begin{equation}
    \begin{split}        
        \mathbb{E}[X_r]&=M\cdot P(R)\cdot P(r|R) \\ 
        Var[X_r] &= M\cdot P(R)\cdot  P(r|R)(1-P(r|R))
    \end{split}
\end{equation}
Next, let us define a new random variable $\mathcal{B} \triangleq \sum\limits_r^{n}{\widehat{\mathcal F}(r) X_r}$, which is the weighted sum of random variables under a multinomial distribution. According to \Cref{eq:A_equation}, its variance is give by:
\begin{equation*}
    \begin{split}
        &Var[\mathcal{B}] =  \mathbb{E}[\sum_{r=1}^n {X_r}\widehat{\mathcal F}(r)- \sum_{r=1}^n  {\mathbb{E}X_r]} \widehat{\mathcal F}(r)]^2\\
        &=M\cdot P(R) \Big(\sum\limits_{r}^n{\widehat{\mathcal F}^2(r) P(r|R)}-\big(\sum\limits_{r}^n{\widehat{\mathcal F}(r) P(r|R)}\big)^2 \Big) 
    \end{split}
\end{equation*}
${\mathcal L}_2$ can be re-written (see \cref{sec:l2_re}) as:
\begin{equation*}
    \mathcal{L}_2 = \sum\limits_{R=1}^{N}{\frac{1}{M} Var[\widehat{\mathcal F}(r)|R]}
\end{equation*}

\subsection{Closed Form Solution and its Relationship to Bias-Variance Estimator}\label{sec:closedform}
We can rewrite $\mathcal L$ as a matrix format and optimizing it corresponding to a constraint least square optimization:
\begin{equation}\label{eq:new_loss}
\begin{split}
\mathcal{L}_1&=\sum\limits_{R=1}^N{P(R)\Big(\sum\limits_{r=1}^n{P(r|R) \widehat{\mathcal F}(r)-\mathcal{F}(R)}\Big)^2} = ||\sqrt{D}A\mathbf{x}-\sqrt{D}\mathbf{b}||^2_F\\
\mathcal{L}_2&=\sum\limits_{R=1}^N{P(R)\cdot \mathbb{E}[\sum_{r=1}^n \tilde{P}(r|R) \widehat{\mathcal F}(r)- \sum_{r=1}^n  P(r|R) \widehat{\mathcal F}(r)]^2 }\\
&=\sum\limits_{R=1}^N{P(R)\cdot \mathbb{E}[\sum_{r=1}^n \frac{X_r}{M\cdot P(R)} \widehat{\mathcal F}(r)- \sum_{r=1}^n  \frac{\mathbb{E}[X_r]}{M\cdot P(R)} \widehat{\mathcal F}(r)]^2 }\\
        &=\sum\limits_{R=1}^N{\frac{1}{M^2\cdot P(R)}\cdot \mathbb{E}[\sum_{r=1}^n {X_r}\widehat{\mathcal F}(r)- \sum_{r=1}^n  {\mathbb{E}[X_r]} \widehat{\mathcal F}(r)]^2 }\\
 &= \sum\limits_{R=1}^{N}{\frac{1}{M}\cdot \Big(\sum\limits_{r}^n{\widehat{\mathcal F}^2(r) P(r|R)}-\big(\sum\limits_{r}^n{\widehat{\mathcal F}(r)P(r|R)}\big)^2 \Big) }\\
        &=\frac{1}{M}\mathbf{x}^T\Lambda_1 \mathbf{x} - \frac{1}{M}||A\mathbf{x}||^2_F=\frac{1}{M}||\sqrt{\Lambda_1} \mathbf{x}||^2_F - \frac{1}{M}||Ax||^2_F\\
        \mathcal{L} &= ||\sqrt{D}A\mathbf{x}-\sqrt{D}\mathbf{b}||^2_F+\frac{1}{M}||\sqrt{\Lambda}_1\mathbf{x}||^2_F -\frac{1}{M} ||A\mathbf{x}||^2_F
    \end{split}
\end{equation}
and its solution:
\begin{equation}\label{eq:mn_closed}
    \begin{split}
        \mathbf{x}=\Big(A^TDA-\frac{1}{M}A^TA+\frac{1}{M}\Lambda_1\Big)^{-1}A^TD\mathbf{b}
    \end{split}
\end{equation}
where $M$ is the number of users and $\mathcal{M}$ is metric function, $diagM(\cdot)$ is a diagonal matrix:
\begin{small}
\begin{equation*}\label{eq:new_parameters}
\begin{split}
\mathbf{x} &= \begin{bmatrix}
        \widehat{\mathcal F}(r=1)\\
        \vdots\\
        \widehat{\mathcal F}(r=n)
        \end{bmatrix}\quad \mathbf{b}=\begin{bmatrix}
        \mathcal{F}(R=1)\\
        \vdots\\
        \mathcal{F}(R=N)
        \end{bmatrix}\in\mathbb{R}^N\\
        A_{R,r} &= P(r|R)\in \mathbb{R}^{N\times n}\quad
        D = diagM\big(P(R)\big)\in \mathbb{R}^{N\times N}\\
        \Lambda_1 &= diagM\big(\sum\limits_{R=1}^{N}P(r|R)\big)\in \mathbb{R}^{n\times n}\\
\end{split}
\end{equation*}
\end{small}

\noindent{\bf Relationship to the BV Estimator:}
The bias-variance trade-off is given by \cite{krichene@kdd20}:
\begin{small}
\begin{equation*}\label{eq:eq_bv}
\begin{split}
       \mathcal{L}_{BV}= \underbrace{\sum\limits_{R=1}^{N} P(R) (\mathbb{E}[\widehat{\mathcal F}(r)|R]-\mathcal{F}(R))^2}_{{\mathcal L}_1} + \underbrace{\sum\limits_{R=1}^{N} \gamma \cdot P(R) \cdot Var[\widehat{\mathcal F}(r)|R]}_{{\mathcal L}_2}
       \end{split}
\end{equation*}
\end{small}

We observe the main difference between the BV and our new estimator is on the ${\mathcal L}_2$ components (variance components): for our estimator, each $Var[\widehat{\mathcal F}(r)|R]$ is regularized by $1/M$ ($M$ is the number of testing users), where in BV, this term is regularized by $\gamma\cdot P(R)$. Our estimator reveals that as the number of users increases, the variance in the ${\mathcal L}_2$ components will continue to decrease, whereas the BV estimator does not consider this factor. Thus, as the user size increases, BV estimator still needs to deal with ${\mathcal L}_2$ or has to manually adjust $\gamma$. 

Finally, both BV and the new estimator rely on prior distribution $P(R)$, which is unknown. In \cite{krichene@kdd20}, the uniform distribution is used for the estimation purpose. In this cgapter, we propose to leverage the latest approaches in ~\cite{Jin@AAAI21} which provide a more accurate estimation of $P(R)$ for this purpose. The experimental results in \cref{sec:sampling-evaluation-experiement} will confirm the validity of using such distribution  estimations.

\section{Boosting Global Top-K Metric Estimation Accuracy via Adaptive Item Sampling}
\label{sec:adaptive_estimator}

\subsection{Blind Spot Issue and Adaptive Sampling}

\begin{figure}
\centering
\begin{minipage}{.48\textwidth}
  \centering
  \includegraphics[width=.9\linewidth]{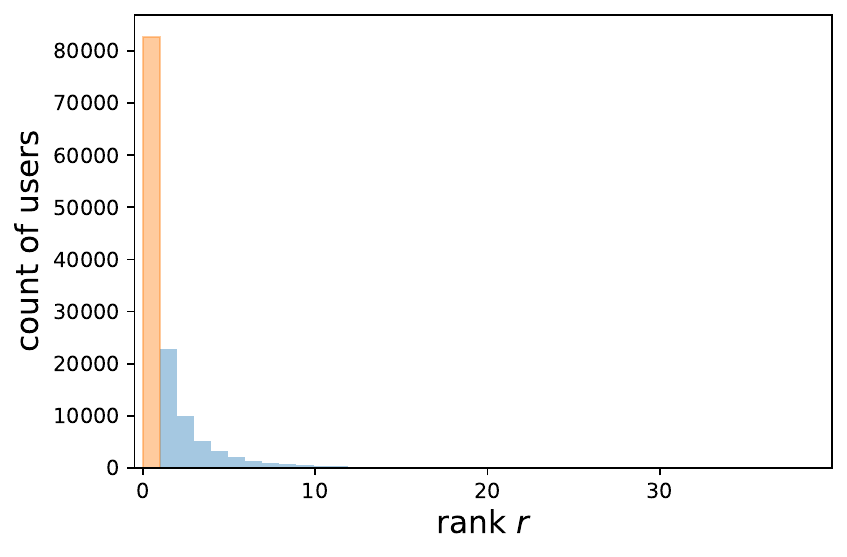}
  \captionof{figure}{Distribution of $r_u$ with sample set size $n=100$. Rank $r=1$ is highlighted.}
  \label{fig:dist-sample}
\end{minipage}%
\hfill
\begin{minipage}{.48\textwidth}
  \centering
  \includegraphics[width=.85\linewidth]{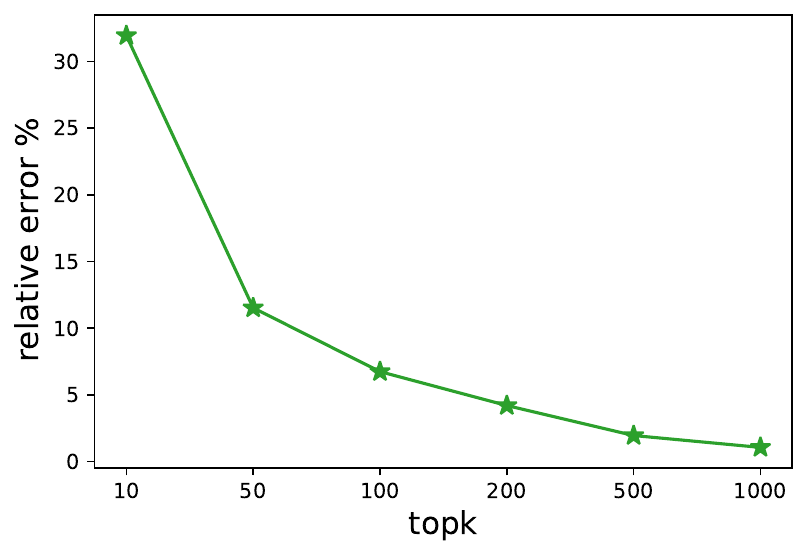}
  \captionof{figure}{the relative error of MLE estimator for different top-$K$. The result is obtained by EASE model \cite{Steck_2019} over ml-20m dataset.}
  \label{fig:blind-spot}
\end{minipage}
\end{figure}


In recommendation, top-ranked items are vital, thus it's more crucial to obtain an accurate estimation for these top items. However current sampling approaches treat all items equally and particularly have difficulty in recovering the global top-$K$ metrics when K is small. In \Cref{fig:dist-sample}, we plot the distribution of target items' rank in the sample set and observe that most target items rank top 1 (highlighted). This could lead to the "blind spot" problem - when $K$ gets smaller, the estimation of basic estimators is more inaccurate (see \Cref{fig:blind-spot}). Intuitively, when $r_u=1$, it does not mean its global rank $R_u$ is $1$, instead, its expected global rank may be around $100$ (assuming $N=10K$ and sample set size $n=100$) according to the analysis in \Cref{sec:map}.
And the estimation granularity is only at around 1\% ($1/n$) level. This blind spot effect brings a big drawback for current estimators. 

Based on above discussion, we propose an adaptive sampling strategy, which increases acceptable test sample size for users whose target item ranks top (say $r_u=1$) in the sampled data. When $r_u = 1$, we continue doubling the sample size until $r_u\neq 1$ or until sample size reaches a predetermined ceiling. See~\Cref{alg:adp}. Specifically, we start from an initial sample set size parameter $n_0$. We sample $n_0-1$ items and compute the rank $r_u$ for all users. For those users with $r_u>1$, we take down the sample set size $n_u=n_0$. For those with $r_u = 1$, we double the sample set size $n_1 = 2n_0$, in other words, we sample another set of $n_0$ items (since we already sample $n_0-1$). Consequently, we check the rank $r_u$ and repeat the process until $r_u\neq 1$ or sample set size is $n_{max}$. We will discuss how to determine $n_{max}$ later in ~\Cref{upperbound}. 

The benefits of this adaptive strategy are two folds: \textit{high granularity}, with more items sampled, the counts of $r_u = 1$ shall reduce, which could further improve the estimating accuracy; \textit{efficiency}, we iteratively sample more items for users whose $r_u=1$ and the empirical experiments confirm that small average adaptive sample size (compared to uniform sample size) is able to achieve significantly better performance.

\begin{algorithm}[t]
\caption{
Adaptive Sampling Process}\label{alg:adp}
\begin{flushleft}
        \textbf{INPUT:} 
        Recommender Model $RS$, test user set $\mathcal{U}$, initial size $n_0$, terminal size $n_{max}$
        \\
        \textbf{OUTPUT:} $\{(u, r_u, n_u)\}$
\end{flushleft}
\begin{algorithmic}[1]
\FORALL{$u \in \mathcal{U}$} 
    \STATE sampling $n_0 - 1$ items, form the sample set $I^s_u$
    
    \STATE $n_u = n_0$, $r_u = RS(i_u, I^s_u)$
    
    \WHILE{$r_u = 1$ and $n_u \neq n_{max}$}
    \STATE ${}$\hspace{2em} sampling extra $n_u$ items, form the new set $I^s_u$
    \STATE ${}$\hspace{2em} $n_u = 2n_u$, $r_u = RS(i_u, I^s_u)$
    \ENDWHILE
    \STATE record $n_u, r_u$ for user $u$
\ENDFOR
\end{algorithmic}
\end{algorithm}

\subsection{Maximum Likelihood Estimation by EM}\label{subsec: adaptive_MLE}

To utilize the adaptive item sampling for estimating the global top-$K$ metrics, we review two routes: 1) approaches from ~\cite{krichene@kdd20} and our aforementioned estimators in this chapter; 2)  methods based on MLE and EM in \Cref{sec:fix_estimator_learning_rank}. Since every user has a different number of item samples, we found the first route is hard to extend (which requires an equal sample size); but luckily the second route is much more flexible and can be easily generalized to this situation. 

To begin with, we note that for any user $u$ (his/her test item ranks $r_u$ in the sample set (with size $n_u$) and ranks $R_u$ (unknown)), its rank $r_u$ follows a binomial distribution: 
\begin{equation}
\begin{split}
    P(r=r_u|R=R_u; n_u)=Binomial(r_u -1;n_u -1,\theta_u)\\
    \end{split}
\end{equation}

Given this, let $\pmb{\Pi} = (\pi_1,\dots,\pi_R,\dots,\pi_N)^T$ be the parameters of the mixture of binomial distributions, $\pi_R$ is the probability for user population ranks at position $R$ globally. And then we have $p(r_u|\pmb{\Pi})=\sum_{R=1}^N \pi_R \cdot p(r_u|\theta_R; n_u)$, where 
$p(r_u|\theta_R; n_u)=Bin(r_u-1; n_u-1, \theta_R)$. 
We can apply the maximal likelihood estimation (MLE) to learn the parameters of the mixture of binomial distributions ($MB$), which naturally generalizes the EM procedure (details see \Cref{sec:app_em}) used in ~\cite{Jin@AAAI21}, where each user has the same $n$ samples: 

\begin{equation*}
    \begin{split}
        \phi(R_{uk}) &= P(R_u = k|r_u;\boldsymbol{\pi}^{old})\\
\pi^{new}_k&=\frac{1}{M}\sum\limits_{u=1}^M\phi(R_{uk})   
    \end{split}
\end{equation*}

When the process converges, we obtain $\pmb{\Pi}^*$ and use it to estimate $\pmb{P}$, i.e., $\widehat{P}(R)=\pi^*_R$. 
Then, we can use $\widehat{P}(R)$ in \Cref{eq:metric_observe} to estimate the desired metric $T$. 
The overall time complexity is linear with respect to the sample size $O(t\sum{n_u})$ where $t$ is the iteration number. 

\subsection{Sampling Size UpperBound}
\label{upperbound}

\begin{figure}
    \centering
    \includegraphics[ width=0.6\linewidth]{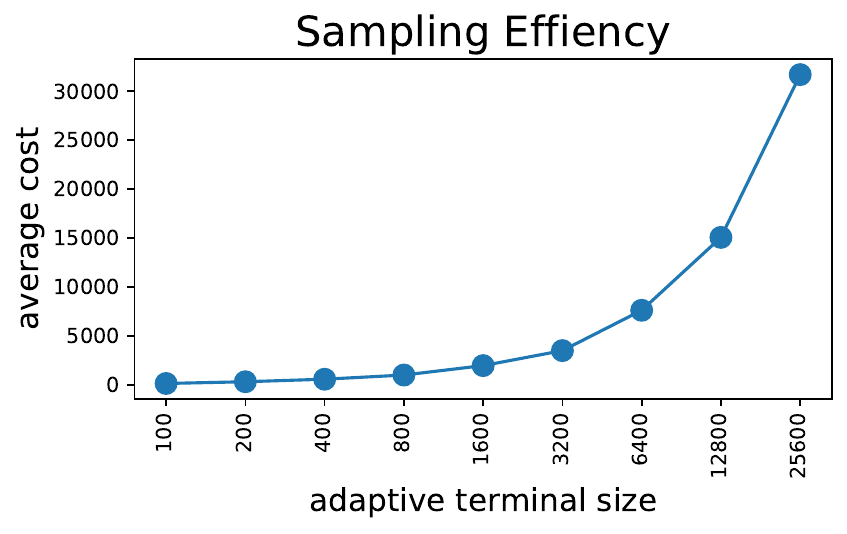}
    \caption{Sample efficiency w.r.t terminal size. The illustration result is obtained by EASE model \cite{Steck_2019} over yelp dataset while it consistently be observed in other datasets.}
    \label{fig:sample_efficiency}
\end{figure}

Now, we consider how to determine the terminal size $n_{max}$. We take the post-analysis over the different terminal sizes and investigate the average sampling cost, which introduces the concept \textit{sampling efficiency}, see \Cref{fig:sample_efficiency}. Formally, we  first select a large number $n_{max} \approx N$ and repeat the aforementioned adaptive sampling process. For each user, his/her sampling set size could be one of $\{n_0, n_1=2n_0, n_2=4n_0,\dots, n_t=n_{max}\}$. And there are $m_j$ users whose sample set size is $n_j$ $(j=0,1,\dots, t)$.
The average sampling cost for each size $n_j$ can be defined heuristically:
\begin{equation}\label{eq:ave_cost}
    \begin{split}
        C_j &= \frac{ (M-\sum\limits_{p=0}^{j-1}{m_p}) \times (n_j-n_{j-1}) }{m_j}\quad j\neq 0, t\\
        C_0 &= \frac{M\times n_0}{m_0}
    \end{split}
\end{equation}
The intuition behind \Cref{eq:ave_cost} is: at $j$-th iteration, we independently sample $n_j-n_{j-1}$ items for total $M-\sum\limits_{p=0}^{j-1}{m_p}$ users, and there are $m_j$ users whose rank $r_u> 1$. $C_j$ is the average items to be sampled to get a user whose $r_u> 1$, which reflects sampling efficiency. In ~\Cref{fig:sample_efficiency}, we can see that when the sample reaches $3200$, the sampling efficiency will reduce quickly (the average cost $C_j$ increases fast). Such post-analysis provides insights into how to balance the sample size and sampling efficiency. In this case, we observe $n_{max}=3200$ can be a reasonable choice. Even though different datasets can pickup different thresholds, we found in practice $3200$  can serve as a default choice to start and achieve pretty good performance for the estimation accuracy.

\section{User Sampling}\label{sec:user_sampling}
In real recommendation scenarios, the number of users is usually much larger than that of items. For example, there are millions of items in an online shopping portal while the user can be as many as billions. This section examines the sampling effect for the user side. While sampling users appears to be a natural strategy to speed up evaluation, there has been a lack of study from statistical analysis. 
In ~\citep{gunawardana2009survey}, the authors briefly reviewed the approach to compare two models $A$ and $B$ using the sign test ~\cite{demvsar2006statistical} and potentially more sophisticated Wilconxon signed rank test. However, they do not discuss the statistical nature of the commonly used top-$K$ evaluation metrics based on user-sampling, and how to use these user-sampling based metrics to draw a right (statistically rigorous) decision. 
Note that this section assumes the test data set can be used for evaluating the performance of recommendation models (where problems like data leakage has been solved~\cite{tamm2021quality}).
Finally, this section does not introduce new techniques. Instead, we focus on applying the available/right statistical tools to help quantify user-sampling based evaluation metrics for recommendation. Our analysis aims to offer principled guidelines for the practitioners in adopting the sampling based approaches to speedup their offline evaluation.


\subsection{Statistical Analysis for One Model}\label{sec:one_model}
First, we would like to point out that the top-K evaluation metrics on testing data itself is often considered as a special case of user-sampling (for instance, the common practice will split the data into $80-20$). Thus we hope to use testing user-sampling to approximate the overall population: 
$$T_{\mathcal{F}@K} =\sum_{R=1}^K \tilde{P}(R) \cdot {\mathcal F}(R) \approx \sum_{R=1}^K {{P}}(R) \cdot {\mathcal F}(R)=\mathbb{E}_{R} [{\mathcal F}(R)]$$
where ${\tilde{P}}(R)$ and $P(R)$ are the empirical rank distributions on the testing user population and entire user population, respectively.  $\mathcal{F}$ is any metric function (like Recall) and $T_{\mathcal{F}@K}$ is the corresponding metric result.


Let us consider the top-K Recall metric \cref{eq:def_metric_fn_1}, and it can be written as:
\begin{equation}\label{eq:recall_proportion}
    \begin{split}
        \frac{1}{M}\sum_{u=1}^M \delta({R_u\leq K}) \triangleq \frac{1}{M}\cdot Q
    \end{split}
\end{equation}
where the summation is denoted as a symbol $Q$. We assume the $R_u$ for any user $u$ follows the i.i.d. distribution, and thus $\delta({R_u\leq K})$ can be treat as a random variable of the Bernoulli distribution with some specific probability $p_K$ such that $\delta({R_u\leq K}) = 1$. Consequently 
\begin{equation}\label{eq:recall_binomial}
    \begin{split}
       Q \sim Binomial(M, p_K)
    \end{split}
\end{equation}

This is in fact the widely known point estimation for bionomial distributions~\cite{casella2002statistical}.
When $M$ (number of user) is sufficiently large enough, we assume that the top-k recall metric aka $\frac{Q}{M}$ is a good estimator of the underlying probability $p_K$. Clearly, we could estimate the underlying probability with much smaller samples ($m\ll M$), and we can also infer the sampling $m$ with respect to the margin of error (MOE) $e$: 
\begin{equation}\label{eq:m_moe}
    \begin{split}
        m = p_K(1-p_K)\cdot(\frac{z_{\frac{\alpha}{2}}}{e})^2
    \end{split}
\end{equation}
where $z_{\frac{\alpha}{2}}$ is the critical value for the corresponding confidence level. One may wonder how to determine the sample size in practice. $Recall@K$ aka $p_K$ is between $0$ and $1$, saying $Recall@30 = 0.3$ for instance. We could also take $p_K=0.5$ to get the largest sample size. If we set MOE to be $3\%$ and $1\%$  in $95\%$ confidence level:
\begin{equation}\label{eq:moe_example}
    m = 1.96^2\cdot\frac{0.5^2}{0.03^2}\approx 1067\quad  m = 1.96^2\cdot\frac{0.5^2}{0.01^2}\approx 9604
\end{equation}
Note that since most of the recommendation models with $Recall@K$ (say K=50) is often higher than say $0.4$, this can also effectively give us an estimation on the relative error estimation. In fact, this suggests $10K$ number of users can be a good rule-of-thumb for user-sampling for Recall metrics. 
In experiment \cref{sec:sampling-evaluation-experiement}, we empirically investigate the effect of different user sample sizes. 

Note that for the top-$K$ metrics, like AP and NDCG, the individual user metric is always bounded (actually between $0$ and $1$), then we may adopt Hoeffding's inequality  (we can also alternatively use a central limit theorem):

\begin{equation}\label{eq:hoeff_bound}
    \begin{split}
        Pr\big(|\frac{1}{m}\sum\limits_{u}^m{\Big(\delta({R_u\leq K})\cdot \mathcal{F}(R_u) - \mathbb{E} [\mathcal{F}(R)]\Big)}|\ge t\big)\le 2\exp(-2mt^2)
    \end{split}
\end{equation}
Given this, we can also infer the sample with targeted bound of error ($t$). For instance, when $t=0.02$ (absolute error), and sample size $10K$, the confidence bound is higher than $99.9\%$.

\subsection{Statistical Analysis of Multiple Models}\label{sec:two_model}
First, we note that since there are typically multiple models, the above analysis on the sample size and confidence interval analysis should be revised to support the statistical results for all the models are hold true. In this case, the Bonferroni correction (or Bonferroni inequality) can be leveraged to remedy this situation. This will lead the sample size to be multiplied.




Secondly, as we need to compare any two models or pickup the winners from a list of models, the statistical toolbox would require us to reply on hypothesis testing. For instance, a two-sample z-test is used to test the difference between the Recall metrics between two models, which are population proportions $p_1$ and $p_2$: 

\begin{equation}\label{eq:std_z}
    z = \frac{(\hat{p}_1-\hat{p}_2)}{\sqrt{\frac{2\bar{p}\bar{q}}{m}}}
\end{equation}
for the two hypotheses: 
\begin{equation}
\begin{array}{l}
H_{0}: p_{1}-p_{2}=0 \\
H_{a}: p_{1}-p_{2}<0
\end{array}
\end{equation}







Alternatively, we can even derive the sample size from based on the confidence interval for $\hat{p}_1-\hat{p}_2$: 
\begin{equation}
    \begin{split}
        (\hat{p}_1-\hat{p}_2)\pm z_{\frac{\alpha}{2}}\cdot \sqrt{\frac{\hat{p}_1(1-\hat{p}_1)}{m_1}+\frac{\hat{p}_2(1-\hat{p}_2)}{m_2}}
    \end{split}
\end{equation}
where $z_{\frac{\alpha}{2}}$ is the critical value for the standard normal curve with area $C$ between $-z_{\frac{\alpha}{2}}$ and $z_{\frac{\alpha}{2}}$. Set the sample size $m_1=m_2=m$ and the upper bound proportions $\hat{p}_1=\hat{p}_2\triangleq p_m=0.5$, we are able to derive the sample size for a given error range $e$ at specific $C$ confidence:
\begin{equation}\label{eq:m_two}
\begin{split}
        e &= z_{\frac{\alpha}{2}}\cdot \sqrt{\frac{2p_m(1-p_m)}{m}}\\
        m&=2p_m(1-p_m)\cdot (\frac{z_{\frac{\alpha}{2}}}{e})^2
\end{split}
\end{equation}
Compared with single model, its twice of which in \Cref{eq:m_moe}

Finally, for the general metrics, like AP and NDCG, which can not be represented as population proportion, we can resort to the two-sample t-test to decide if one model is better than the other. Furthermore, if we consider multiple comparisons at the same time, Bonferroni inequality again has to be used.

\section{experiment}\label{sec:sampling-evaluation-experiement}

In this section, we investigate the experimental results of the mapping function proposed in \Cref{sec:map}, Top-K metric estimators proposed in \Cref{sec:fix_estimator_learning_rank,sec:fix_estimator_learning_metric}, and also adaptive estimator in \Cref{sec:adaptive_estimator}, user-based sampling in \Cref{sec:user_sampling}. Specifically, we aim to answer following questions: 

\begin{itemize}
    \item {\textbf{Q1.}} How do various approximating mapping functions f (\Cref{subsec: mapping_function}) help align $T^S_{Recall@K}$ with respect to $T_{Recall@f(K)}$
    \item {\textbf{Q2.}} How do these estimators (\Cref{sec:fix_estimator_learning_rank,sec:fix_estimator_learning_metric}) perform compared to baseline BV \citep{krichene@kdd20} on estimating the top-K metrics based on sampling?
    \item \textbf{Q3.} How effective and efficient the adaptive item-sampling evaluation method - adaptive MLE (\Cref{sec:adaptive_estimator}) is, compared with the basic (non-adaptive) item-sampling methods?
    \item {\textbf{Q4.}} How accurately can these estimators (\Cref{sec:adaptive_estimator,sec:fix_estimator_learning_rank,sec:fix_estimator_learning_metric}) find the best model (in terms of the global top-K metric) among a list of recommendation models?
    \item {\textbf{Q5.}} How effective is the user-sampling based evaluation method (\Cref{sec:user_sampling})?
\end{itemize}

\subsection{Experimental Setting}

\subsubsection{Datasets}
We conduct experiments on 3 widely-used relatively large datasets (\textit{pinterest-20, yelp, ml-20m}) for recommendation system research. \Cref{tab:datasets} shows the information of these three datasets. 
\begin{table}[]
  \caption{Dataset Statistics}
  \label{tab:datasets}
  \begin{tabular}{lcccc}
    \toprule
    \textbf{Dataset}&
    \textbf{Interactions}&
    \textbf{Users}&
    \textbf{Items}&\textbf{Sparsity}\\
    \midrule
    pinterest-20& 1,463,581 & 55,187&9,916&99.73$\%$\\
    yelp& 696,865&25,677 &25,815&99.89$\%$\\
    ml-20m&9,990,682&136,677&20,720&99.65$\%$\\
    \bottomrule
  \end{tabular}
\end{table}
\subsubsection{Recommendation Models}
 We use five widely-used recommendation algorithms, including three non-deep-learning methods (itemKNN~\cite{DeshpandeK@itemKNN}, 
ALS~\cite{hu2008collaborative}, and EASE~\cite{Steck_2019}) 
and two deep learning ones (NeuMF~\cite{he2017neural} and MultiVAE~\cite{liang2018variational}). Selection of models try to enable varied performance and advantage in different datasets \citep{krichene@kdd20}.
\subsubsection{Evaluation Metrics}
Three most popular top-K metrics (\Cref{eq:def_metric_fn_1}): $Recall$, $NDCG$ and $AP$ are utilized for evaluating the recommendation models. 
\subsubsection{Evaluating and Estimating Procedure} Recalling that there are $M$ users and $N$ items. 
Each user $u$ is associated with a target item $i_u$ (\textit{leave-one-out}). The learned recommendation algorithm/model $A$ would compute the ranks $\{R_u\}_{u=1}^M$ among all items called global ranks and the ranks $\{r_u\}_{u=1}^M$ among sampled items called sampled ranks. Without the knowledge of $\{R_u\}_{u=1}^M$, the estimator tries to estimate the global metric $T$ defined in \Cref{eq:def_metric_global,eq:def_metric_fn_1,eq:def_metric_1} based on sampled set test results $\{r_u\}_{u=1}^M$. We repeat experiments $100$ times, deriving $100$ distinct $\{r_u\}_{u=1}^M$ results. Below reported experimental results are displayed with mean and standard deviation over these $100$ repeats.
\subsubsection{Item-sampling based estimators} 
$\textbf{BV}$ (Bias-Variance Estimator)\citep{krichene@kdd20}; $\textbf{MLE}$ (Maximal Likelihood Estimation) from \Cref{subsec:learning_empirical} \cite{Jin@AAAI21}; $\textbf{MES}$ (Maximal Entropy with Squared distribution distance) from \Cref{subsec:learning_empirical} \cite{Jin@AAAI21}; $\textbf{BV\_MLE}$, $\textbf{BV\_MES}$ (\Cref{eq:leastsquare} with $P(R)$ obtained from $MLE$ and $MES$; basically, we consider combining the two approaches from \textbf{BV} \cite{krichene@kdd20} and $\textbf{MLE}$/$\textbf{MES}$ \cite{Jin@AAAI21}); the new multinomial distribution based estimator with different prior, short as $\textbf{MN\_MLE}$, $\textbf{MN\_MES}$, \cref{eq:mn_closed} with prior $P(R)$ obtained from $MLE$ and $MES$ estimators.

\subsection{Aligning Sampling and Global Top-K Recall Metric (Q1)}

\begin{figure}
    \centering
    \includegraphics[width=0.95\linewidth]{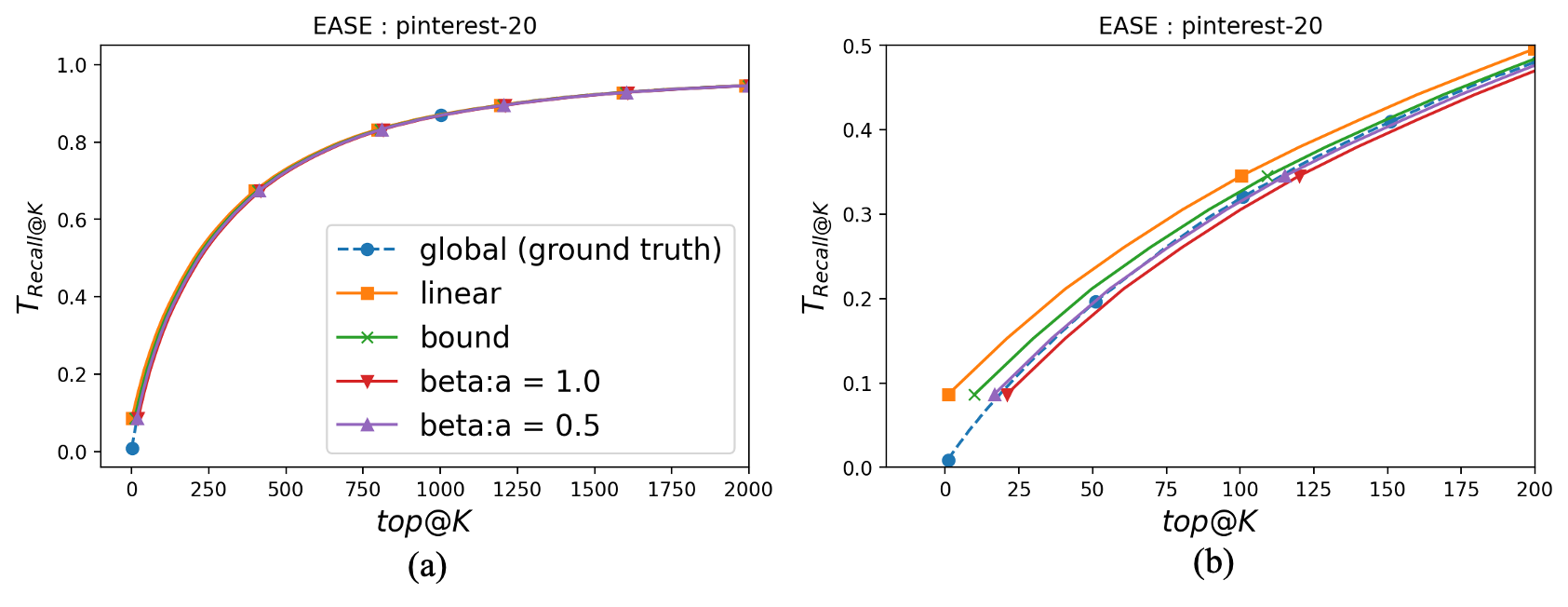}
    \caption{$T^S_{Recall@k}$ curves alignment with $T_{Recall@K}$ by different mapping function. (a) shows the results K from 1 to 2000 and (b) zoom out shows the details from 1 to 200. All the mapping functions exhibit promising approximating especially bound and $beta@0.5$. An example of model $EASE$ conducted in \textit{pinterest-20} dataset.}
    \label{fig:aligning}
\end{figure}

In this section, we provide a global view of the alignment between sampling and global Recall (curves). Here, we report an example the \texttt {EASE} results on dataset \texttt{pinterest-20} in \Cref{fig:aligning}. We use four different approximating mapping functions from \Cref{subsec: mapping_function}, the linear, bound, $beta@1$ and $beta@0.5$, for the curve alignment. \Cref{subsec: mapping_function} maps the sampling curve $T^S_{Recall@k}$ to the global top-$K$ view by mapping from location $K$ to location $f(K)$ in the global top $K$ view, and compares them with the global $T_{Recall@K}$ (ground truth curve). We observe that both bound and $beta@0.5$ achieve the superior results, which further validate our claim of mapping functions.

\subsection{Estimation Accuracy of Estimators (Q2)}

Here, we aim to answer Question 2: How do these estimators proposed in this chapter perform compared to baseline BV\citep{krichene@kdd20} on estimating the top-K metrics based on sampling?
Here we would try to quantify the accuracy of each estimator in terms of relative error, leading to a more rigorous and reliable comparison. Specifically, we compute the true global $T_{metric@K}$ ( $K$ from $1$ to $50$), then we average the absolute relative error between the estimated $\hat{T}_{metric@k}$ from each estimator and the true one.
 
The estimators include $BV$ (with the tradeoff parameter $\gamma=0.01$) from \cite{krichene@kdd20}, $MLE$ (Maximal Likelihood Estimation), $MES$ (Maximal Entropy with Squared distribution distance, where $\eta=0.001$) from \Cref{sec:fix_estimator_learning_metric,sec:fix_estimator_learning_rank} \cite{Jin@AAAI21} . \Cref{tab:recall_error_100_cp} presents the average relative error of the estimators in terms of $T_{Recall@K}$ ($k$ from $1$ to $50$). We highlight the most and the second-most accurate estimator. For instance, for model $EASE$ in dataset \textit{pinterest-20} (line $1$ of \Cref{tab:recall_error_100_cp}), the estimator $MN\_MES$ is the most accurate one with $5.00\%$ average relative error compared to its global $T_{Recall@K}$ ($K$ from $1$ to $50$).
 
Overall, we observe from \Cref{tab:recall_error_100_cp} that $MN\_MES$ and $MN\_MLE$ are among the most, or the second-most accurate estimators.  And in most cases, they outperform the others significantly. Meantime, they have smaller deviation compared to their prior estimators $MES$ and $MLE$. In addition, we also notice that the estimators with the knowledge of some reasonable prior distribution ($BV\_MES$, $MN\_MES$, $BV\_MLE$, $MN\_MLE$) could achieve more accurate results than the others. This indicate that these estimators could better help the distribution to converge.

\begin{table*}[]
\centering
\caption{The average relative errors between estimated $\widehat{T}_{Recall@K}$ ($K$ from $1$ to $50$) and the true ones ${T}_{Recall@K}$. Unit is $\%$. In each row, the smallest two results are highlighted in bold, indicating the most accurate results. Sample set size $n=100$.}
\label{tab:recall_error_100_cp}
\resizebox{0.9\textwidth}{!}{%
\begin{tabular}{|c|c|rrrrrrr|}
\hline
\multirow{2}{*}{dataset} &
  \multirow{2}{*}{Models} &
  \multicolumn{7}{c|}{sample set size 100} \\ \cline{3-9} 
 &
   &
  \multicolumn{1}{c|}{MES} &
  \multicolumn{1}{c|}{MLE} &
  \multicolumn{1}{c|}{BV} &
  \multicolumn{1}{c|}{BV\_MES} &
  \multicolumn{1}{c|}{BV\_MLE} &
  \multicolumn{1}{c|}{MN\_MES} &
  \multicolumn{1}{c|}{MN\_MLE} \\ \hline
\multirow{5}{*}{pinterest-20} &
  EASE &
  \multicolumn{1}{r|}{5.86$\rpm$2.26} &
  \multicolumn{1}{r|}{5.54$\rpm$1.85} &
  \multicolumn{1}{r|}{8.11$\rpm$2.00} &
  \multicolumn{1}{r|}{\textbf{5.05$\rpm$1.46}} &
  \multicolumn{1}{r|}{5.14$\rpm$1.46} &
  \multicolumn{1}{r|}{\textbf{5.00$\rpm$1.39}} &
  5.10$\rpm$1.34 \\ \cline{2-9} 
 &
  MultiVAE &
  \multicolumn{1}{r|}{4.17$\rpm$2.91} &
  \multicolumn{1}{r|}{3.34$\rpm$2.07} &
  \multicolumn{1}{r|}{\textbf{2.75$\rpm$1.61}} &
  \multicolumn{1}{r|}{2.89$\rpm$1.74} &
  \multicolumn{1}{r|}{\textbf{2.88$\rpm$1.74}} &
  \multicolumn{1}{r|}{\textbf{2.75$\rpm$1.66}} &
  \textbf{2.75$\rpm$1.68} \\ \cline{2-9} 
 &
  NeuMF &
  \multicolumn{1}{r|}{5.17$\rpm$2.74} &
  \multicolumn{1}{r|}{4.28$\rpm$1.95} &
  \multicolumn{1}{r|}{4.23$\rpm$1.79} &
  \multicolumn{1}{r|}{3.83$\rpm$1.59} &
  \multicolumn{1}{r|}{3.84$\rpm$1.72} &
  \multicolumn{1}{r|}{\textbf{3.60$\rpm$1.50}} &
  \textbf{3.76$\rpm$1.44} \\ \cline{2-9} 
 &
  itemKNN &
  \multicolumn{1}{r|}{5.90$\rpm$2.20} &
  \multicolumn{1}{r|}{5.80$\rpm$1.60} &
  \multicolumn{1}{r|}{8.93$\rpm$1.70} &
  \multicolumn{1}{r|}{\textbf{5.11$\rpm$1.22}} &
  \multicolumn{1}{r|}{5.31$\rpm$1.25} &
  \multicolumn{1}{r|}{\textbf{5.09$\rpm$1.15}} &
  5.26$\rpm$1.14 \\ \cline{2-9} 
 &
  ALS &
  \multicolumn{1}{r|}{4.19$\rpm$2.37} &
  \multicolumn{1}{r|}{3.44$\rpm$1.68} &
  \multicolumn{1}{r|}{3.17$\rpm$1.34} &
  \multicolumn{1}{r|}{3.05$\rpm$1.39} &
  \multicolumn{1}{r|}{3.07$\rpm$1.42} &
  \multicolumn{1}{r|}{\textbf{2.86$\rpm$1.27}} &
  \textbf{2.90$\rpm$1.28} \\ \hline
\multirow{5}{*}{yelp} &
  EASE &
  \multicolumn{1}{r|}{8.08$\rpm$4.94} &
  \multicolumn{1}{r|}{7.89$\rpm$4.70} &
  \multicolumn{1}{r|}{18.60$\rpm$2.78} &
  \multicolumn{1}{r|}{6.10$\rpm$3.74} &
  \multicolumn{1}{r|}{6.56$\rpm$3.90} &
  \multicolumn{1}{r|}{\textbf{4.84$\rpm$2.17}} &
  \textbf{5.61$\rpm$2.30} \\ \cline{2-9} 
 &
  MultiVAE &
  \multicolumn{1}{r|}{9.33$\rpm$6.61} &
  \multicolumn{1}{r|}{7.67$\rpm$4.94} &
  \multicolumn{1}{r|}{9.70$\rpm$3.22} &
  \multicolumn{1}{r|}{6.84$\rpm$4.10} &
  \multicolumn{1}{r|}{6.80$\rpm$4.04} &
  \multicolumn{1}{r|}{\textbf{4.30$\rpm$1.27}} &
  \textbf{4.35$\rpm$1.31} \\ \cline{2-9} 
 &
  NeuMF &
  \multicolumn{1}{r|}{15.09$\rpm$6.24} &
  \multicolumn{1}{r|}{15.47$\rpm$5.55} &
  \multicolumn{1}{r|}{22.40$\rpm$3.17} &
  \multicolumn{1}{r|}{\textbf{13.14$\rpm$4.55}} &
  \multicolumn{1}{r|}{13.92$\rpm$4.70} &
  \multicolumn{1}{r|}{\textbf{13.46$\rpm$2.43}} &
  14.50$\rpm$2.45 \\ \cline{2-9} 
 &
  itemKNN &
  \multicolumn{1}{r|}{9.25$\rpm$4.87} &
  \multicolumn{1}{r|}{9.62$\rpm$4.88} &
  \multicolumn{1}{r|}{23.24$\rpm$2.16} &
  \multicolumn{1}{r|}{\textbf{7.69$\rpm$4.09}} &
  \multicolumn{1}{r|}{8.15$\rpm$4.17} &
  \multicolumn{1}{r|}{\textbf{7.74$\rpm$2.08}} &
  8.75$\rpm$2.08 \\ \cline{2-9} 
 &
  ALS &
  \multicolumn{1}{r|}{14.31$\rpm$3.96} &
  \multicolumn{1}{r|}{13.68$\rpm$3.51} &
  \multicolumn{1}{r|}{15.14$\rpm$1.86} &
  \multicolumn{1}{r|}{13.43$\rpm$3.16} &
  \multicolumn{1}{r|}{13.26$\rpm$3.08} &
  \multicolumn{1}{r|}{\textbf{11.68$\rpm$0.88}} &
  \textbf{11.57$\rpm$0.83} \\ \hline
\multirow{5}{*}{ml-20m} &
  EASE &
  \multicolumn{1}{r|}{10.45$\rpm$1.03} &
  \multicolumn{1}{r|}{11.52$\rpm$1.03} &
  \multicolumn{1}{r|}{36.59$\rpm$0.31} &
  \multicolumn{1}{r|}{\textbf{8.99$\rpm$0.74}} &
  \multicolumn{1}{r|}{9.86$\rpm$0.77} &
  \multicolumn{1}{r|}{\textbf{9.07$\rpm$0.61}} &
  10.09$\rpm$0.69 \\ \cline{2-9} 
 &
  MultiVAE &
  \multicolumn{1}{r|}{9.93$\rpm$0.38} &
  \multicolumn{1}{r|}{\textbf{9.48$\rpm$0.22}} &
  \multicolumn{1}{r|}{22.24$\rpm$0.37} &
  \multicolumn{1}{r|}{9.85$\rpm$0.36} &
  \multicolumn{1}{r|}{\textbf{9.50$\rpm$0.22}} &
  \multicolumn{1}{r|}{9.82$\rpm$0.28} &
  9.53$\rpm$0.14 \\ \cline{2-9} 
 &
  NeuMF &
  \multicolumn{1}{r|}{4.35$\rpm$1.50} &
  \multicolumn{1}{r|}{6.05$\rpm$1.35} &
  \multicolumn{1}{r|}{28.27$\rpm$0.42} &
  \multicolumn{1}{r|}{\textbf{3.67$\rpm$1.14}} &
  \multicolumn{1}{r|}{4.81$\rpm$1.14} &
  \multicolumn{1}{r|}{\textbf{3.64$\rpm$1.05}} &
  4.79$\rpm$1.08 \\ \cline{2-9} 
 &
  itemKNN &
  \multicolumn{1}{r|}{15.31$\rpm$1.18} &
  \multicolumn{1}{r|}{17.19$\rpm$1.15} &
  \multicolumn{1}{r|}{36.63$\rpm$0.42} &
  \multicolumn{1}{r|}{\textbf{14.02$\rpm$0.75}} &
  \multicolumn{1}{r|}{15.24$\rpm$0.83} &
  \multicolumn{1}{r|}{\textbf{14.16$\rpm$0.68}} &
  15.41$\rpm$0.77 \\ \cline{2-9} 
 &
  ALS &
  \multicolumn{1}{r|}{36.17$\rpm$0.83} &
  \multicolumn{1}{r|}{\textbf{35.21$\rpm$0.64}} &
  \multicolumn{1}{r|}{36.39$\rpm$0.21} &
  \multicolumn{1}{r|}{36.50$\rpm$0.74} &
  \multicolumn{1}{r|}{35.75$\rpm$0.62} &
  \multicolumn{1}{r|}{36.32$\rpm$0.56} &
  \textbf{35.60$\rpm$0.48} \\ \hline
\end{tabular}%
}
\end{table*}

\subsection{Efficiency and Effectiveness of Adaptive Estimator (Q3)}

Here, we aim to answer Question 3: How effective and efficient the adaptive item-sampling evaluation method - adaptive MLE (Section 7) is, compared with the basic (non-adaptive) item-sampling methods, such as $BV\_MES$, $MN\_MES$, $BV\_MLE$, $MN\_MLE$ ?

 \Cref{tab:recall_adaptive_cp}  presents the average relative error of the estimators in terms of $T_{Recall@K}$ ($k$ from $1$ to $50$). We highlight the most accurate estimator. For the basic item sampling, we choose $500$ sample size for datasets \textit{pinterest-20} and $yelp$, and $1000$ sample size for dataset \textit{ml-20m}. The upper bound threshold $n_{max}$ is set at $3200$. 
 
 We observe that adaptive sampling uses much less sample size (typically $200-300$ vs $500$ on the first two datasets and $700-800$ vs $1000$ on the last dataset. Particularly, the relative error of the adaptive sampling is significantly less than that of the basic sampling methods. On the first (\textit{pinterest-20}) and third (\textit{ml-20m}) datasets, the relative errors have reduced to less than $2\%$. In other words, the adaptive method has been much more effective (in terms of accuracy) and efficient (in terms of sample size).  This also confirms the benefits in addressing the ``blind spot'' issue, which provides higher resolution to recover global $K$ metrics for small $K$ ($K\leq 50$ here).

\begin{table*}[]
\centering
\caption{Comparison of Adaptive estimators with basic ones in terms of $Recall$. The average relative errors between estimated $\widehat{T}_{Recall@K}$ ($K$ from $1$ to $50$) and the true ones. Unit is $\%$. In each row, the smallest relative error is highlighted, indicating the most accurate result.}
\label{tab:recall_adaptive_cp}
\resizebox{0.9\textwidth}{!}{%
\begin{tabular}{|cc|llll|cc|}
\hline
\multicolumn{1}{|c|}{\multirow{2}{*}{Dataset}}      & \multirow{2}{*}{Models} & \multicolumn{4}{c|}{Fix Sample}                                                                                                                  & \multicolumn{2}{c|}{Adaptive Sample}                        \\ \cline{3-8} 
\multicolumn{1}{|c|}{}                              &                         & \multicolumn{1}{c|}{BV\_MES}      & \multicolumn{1}{c|}{BV\_MLE}      & \multicolumn{1}{c|}{MN\_MES}              & \multicolumn{1}{c|}{MN\_MLE} & \multicolumn{1}{c|}{average size}   & adaptive MLE          \\ \hline
\multicolumn{2}{|c|}{}                                                        & \multicolumn{4}{c|}{sample set size n = 500}                                                                                                     & \multicolumn{2}{c|}{}                                       \\ \hline
\multicolumn{1}{|c|}{\multirow{5}{*}{pinterest-20}} & EASE                    & \multicolumn{1}{l|}{2.54$\rpm$0.85}  & \multicolumn{1}{l|}{2.68$\rpm$0.87}  & \multicolumn{1}{l|}{2.78$\rpm$1.05}          & 2.83$\rpm$1.06                  & \multicolumn{1}{c|}{307.74$\rpm$1.41} & \textbf{1.69$\rpm$0.60} \\ \cline{2-8} 
\multicolumn{1}{|c|}{}                              & MultiVAE                & \multicolumn{1}{l|}{2.17$\rpm$1.08}  & \multicolumn{1}{l|}{2.13$\rpm$1.09}  & \multicolumn{1}{l|}{2.60$\rpm$1.30}          & 2.55$\rpm$1.35                  & \multicolumn{1}{c|}{286.46$\rpm$1.48} & \textbf{1.95$\rpm$0.65} \\ \cline{2-8} 
\multicolumn{1}{|c|}{}                              & NeuMF                   & \multicolumn{1}{l|}{2.45$\rpm$1.15}  & \multicolumn{1}{l|}{2.44$\rpm$1.15}  & \multicolumn{1}{l|}{2.76$\rpm$1.37}          & 2.80$\rpm$1.38                  & \multicolumn{1}{c|}{259.77$\rpm$1.28} & \textbf{2.00$\rpm$0.81} \\ \cline{2-8} 
\multicolumn{1}{|c|}{}                              & itemKNN                 & \multicolumn{1}{l|}{2.49$\rpm$0.97}  & \multicolumn{1}{l|}{2.59$\rpm$0.94}  & \multicolumn{1}{l|}{2.79$\rpm$1.12}          & 2.79$\rpm$1.20                  & \multicolumn{1}{c|}{309.56$\rpm$1.31} & \textbf{1.63$\rpm$0.51} \\ \cline{2-8} 
\multicolumn{1}{|c|}{}                              & ALS                     & \multicolumn{1}{l|}{2.65$\rpm$1.04}  & \multicolumn{1}{l|}{2.63$\rpm$1.06}  & \multicolumn{1}{l|}{3.02$\rpm$1.32}          & 2.98$\rpm$1.33                  & \multicolumn{1}{c|}{270.75$\rpm$1.22} & \textbf{2.00$\rpm$0.73} \\ \hline
\multicolumn{2}{|c|}{}                                                        & \multicolumn{4}{c|}{sample set size n = 500}                                                                                                     & \multicolumn{2}{c|}{}                                       \\ \hline
\multicolumn{1}{|c|}{\multirow{5}{*}{yelp}}         & EASE                    & \multicolumn{1}{l|}{4.68$\rpm$2.43}  & \multicolumn{1}{l|}{4.56$\rpm$2.35}  & \multicolumn{1}{l|}{\textbf{3.47$\rpm$1.79}} & 3.49$\rpm$1.78                  & \multicolumn{1}{c|}{340.79$\rpm$2.03} & 3.48$\rpm$1.40          \\ \cline{2-8} 
\multicolumn{1}{|c|}{}                              & MultiVAE                & \multicolumn{1}{l|}{6.14$\rpm$3.48}  & \multicolumn{1}{l|}{6.07$\rpm$3.46}  & \multicolumn{1}{l|}{4.68$\rpm$2.27}          & \textbf{4.67$\rpm$2.28}         & \multicolumn{1}{c|}{288.70$\rpm$2.24} & 5.08$\rpm$2.14          \\ \cline{2-8} 
\multicolumn{1}{|c|}{}                              & NeuMF                   & \multicolumn{1}{l|}{6.59$\rpm$2.38}  & \multicolumn{1}{l|}{6.73$\rpm$2.35}  & \multicolumn{1}{l|}{5.48$\rpm$1.43}          & 5.68$\rpm$1.42                  & \multicolumn{1}{c|}{290.62$\rpm$2.11} & \textbf{4.01$\rpm$1.51} \\ \cline{2-8} 
\multicolumn{1}{|c|}{}                              & itemKNN                 & \multicolumn{1}{l|}{3.94$\rpm$1.94}  & \multicolumn{1}{l|}{3.95$\rpm$1.92}  & \multicolumn{1}{l|}{\textbf{2.92$\rpm$1.60}} & 2.96$\rpm$1.57                  & \multicolumn{1}{c|}{369.16$\rpm$2.51} & 3.25$\rpm$1.59          \\ \cline{2-8} 
\multicolumn{1}{|c|}{}                              & ALS                     & \multicolumn{1}{l|}{10.00$\rpm$3.47} & \multicolumn{1}{l|}{10.31$\rpm$3.65} & \multicolumn{1}{l|}{9.29$\rpm$2.03}          & 9.80$\rpm$2.23                  & \multicolumn{1}{c|}{297.07$\rpm$2.29} & \textbf{5.25$\rpm$2.38} \\ \hline
\multicolumn{2}{|c|}{}                                                        & \multicolumn{4}{c|}{sample set size n = 1000}                                                                                                    & \multicolumn{2}{c|}{}                                       \\ \hline
\multicolumn{1}{|c|}{\multirow{5}{*}{ml-20m}}       & EASE                    & \multicolumn{1}{l|}{1.39$\rpm$0.21}  & \multicolumn{1}{l|}{1.69$\rpm$0.28}  & \multicolumn{1}{l|}{1.81$\rpm$0.46}          & 1.73$\rpm$0.46                  & \multicolumn{1}{c|}{899.89$\rpm$1.90} & \textbf{1.07$\rpm$0.24} \\ \cline{2-8} 
\multicolumn{1}{|c|}{}                              & MultiVAE                & \multicolumn{1}{l|}{2.23$\rpm$0.58}  & \multicolumn{1}{l|}{2.91$\rpm$0.72}  & \multicolumn{1}{l|}{3.55$\rpm$1.23}          & 2.98$\rpm$1.50                  & \multicolumn{1}{c|}{771.26$\rpm$1.84} & \textbf{1.10$\rpm$0.39} \\ \cline{2-8} 
\multicolumn{1}{|c|}{}                              & NeuMF                   & \multicolumn{1}{l|}{0.82$\rpm$0.30}  & \multicolumn{1}{l|}{0.85$\rpm$0.28}  & \multicolumn{1}{l|}{1.51$\rpm$0.66}          & 1.69$\rpm$0.70                  & \multicolumn{1}{c|}{758.45$\rpm$1.61} & \textbf{0.78$\rpm$0.27} \\ \cline{2-8} 
\multicolumn{1}{|c|}{}                              & itemKNN                 & \multicolumn{1}{l|}{1.84$\rpm$0.24}  & \multicolumn{1}{l|}{2.13$\rpm$0.27}  & \multicolumn{1}{l|}{1.97$\rpm$0.42}          & 2.17$\rpm$0.49                  & \multicolumn{1}{c|}{725.72$\rpm$1.49} & \textbf{1.17$\rpm$0.28} \\ \cline{2-8} 
\multicolumn{1}{|c|}{}                              & ALS                     & \multicolumn{1}{l|}{9.41$\rpm$0.97}  & \multicolumn{1}{l|}{12.83$\rpm$1.27} & \multicolumn{1}{l|}{10.63$\rpm$2.53}         & 10.57$\rpm$3.18                 & \multicolumn{1}{c|}{705.76$\rpm$1.56} & \textbf{4.29$\rpm$1.05} \\ \hline
\end{tabular}
}
\end{table*}

\subsection{Estimating Winner of Recommender Models (Q4)}

Besides on the estimation accuracy that we studied, we also care about could the estimator correctly find the best recommendation model(s)? Initially, the reason that we compare the performance of various recommendation models (among \textit{EASE, ALS, itemKNN, NeuMF, MultiVAE} in this chapter) is try to find the best model(s) to deploy. The best model is determined by the global top-K metric in \Cref{eq:def_metric_1} ($\{ R_u\}$). Thus it is also meaningful to validate whether our estimated metric $\widehat{T}$ could find the correct "winner" (best model) as original metric $T$.
\Cref{tab:adaptive_winner_all} indicates the results of among the $100$ repeats, how many times that a estimator could find the best recommendation algorithm for a given metric (Recall, NDCG, AP). From the \Cref{tab:adaptive_winner_all}, We observe that new proposed estimators in \Cref{sec:fix_estimator_learning_metric,sec:fix_estimator_learning_rank} could generally achieve competitive or even better results to the baseline BV\citep{krichene@kdd20}. For adaptive estimator, it could achieve best accuracy in most case while cost less in average.

\begin{table}[]
\centering
\caption{Accuracy of estimating winner (of the recommendation models). Values in the table is the number of correcting predict the winner out of 100 repeat test. We highlight for adaptive estimator if it achieve best performance w.r.t each metric (each row).}
\label{tab:adaptive_winner_all}
\resizebox{0.8\textwidth}{!}{%

\begin{tabular}{|c|c|c|ccccccc|c|}
\hline
\multirow{2}{*}{Dataset}      & \multirow{2}{*}{Top-K} & \multirow{2}{*}{Metric} & \multicolumn{7}{c|}{Fix Sample}                                                                                                                                                       & Adaptive Sample   \\ \cline{4-11} 
                              &                        &                                 & \multicolumn{1}{c|}{MES} & \multicolumn{1}{c|}{MLE} & \multicolumn{1}{c|}{BV}  & \multicolumn{1}{c|}{BV\_MES} & \multicolumn{1}{c|}{BV\_MLE} & \multicolumn{1}{c|}{MN\_MES} & MN\_MLE & adaptive MLE      \\ \hline
                              &                        &                                 & \multicolumn{7}{c|}{sample set size n = 500}                                                                                                                                          & size 260$\sim$310 \\ \hline
\multirow{9}{*}{pinterest-20} & \multirow{3}{*}{5}     & RECALL                          & \multicolumn{1}{c|}{53}  & \multicolumn{1}{c|}{56}  & \multicolumn{1}{c|}{61}  & \multicolumn{1}{c|}{54}      & \multicolumn{1}{c|}{57}      & \multicolumn{1}{c|}{53}      & 53      & \textbf{69}       \\ \cline{3-11} 
                              &                        & NDCG                            & \multicolumn{1}{c|}{51}  & \multicolumn{1}{c|}{54}  & \multicolumn{1}{c|}{60}  & \multicolumn{1}{c|}{52}      & \multicolumn{1}{c|}{54}      & \multicolumn{1}{c|}{51}      & 52      & \textbf{71}       \\ \cline{3-11} 
                              &                        & AP                              & \multicolumn{1}{c|}{51}  & \multicolumn{1}{c|}{53}  & \multicolumn{1}{c|}{58}  & \multicolumn{1}{c|}{51}      & \multicolumn{1}{c|}{53}      & \multicolumn{1}{c|}{51}      & 51      & \textbf{60}       \\ \cline{2-11} 
                              & \multirow{3}{*}{10}    & RECALL                          & \multicolumn{1}{c|}{66}  & \multicolumn{1}{c|}{66}  & \multicolumn{1}{c|}{69}  & \multicolumn{1}{c|}{69}      & \multicolumn{1}{c|}{73}      & \multicolumn{1}{c|}{67}      & 69      & \textbf{78}       \\ \cline{3-11} 
                              &                        & NDCG                            & \multicolumn{1}{c|}{55}  & \multicolumn{1}{c|}{58}  & \multicolumn{1}{c|}{65}  & \multicolumn{1}{c|}{58}      & \multicolumn{1}{c|}{59}      & \multicolumn{1}{c|}{58}      & 60      & \textbf{84}       \\ \cline{3-11} 
                              &                        & AP                              & \multicolumn{1}{c|}{53}  & \multicolumn{1}{c|}{55}  & \multicolumn{1}{c|}{61}  & \multicolumn{1}{c|}{54}      & \multicolumn{1}{c|}{57}      & \multicolumn{1}{c|}{52}      & 52      & \textbf{68}       \\ \cline{2-11} 
                              & \multirow{3}{*}{20}    & RECALL                          & \multicolumn{1}{c|}{69}  & \multicolumn{1}{c|}{69}  & \multicolumn{1}{c|}{75}  & \multicolumn{1}{c|}{69}      & \multicolumn{1}{c|}{73}      & \multicolumn{1}{c|}{70}      & 74      & \textbf{81}       \\ \cline{3-11} 
                              &                        & NDCG                            & \multicolumn{1}{c|}{69}  & \multicolumn{1}{c|}{69}  & \multicolumn{1}{c|}{78}  & \multicolumn{1}{c|}{69}      & \multicolumn{1}{c|}{73}      & \multicolumn{1}{c|}{68}      & 73      & \textbf{79}       \\ \cline{3-11} 
                              &                        & AP                              & \multicolumn{1}{c|}{55}  & \multicolumn{1}{c|}{58}  & \multicolumn{1}{c|}{62}  & \multicolumn{1}{c|}{57}      & \multicolumn{1}{c|}{60}      & \multicolumn{1}{c|}{54}      & 56      & \textbf{69}       \\ \hline
                              &                        &                                 & \multicolumn{7}{c|}{sample set size n = 500}                                                                                                                                          & size 280$\sim$370 \\ \hline
\multirow{9}{*}{yelp}         & \multirow{3}{*}{5}     & RECALL                          & \multicolumn{1}{c|}{75}  & \multicolumn{1}{c|}{94}  & \multicolumn{1}{c|}{97}  & \multicolumn{1}{c|}{95}      & \multicolumn{1}{c|}{94}      & \multicolumn{1}{c|}{100}     & 100     & 96                \\ \cline{3-11} 
                              &                        & NDCG                            & \multicolumn{1}{c|}{73}  & \multicolumn{1}{c|}{89}  & \multicolumn{1}{c|}{97}  & \multicolumn{1}{c|}{95}      & \multicolumn{1}{c|}{94}      & \multicolumn{1}{c|}{100}     & 100     & 84                \\ \cline{3-11} 
                              &                        & AP                              & \multicolumn{1}{c|}{71}  & \multicolumn{1}{c|}{87}  & \multicolumn{1}{c|}{97}  & \multicolumn{1}{c|}{94}      & \multicolumn{1}{c|}{94}      & \multicolumn{1}{c|}{100}     & 100     & 80                \\ \cline{2-11} 
                              & \multirow{3}{*}{10}    & RECALL                          & \multicolumn{1}{c|}{88}  & \multicolumn{1}{c|}{95}  & \multicolumn{1}{c|}{100} & \multicolumn{1}{c|}{98}      & \multicolumn{1}{c|}{97}      & \multicolumn{1}{c|}{100}     & 100     & \textbf{100}      \\ \cline{3-11} 
                              &                        & NDCG                            & \multicolumn{1}{c|}{82}  & \multicolumn{1}{c|}{94}  & \multicolumn{1}{c|}{98}  & \multicolumn{1}{c|}{96}      & \multicolumn{1}{c|}{96}      & \multicolumn{1}{c|}{100}     & 100     & \textbf{100}      \\ \cline{3-11} 
                              &                        & AP                              & \multicolumn{1}{c|}{76}  & \multicolumn{1}{c|}{94}  & \multicolumn{1}{c|}{97}  & \multicolumn{1}{c|}{95}      & \multicolumn{1}{c|}{95}      & \multicolumn{1}{c|}{100}     & 100     & 94                \\ \cline{2-11} 
                              & \multirow{3}{*}{20}    & RECALL                          & \multicolumn{1}{c|}{100} & \multicolumn{1}{c|}{100} & \multicolumn{1}{c|}{100} & \multicolumn{1}{c|}{100}     & \multicolumn{1}{c|}{100}     & \multicolumn{1}{c|}{100}     & 100     & \textbf{100}      \\ \cline{3-11} 
                              &                        & NDCG                            & \multicolumn{1}{c|}{94}  & \multicolumn{1}{c|}{98}  & \multicolumn{1}{c|}{100} & \multicolumn{1}{c|}{100}     & \multicolumn{1}{c|}{100}     & \multicolumn{1}{c|}{100}     & 100     & \textbf{100}      \\ \cline{3-11} 
                              &                        & AP                              & \multicolumn{1}{c|}{82}  & \multicolumn{1}{c|}{94}  & \multicolumn{1}{c|}{99}  & \multicolumn{1}{c|}{97}      & \multicolumn{1}{c|}{96}      & \multicolumn{1}{c|}{100}     & 100     & 98                \\ \hline
                              &                        &                                 & \multicolumn{7}{c|}{sample set size n = 1000}                                                                                                                                         & size 700$\sim$900 \\ \hline
\multirow{9}{*}{ml-20m}       & \multirow{3}{*}{5}     & RECALL                          & \multicolumn{1}{c|}{100} & \multicolumn{1}{c|}{100} & \multicolumn{1}{c|}{100} & \multicolumn{1}{c|}{100}     & \multicolumn{1}{c|}{100}     & \multicolumn{1}{c|}{100}     & 100     & \textbf{100}      \\ \cline{3-11} 
                              &                        & NDCG                            & \multicolumn{1}{c|}{96}  & \multicolumn{1}{c|}{100} & \multicolumn{1}{c|}{100} & \multicolumn{1}{c|}{100}     & \multicolumn{1}{c|}{100}     & \multicolumn{1}{c|}{98}      & 98      & \textbf{100}      \\ \cline{3-11} 
                              &                        & AP                              & \multicolumn{1}{c|}{91}  & \multicolumn{1}{c|}{100} & \multicolumn{1}{c|}{100} & \multicolumn{1}{c|}{100}     & \multicolumn{1}{c|}{100}     & \multicolumn{1}{c|}{96}      & 96      & \textbf{100}      \\ \cline{2-11} 
                              & \multirow{3}{*}{10}    & RECALL                          & \multicolumn{1}{c|}{100} & \multicolumn{1}{c|}{100} & \multicolumn{1}{c|}{100} & \multicolumn{1}{c|}{100}     & \multicolumn{1}{c|}{100}     & \multicolumn{1}{c|}{100}     & 100     & \textbf{100}      \\ \cline{3-11} 
                              &                        & NDCG                            & \multicolumn{1}{c|}{100} & \multicolumn{1}{c|}{100} & \multicolumn{1}{c|}{100} & \multicolumn{1}{c|}{100}     & \multicolumn{1}{c|}{100}     & \multicolumn{1}{c|}{100}     & 100     & \textbf{100}      \\ \cline{3-11} 
                              &                        & AP                              & \multicolumn{1}{c|}{98}  & \multicolumn{1}{c|}{100} & \multicolumn{1}{c|}{100} & \multicolumn{1}{c|}{100}     & \multicolumn{1}{c|}{100}     & \multicolumn{1}{c|}{100}     & 100     & \textbf{100}      \\ \cline{2-11} 
                              & \multirow{3}{*}{20}    & RECALL                          & \multicolumn{1}{c|}{100} & \multicolumn{1}{c|}{100} & \multicolumn{1}{c|}{100} & \multicolumn{1}{c|}{100}     & \multicolumn{1}{c|}{100}     & \multicolumn{1}{c|}{100}     & 100     & \textbf{100}      \\ \cline{3-11} 
                              &                        & NDCG                            & \multicolumn{1}{c|}{100} & \multicolumn{1}{c|}{100} & \multicolumn{1}{c|}{100} & \multicolumn{1}{c|}{100}     & \multicolumn{1}{c|}{100}     & \multicolumn{1}{c|}{100}     & 100     & \textbf{100}      \\ \cline{3-11} 
                              &                        & AP                              & \multicolumn{1}{c|}{100} & \multicolumn{1}{c|}{100} & \multicolumn{1}{c|}{100} & \multicolumn{1}{c|}{100}     & \multicolumn{1}{c|}{100}     & \multicolumn{1}{c|}{100}     & 100     & \textbf{100}      \\ \hline
\end{tabular}
}
\end{table}

\subsection{Sucess of User Sampling (Q5)}
In this subsection, we empirically show the results of user sampling. \cref{tab:recall_sample_user,tab:ndcg_sample_user} compares user sampling method with the estimator $MN\_MES$ in terms of $T_{Recall@K}$ and $T_{NDCG@K}$ ($K$ from $1$ to $50$). In general, we could conclude that even with small portion of users, for example 1000 ($0.7\%$) sampled users compared to its total 137,000 users for \textit{ml-20m} dataset, the user-sampling based method could achieve pretty accurate results ($4\%\sim 8\%$ relative errors for $Top-K$ from 1 to 50). In addition, as the size of sampled user increase, it could be significantly close to the true results. For instance, with $10000$ ($7\%$) users be sampled for \textit{ml-20m} dataset, it can achieve as small as $1\%$ relative errors. This consistent empirical results together with demo example in \Cref{eq:moe_example} indicates the effectiveness of user sampling.
Noting that, according to \Cref{eq:m_moe}, the accuracy is not quite related with user size $M$, which suggest that for some very huge dataset (for example, $M>>1$ million and $M>>N$, this is quite common in e-commerce), user-sampling based estimation can be more practical and fundamentally efficient than item-sampling based estimation.

\begin{table}[]
\centering
\caption{The average relative errors between estimated $\hat{T}_{Recall@K}$ ($K$ from $1$ to $50$) and the ground true ones. Unit is $\%$. $item 100$ and  $item 500$ are the results of item-sampling based estimator $MN\_MES$ with sample set size $n=100$ and $n=500$. $user 1K$, etc are the results of user-sampling based unbiased average estimation (from \Cref{sec:user_sampling}) with $1K$ user sampled.}
\label{tab:recall_sample_user}
\resizebox{0.9\textwidth}{!}{%
\begin{small}
\begin{tabular}{|c|c|llccc|}
\hline
\multirow{2}{*}{dataset}      & \multirow{2}{*}{model} & \multicolumn{5}{c|}{Recall}                                                                                                                                \\ \cline{3-7} 
                              &                        & \multicolumn{1}{c|}{item 100}     & \multicolumn{1}{c|}{item 500}     & \multicolumn{1}{c|}{user 1K}      & \multicolumn{1}{c|}{user 5K}     & user 10K    \\ \hline
\multirow{5}{*}{pinterest-20} & EASE                   & \multicolumn{1}{l|}{5.00$\rpm$1.39}  & \multicolumn{1}{l|}{2.78$\rpm$1.05}  & \multicolumn{1}{c|}{9.04$\rpm$4.32}  & \multicolumn{1}{c|}{3.85$\rpm$1.76} & 2.65$\rpm$1.33 \\ \cline{2-7} 
                              & MultiVAE               & \multicolumn{1}{l|}{2.75$\rpm$1.66}  & \multicolumn{1}{l|}{2.60$\rpm$1.30}  & \multicolumn{1}{c|}{9.54$\rpm$4.41}  & \multicolumn{1}{c|}{4.34$\rpm$1.94} & 3.18$\rpm$1.35 \\ \cline{2-7} 
                              & NeuMF                  & \multicolumn{1}{l|}{3.60$\rpm$1.50}  & \multicolumn{1}{l|}{2.76$\rpm$1.37}  & \multicolumn{1}{c|}{10.43$\rpm$4.56} & \multicolumn{1}{c|}{4.66$\rpm$2.17} & 3.11$\rpm$1.39 \\ \cline{2-7} 
                              & itemKNN                & \multicolumn{1}{l|}{5.09$\rpm$1.15}  & \multicolumn{1}{l|}{2.79$\rpm$1.12}  & \multicolumn{1}{c|}{8.91$\rpm$4.27}  & \multicolumn{1}{c|}{3.64$\rpm$1.52} & 2.65$\rpm$1.22 \\ \cline{2-7} 
                              & ALS                    & \multicolumn{1}{l|}{2.86$\rpm$1.27}  & \multicolumn{1}{l|}{3.02$\rpm$1.32}  & \multicolumn{1}{c|}{10.24$\rpm$4.96} & \multicolumn{1}{c|}{4.19$\rpm$2.05} & 3.25$\rpm$1.39 \\ \hline
\multirow{5}{*}{yelp}         & EASE                   & \multicolumn{1}{l|}{4.84$\rpm$2.17}  & \multicolumn{1}{l|}{3.47$\rpm$1.79}  & \multicolumn{1}{c|}{12.21$\rpm$5.98} & \multicolumn{1}{c|}{4.97$\rpm$2.16} & 3.62$\rpm$1.84 \\ \cline{2-7} 
                              & MultiVAE               & \multicolumn{1}{l|}{4.30$\rpm$1.27}  & \multicolumn{1}{l|}{4.68$\rpm$2.27}  & \multicolumn{1}{c|}{15.70$\rpm$6.78} & \multicolumn{1}{c|}{6.58$\rpm$2.51} & 4.45$\rpm$1.83 \\ \cline{2-7} 
                              & NeuMF                  & \multicolumn{1}{l|}{13.46$\rpm$2.43} & \multicolumn{1}{l|}{5.48$\rpm$1.43}  & \multicolumn{1}{c|}{12.45$\rpm$6.75} & \multicolumn{1}{c|}{5.69$\rpm$2.96} & 4.23$\rpm$2.00 \\ \cline{2-7} 
                              & itemKNN                & \multicolumn{1}{l|}{7.74$\rpm$2.08}  & \multicolumn{1}{l|}{2.92$\rpm$1.60}  & \multicolumn{1}{c|}{11.62$\rpm$6.01} & \multicolumn{1}{c|}{4.59$\rpm$1.77} & 3.33$\rpm$1.63 \\ \cline{2-7} 
                              & ALS                    & \multicolumn{1}{l|}{11.68$\rpm$0.88} & \multicolumn{1}{l|}{9.29$\rpm$2.03}  & \multicolumn{1}{c|}{15.36$\rpm$6.79} & \multicolumn{1}{c|}{6.54$\rpm$2.18} & 4.46$\rpm$1.71 \\ \hline
\multirow{5}{*}{ml-20m}       & EASE                   & \multicolumn{1}{l|}{9.07$\rpm$0.61}  & \multicolumn{1}{l|}{2.31$\rpm$0.44}  & \multicolumn{1}{c|}{4.48$\rpm$2.07}  & \multicolumn{1}{c|}{1.92$\rpm$0.91} & 1.33$\rpm$0.68 \\ \cline{2-7} 
                              & MultiVAE               & \multicolumn{1}{l|}{9.82$\rpm$0.28}  & \multicolumn{1}{l|}{4.60$\rpm$0.99}  & \multicolumn{1}{c|}{5.97$\rpm$2.57}  & \multicolumn{1}{c|}{2.49$\rpm$1.04} & 1.80$\rpm$0.73 \\ \cline{2-7} 
                              & NeuMF                  & \multicolumn{1}{l|}{3.64$\rpm$1.05}  & \multicolumn{1}{l|}{1.63$\rpm$0.82}  & \multicolumn{1}{c|}{5.58$\rpm$2.35}  & \multicolumn{1}{c|}{2.35$\rpm$1.10} & 1.67$\rpm$0.80 \\ \cline{2-7} 
                              & itemKNN                & \multicolumn{1}{l|}{14.16$\rpm$0.68} & \multicolumn{1}{l|}{3.61$\rpm$0.61}  & \multicolumn{1}{c|}{5.27$\rpm$2.73}  & \multicolumn{1}{c|}{2.21$\rpm$1.11} & 1.54$\rpm$0.76 \\ \cline{2-7} 
                              & ALS                    & \multicolumn{1}{l|}{36.32$\rpm$0.56} & \multicolumn{1}{l|}{19.33$\rpm$1.93} & \multicolumn{1}{c|}{7.70$\rpm$2.92}  & \multicolumn{1}{c|}{3.13$\rpm$1.13} & 2.22$\rpm$0.85 \\ \hline
\end{tabular}
\end{small}
}
\end{table}

\begin{table}[]
\centering
\caption{The average relative errors between estimated $\widehat{T}_{NDCG@K}$ ($K$ from $1$ to $50$) and the ground true ones. Unit is $\%$. $item 100$ and  $item 500$ are the results of item-sampling based estimator $MN\_MES$ with sample set size $n=100$ and $n=500$. $user 1K$, etc are the results of user-sampling based unbiased average estimation (from \Cref{sec:user_sampling}) with $1K$ user sampled.}
\label{tab:ndcg_sample_user}
\resizebox{0.9\textwidth}{!}{%
\begin{scriptsize}
\begin{tabular}{|c|c|llccc|}
\hline
\multirow{2}{*}{dataset}      & \multirow{2}{*}{model} & \multicolumn{5}{c|}{NDCG}                                                                                                                                  \\ \cline{3-7} 
                              &                        & \multicolumn{1}{c|}{item 100}     & \multicolumn{1}{c|}{item 500}     & \multicolumn{1}{c|}{user 1K}      & \multicolumn{1}{c|}{user 5K}     & user 10K    \\ \cline{1-7}
\multirow{5}{*}{pinterest-20} & EASE                   & \multicolumn{1}{l|}{9.35$\rpm$3.09}  & \multicolumn{1}{l|}{4.17$\rpm$2.45}  & \multicolumn{1}{c|}{11.02$\rpm$6.81} & \multicolumn{1}{c|}{4.21$\rpm$2.72} & 3.05$\rpm$1.99 \\ \cline{2-7}
                              & MultiVAE               & \multicolumn{1}{l|}{3.13$\rpm$2.08}  & \multicolumn{1}{l|}{3.26$\rpm$2.14}  & \multicolumn{1}{c|}{10.48$\rpm$6.52} & \multicolumn{1}{c|}{4.88$\rpm$2.90} & 3.68$\rpm$2.06 \\ \cline{2-7}
                              & NeuMF                  & \multicolumn{1}{l|}{4.27$\rpm$2.44}  & \multicolumn{1}{l|}{3.24$\rpm$2.30}  & \multicolumn{1}{c|}{11.99$\rpm$6.56} & \multicolumn{1}{c|}{5.13$\rpm$3.11} & 3.55$\rpm$2.04 \\ \cline{2-7}
                              & itemKNN                & \multicolumn{1}{l|}{9.69$\rpm$2.74}  & \multicolumn{1}{l|}{4.23$\rpm$2.47}  & \multicolumn{1}{c|}{10.46$\rpm$6.72} & \multicolumn{1}{c|}{3.92$\rpm$2.42} & 2.96$\rpm$1.68 \\ \cline{2-7}
                              & ALS                    & \multicolumn{1}{l|}{3.70$\rpm$2.00}  & \multicolumn{1}{l|}{3.90$\rpm$2.24}  & \multicolumn{1}{c|}{11.29$\rpm$7.46} & \multicolumn{1}{c|}{4.54$\rpm$3.06} & 3.57$\rpm$1.98 \\ \cline{1-7}
\multirow{5}{*}{yelp}         & EASE                   & \multicolumn{1}{l|}{5.36$\rpm$2.40}  & \multicolumn{1}{l|}{4.03$\rpm$2.53}  & \multicolumn{1}{c|}{12.83$\rpm$8.00} & \multicolumn{1}{c|}{5.56$\rpm$3.32} & 4.02$\rpm$2.74 \\ \cline{2-7}
                              & MultiVAE               & \multicolumn{1}{l|}{4.31$\rpm$1.90}  & \multicolumn{1}{l|}{5.77$\rpm$3.87}  & \multicolumn{1}{c|}{16.69$\rpm$9.69} & \multicolumn{1}{c|}{7.37$\rpm$4.14} & 4.77$\rpm$2.62 \\ \cline{2-7}
                              & NeuMF                  & \multicolumn{1}{l|}{22.50$\rpm$2.33} & \multicolumn{1}{l|}{8.43$\rpm$4.07}  & \multicolumn{1}{c|}{14.24$\rpm$9.33} & \multicolumn{1}{c|}{6.78$\rpm$4.86} & 5.08$\rpm$3.35 \\ \cline{2-7}
                              & itemKNN                & \multicolumn{1}{l|}{10.53$\rpm$2.14} & \multicolumn{1}{l|}{3.65$\rpm$2.28}  & \multicolumn{1}{c|}{12.79$\rpm$7.81} & \multicolumn{1}{c|}{4.83$\rpm$2.66} & 3.56$\rpm$2.54 \\ \cline{2-7}
                              & ALS                    & \multicolumn{1}{l|}{16.91$\rpm$3.33} & \multicolumn{1}{l|}{12.57$\rpm$5.46} & \multicolumn{1}{c|}{16.10$\rpm$9.03} & \multicolumn{1}{c|}{6.91$\rpm$3.06} & 4.60$\rpm$2.37 \\ \cline{1-7}
\multirow{5}{*}{ml-20m}       & EASE                   & \multicolumn{1}{l|}{18.98$\rpm$0.89} & \multicolumn{1}{l|}{5.59$\rpm$1.49}  & \multicolumn{1}{c|}{4.94$\rpm$3.27}  & \multicolumn{1}{c|}{2.24$\rpm$1.40} & 1.57$\rpm$1.04 \\ \cline{2-7}
                              & MultiVAE               & \multicolumn{1}{l|}{16.28$\rpm$1.56} & \multicolumn{1}{l|}{7.01$\rpm$2.18}  & \multicolumn{1}{c|}{6.31$\rpm$3.64}  & \multicolumn{1}{c|}{2.77$\rpm$1.55} & 1.90$\rpm$1.09 \\ \cline{2-7}
                              & NeuMF                  & \multicolumn{1}{l|}{5.67$\rpm$1.24}  & \multicolumn{1}{l|}{2.10$\rpm$1.31}  & \multicolumn{1}{c|}{5.74$\rpm$3.26}  & \multicolumn{1}{c|}{2.68$\rpm$1.80} & 1.91$\rpm$1.27 \\ \cline{2-7}
                              & itemKNN                & \multicolumn{1}{l|}{28.66$\rpm$1.00} & \multicolumn{1}{l|}{7.40$\rpm$1.60}  & \multicolumn{1}{c|}{6.04$\rpm$4.00}  & \multicolumn{1}{c|}{2.43$\rpm$1.60} & 1.70$\rpm$1.17 \\ \cline{2-7}
                              & ALS                    & \multicolumn{1}{l|}{52.19$\rpm$2.50} & \multicolumn{1}{l|}{26.31$\rpm$3.81} & \multicolumn{1}{c|}{7.99$\rpm$3.90}  & \multicolumn{1}{c|}{3.26$\rpm$1.59} & 2.33$\rpm$1.20 \\ \cline{1-7}
\end{tabular}
\end{scriptsize}
}
\end{table}

\section{Summary}

In this chapter, we holistically discuss the story of sampling-based top-K recommendation evaluation. Starting from the "inconsistent" phenomenon that was first discovered in \citep{rendle2019evaluation}, we \citep{Li@KDD20} observe and propose the alignment theory in terms of Recall curve in \Cref{sec:map}, at the same time of \citep{krichene@kdd20}. Then to we propose two estimators $MES$ and $MLE$ in \Cref{sec:fix_estimator_learning_rank}, which can not only estimate the global user rank distribution $P(R)$ but also help estimate the global true metric \citep{dong@2023aaai}. Consequently, we propose first item-sampling estimators in  \Cref{sec:fix_estimator_learning_metric} which explicitly optimizes its mean square error with respect to the ground truth assuming the prior of rank distribution is given. And we highlight the subtle difference between the estimators from \cite{krichene@kdd20} and ours and point out the potential issue of the former - failing to link the user size with the variance. \citep{dong@2023aaai} Furthermore, we address the limitation of the current item sampling approach, which typically does not have sufficient granularity to recover the top $K$ global metrics when $K$ is small. We then propose an effective adaptive item-sampling method in \Cref{sec:adaptive_estimator}. In the end, we discuss another sampling evaluation strategy from the user perspective in \Cref{sec:user_sampling}. The experimental evaluation demonstrates the effectiveness of the estimators. Our results provide a solid step towards making item sampling as well as user sampling available for recommendation research and practice.

\chapter{Towards a Better Understanding of Linear Models
for Recommendation}\label{chpt:linear}

\section{Introduction}\label{sec:linear-intro}

Over the last 25 years, we have witnessed a blossom of recommendation algorithms being proposed and developed\cite{charubook,zhang2019deep}. Though the number of (top-n) recommendation algorithms is fairly large, the approaches can be largely classified as neighborhood approaches (including the regression-based approaches), matrix factorization (or latent factors) approaches,  more recent deep learning based approaches, their probabilistic variants, and others\cite{charubook,zhang2019deep}. 
However, there have been some interesting debates on the results being reported by recent deep learning-based approaches:  the experimental results show most of these methods seem to achieve sub-par results when compared with their simple/nonlinear counterparts~\cite{RecSys19Evaluation}. This issue also relates to selecting and tuning baselines ~\cite{Steffen@19DBLP} as well as the choice of evaluation metrics ~\cite{WalidRendle20}, among others. But recent studies also seem to confirm the state-of-the-art linear models, such as SLIM~\cite{slim01} and EASE~\cite{Steck_2019} do obtain rather remarkable results when compared to the more sophisticated counterparts~\cite{DacremaBJ21}.

Intuitively, SLIM and EASE search an item-to-item similarity matrix $W$ so that the user-item interaction (denoted as matrix $X$ with rows and columns corresponding to users and items, respectively) can be recovered by the matrix product: $XW$. In fact, they can be considered as simplified linear auto-encoders~\cite{DBLP:conf/nips/Steck20}, where $W$ serves as both encoder and decoder, denoted as $L_W$, such that $L_W(x_u)=x_u W$ can be used to recover $x_u$ ($x_u$ is the $u$-th row vector of user-item interaction matrix $X$).  
In the meantime, the matrix factorization methods, such as ALS~\cite{hu2008collaborative} and SVD-based approaches~\cite{mfsurvey} have been heavily favored and widely adopted in the industry for recommendation. They aim to discover user and item embedding matrices $P$ and $Q$, where $p_u$, $q_i$ represents the latent factors of user $u$ and item $i$, respectively, such that the user item interaction $x_{ui}$ can be approximated by $q_i^T p_u$.  
Furthermore, there has been a list of {\em low-rank} regression approaches~\cite{DBLP:conf/pakdd/ChristakopoulouK14,LFlow16,DBLP:conf/nips/Steck20} which aim to factorize the similarity matrix $W$ as $AB^T$, such that $XAB^T$ can be used to recover $X$. Here, $XA$ and $B$ also introduce the user and item  matrices, respectively (similar to matrix factorization).

Thus, on the surface, we can see the (reduced-rank) regression is like a special case of matrix factorization~\cite{fism13,LRec16}, and it also has seemingly smaller number of parameters (the size of similarity matrix $W$ is typically much smaller than the user and item latent factor matrix as the number of users tend to be much larger than items). Therefore, the expectation is the regression models are more restricted, and thus less flexible (and expressive) than the matrix factorization approaches. However, the recent results seem to indicate the regression approaches tend to perform better than the matrix factorization approaches in terms of commonly used evaluation criteria, such as Recall and nDCG~\cite{Steck_2019,DBLP:conf/nips/Steck20,DacremaBJ21}.

Is there any underlying factor/reason for the regression approaches to perform better than the matrix factorization approaches?  If and how these two different factorization (low-rank regression vs matrix factorization) relate to one another? To seek the connection and be able to compare/analyze their inherent advantage/weakness, can we unify them under the same framework? 
As most of the deep learning, probabilistic and non-linear approaches all have the core representation  from either factorization~\cite{he2017neural} or auto-encoders~\cite{liang2018variational}, the answer to these questions will not only help understand two of the (arguably) most important recommendation methodologies: neighborhood vs matrix factorization, but also help design more sophisticated deep learning based approaches.  To our surprise, the theoretical analyses of these approaches are still lacking and the aforementioned questions remain unanswered. 

In this chapter, by analyzing two basic (low-rank) regression and matrix factorization models, we are able to derive and compare their closed-form solutions in a unified framework. We reveal that both methods essentially ``scale-down'' their singular values of user-item interaction matrix using slightly different mechanisms. The (low-rank) regression mechanism allows the use of more {\em principal components} (latent dimensions) of the user-item matrix $X$ than the matrix factorization approaches. Thus, this potentially provides an inherent advantage to the former methods over the latter. Another surprising discovery is that although the matrix factorization seems to have more model parameters with both user and item latent factor matrices, its optimal solution for the simplified problem suggests that it is actually only dependent on the item matrix. Thus, it is actually more restricted and less flexible than the regression based approaches.  This again indicates the potential disadvantage of the matrix factorization approaches. 

To help further understand how the singular values of user-item interaction matrix can be adjusted (at individual level), we introduce a novel learning algorithm that can search through high dimensional continuous (hyper)parameter space. This learning algorithm also enables us to perform the post-model fitting exploration ~\cite{guan2018post} for existing linear recommendation models. Our approach is to augment (existing) linear models with additional parameters to help further improve the model's accuracy. The resulting models remain as linear models, which can be considered as {\em nearby} models with respect to the existing models. 
This approach indeed shares a similar spirit of our recently proposed ``next-door analysis'' from statistical learning~\cite{hastie@statisticallearning} though our approaches and targets are quite different. 
To the best of our knowledge, this is the first work to study post-model fitting exploration for recommendation models. Such a study can not only help better evaluate the optimality of the learned models but also (potentially) produce an additional boost for the learned models. 

As pointed out by ~\cite{DacremaBJ21}, a major problem in existing recommendation research is that the authors tend to focus on developing new methods or variants of recommendation algorithms, and then validate based on ``hyper-focus on abstract metrics" with often weak or not-fully-tuned baselines to ``prove'' the progress. Though using better baselines, datasets, and evaluation metrics can help address part of the problem, a better understanding of how, why, and where the improvement over existing ones are being made is equally important. We hope the theoretical analysis, the new learning tool for (hyper)parameter search, and the post-model analysis on linear recommendation models can be  part of the remedy for the aforementioned problem.

To sum up, in this chapter, we made the following contribution: 
\begin{itemize}
\item (\cref{pca}) We theoretically investigate the relationship between the reduced-rank regression (neighborhood) approaches and the popular matrix factorization approach using the closed-form solutions, and reveal how they connect with one another naturally (through the lens of well-known principal component analysis and SVD). We also discover some potential factors which may provide a benefit for the regression-based methods.   
\item (\cref{learning}) We introduce a new learning algorithm to help search the high-dimension (hyper)parameter space for the closed-form solution from \cref{pca}. We further apply the learning algorithm to perform post-model exploration analysis on the existing linear models by augmenting them with additional parameters (as nearby models).   
\item (\cref{experiments}) We experimentally validate the closed-form solution for the basic regression and matrix factorization models, and show their (surprising) effectiveness and accuracy comparing against the state-of-the-art linear models; we also experimentally validate the effectiveness of the learning algorithms for the closed-form solutions and identifying nearby models. We show nearby models can indeed boost the existing models in certain datasets. 
 \end{itemize}

\section{Background}\label{problem}
Let the training dataset consists of $m$ users and $n=|I|$ items, where $I$ is the entire set of items. In this chapter, we will focus on the implicit setting for recommendation. Compared with the explicit settings, the implicit setting has more applications in ecommerce, content recommendation, advertisement, among others. It has also been the main subjects for recent top-$n$ recommendation~\cite{Steck_2019,hu2008collaborative,slim01, RecSys19Evaluation,zhang2019deep}.
Here, the user-item interaction matrix $X$ can be considered as a binary matrix, where $x_{ui}=1$ represents there is an interaction between user $u$ and item $i$. If there is no interaction between $u$ and $i$, then $x_{ui}=0$. 
Let $X_u^+=\{j: x_{uj}>0\}$ denote the item set that user $u$ has interacted with, and $X_u^-=I-X_u^+$ to be the item set that $u$ has not interacted with. 

\subsection{Regression Models for Neighborhood-based Recommendation}
It is well-known that there are user-based and item-based neighborhood based collaborative filtering, and the item-based approach has shown to be more effective and accurate compared with user-based approaches~\cite{DeshpandeK@itemKNN}. Thus, most of the linear models are item-based collaborative filtering (ICF). 

Intuitively, the model-based ICF aims to predict $x_{ui}$ (user $u$'s likelihood of interaction with and/or preference of item $i$) based on user $u$'s past interaction with other items $X_u^+$:
\begin{equation}
\hat{x}_{ui}=\sum_{j \in X_u^+} s_{ji} x_{uj}, 
\end{equation}
where $s_{ji}$ denotes the similarity between item $j$ and $i$. 

The initial neighborhood approach uses the statistical measures, such as Pearson correlation and cosine similarity ~\cite{charubook} between the two columns $X_{*i}$ and $X_{*j}$ from items $i$ and $j$. The more recent approaches have been aiming to use a regression approach to directly learn the weight matrix $W$ (which can be considered as the inferred similarity matrix)  so that $||X-XW||_F^2$ ($||\cdot||_F$ denotes the Frobenius norm) is minimized. Clearly, in this formulation, the default solution $W=I$ should be avoided for generalization purpose, and the difference of different approaches lie in the constraints and regularization putting on $W$. Recent studies have shown these approaches achieve comparable or even better performance compared with the state-of-the-art deep learning based approaches ~\cite{DBLP:conf/nips/Steck20,DacremaBJ21}. 

\noindent{\bf SLIM:}
SLIM~\cite{slim01} is one of the first regression-based approach to infer the weight matrix $W$. It considers $W$ to be nonnegative, and regularizing it with $L_1$ and $L_2$ norm (thus {\em ElasticNet})~\cite{elasticnet05}. In addition, $W$ requires to be zero diagonal:
\begin{equation}
\begin{split}
    W & =\arg \min_{W}   \frac{1}{2}||X-XW||^2_F+ \lambda_1||W||_1 + \lambda_2 ||W||_F^2 \\ 
  & s.t.\ \  W\geq 0, diag(W)=0, \\
\end{split}
\end{equation} 
where $||\cdot||_1$ denotes the $L_1$ matrix norm, and $diag(\cdot)$ denotes the diagonal (vector) of the corresponding matrix. 
Since no closed-form solution for $W$, the solver of ElasticNet is used to optimize $W$, and $\lambda_1$ and $\lambda_2$ are the correspondingly regularization hyperparameters. 

There are a quite few variants of SLIM being proposed, including HOLISM~\cite{hoslim14} which extends SLIM to capture higher-order relationship, and LRec~\cite{LRec16} which considers a non-linear logistic loss (instead of squared loss) with no zero diagonal, no negative and $L1$ constraints, among others.  

\noindent{\bf EASE:}
EASE~\cite{Steck_2019} is a recent regression-based model which has shown to improve over SLIM with both speed and accuracy, and quite competitive again the state-of-the-art deep learning models. It simplifies the constraint and regularization enforced by SLIM by removing non-negative and $L1$ constraints: 
\begin{equation}
\begin{split}
W & =\arg \min_{W}   \frac{1}{2}||X-XW||^2_F+ \lambda ||W||_F^2 \\ 
& s.t. \quad diag(W)=0\\
\end{split}
\end{equation} 
Empirical study ~\cite{Steck_2019} basically confirms that the non-negative constraint and $L_1$ norm on matrix $W$ may not be essential (or have negative impact) on the performance. 
Particularly, EASE has a closed-form solution~\cite{Steck_2019}. 

\noindent{\bf DLAE and EDLAE:}
The latest extension of EASE, the DLAE (Denoising linear autoencoder) ~\cite{DBLP:conf/nips/Steck20} utilizes a drop-out induced the $L_2$ norm to replace the standard $L_2$ norm without zero diagonal constraints: 
\begin{equation} \label{eq:DLAE}
W =\arg\min_{W}\frac{1}{2}||X-XW||^2_F+ ||\Lambda^{1/2}W||_F^2 
\end{equation} 
where $\Lambda=\frac{p}{1-p} diagM(diag(X^T X))$ ($diagM(\cdot)$ denotes the diagonal matrix) and $p$ is the dropout probability. 


Another variant EDLAE would explicitly enforce the zero diagonal constraints: 
\begin{equation}
\begin{split}
    W & =\arg \min_{W}   \frac{1}{2}||X-XW||^2_F+ ||\Lambda^{1/2}W||_F^2 \\ 
  & s.t. \ \  diag(W)=0, \\
  \end{split}
\end{equation}
Both DLAE and EDLAE have closed-form solutions~\cite{DBLP:conf/nips/Steck20}. 

\subsubsection{Low-Rank Regression}
There have been a number of interesting studies ~\cite{fism13,LRec16,DBLP:conf/nips/Steck20} on using low-rank regression to factorize the weight/similarity matrix $W$. The latest work ~\cite{DBLP:conf/nips/Steck20} shows a variety of low-rank regression constraints which have been (or can be) used for this purpose: 
\begin{equation}
\begin{split}
||X-XAB^T||_F^2 & +\lambda (||A||_F^2+||B^T||_F^2) \\
    ||X-XAB^T||_F^2 & +\lambda ||AB^T||_F^2 \\
   ||X-XAB^T||_F^2 & + ||(\Lambda+\lambda I)AB^T||_F^2 
\end{split}
\end{equation}
where $A_{n\times k}$,  $B_{n\times k}$, and thus $rank(AB) \leq k$. 
The reduced-rank EDLAE ~\cite{DBLP:conf/nips/Steck20} further enforces zero diagonal constraints for generalization purpose.  

We note that interestingly, the reduced-rank regression solution naturally introduces a $k$-dimensional vector embedding for each user from $XA$ ($m \times k$) , and a $k$-dimensional vector embedding for each item via $B$. This immediately leads to an important question: how such embedding differs from the traditional matrix factorization (MF) approaches which aim to  explicitly decompose $X$ into two latent factor matrices (for users and items). Note that in the past, the MF methods are more popular and widely used in industry, but the recent researches ~\cite{DacremaBJ21} seem to suggest an edge based on the regression-based (or linear autoencoder) approaches over the MF approaches. Before we formalize our question, let us take a quick review of  MF approaches. 

\subsection{Matrix Factorization Approaches}
Matrix factorization has been been widely studied for recommendation and is likely the most popular recommendation method (particularly showing great success in the Netflix competition~\cite{mfsurvey}). 
The basic formula to estimate the rating is 
\begin{equation}
    \hat{x}_{ui}= p_u \cdot q_i = q_i^T p_u, 
\end{equation}
where $p_u$ and $q_i$ are the corresponding $k$-dimensional latent vectors of user $u$ and item $i$, respectively. 
Below, we review several well-known matrix factorization approaches for implicit settings; they differ on how to treat known vs missing interactions and regularization terms, among others.

\noindent{\bf WMF/ALS:}
The implicit Alternating Least Square (ALS) method~\cite{hu2008collaborative} is basically a weighted matrix factorization (WMF): 
\begin{equation}\label{eq:WMFALS}
     \arg \min_{P,Q} ||C \odot (X- PQ^T)||_F^2 + \lambda (||P||_F^2 + ||Q||_F^2), 
\end{equation}
where $P_{m \times k}$ records all the $k$-dimensional latent vectors for users and $Q_{n \times k}$ records all the item latent vectors, and $\lambda$ regularize  the squared Frobenius norm. $C$ is the weight matrix (for binary data, the known score in $X$ typically has $\alpha$ weight and the missing value has $1$), and $\odot$ is the element-wise product. For the general weight matrix, there is no closed-form solution; the authors thus propose using alternating least square solver to optimize the objective function. 

\noindent{\bf PureSVD:}
The Matrix factorization approach is not only closely related to SVD (singular value decomposition), it is actually inspired by it~\cite{charubook}. In the PureSVD approach, the interaction matrix $X$ is factorized using SVD (due to Eckart-Young theorem)~\cite{CremonesiKT@10}: 
\begin{equation}
    \arg \min_{U_k,\Sigma_k,V_k} ||X-U_k \Sigma_k V_k^T||_F^2,
\end{equation}
where $U_k$ is a $m \times k$ orthonormal matrix, $V_k$ is a $n \times k$ orthonormal matrix, and $\Sigma_k$ is a $k \times k$ diagonal matrix containing the first $k$ singular values. 
Thus the user factor matrix can be defined as $P=U_k \Sigma_k$ and the item factor matrix is $Q=V_k$. 

\noindent{\bf SVD++:}
SVD++ ~\cite{Koren08} is another influential matrix factorization approach which also integrate the neighborhood factor. It nicely combines the formulas of  factorization and neighborhood approaches with generalization. It targets only positive user-item ratings and typically works on explicit rating prediction. 

\subsection{The Problem}
As we mentioned earlier, a few recent studies ~\cite{DBLP:conf/nips/Steck20,DacremaBJ21} seem to indicate the regression based approach (or linear autoencoder approach) seem to have better performance than the popular matrix factorization approach, such as ALS~\cite{HuSWY18}. However, if we look at the reduced rank regression approach, we observe its solution can be considered a special case of matrix factorization. Another interesting question is on the regularization hyperparameter, $\lambda$: ALS typically use a much smaller regularization penalty compared with the one used in the regression based approach, such as EASE and low-rank version. The latter's  $\lambda$ value is typically very large, in the range of thousands and even tens of thousands or ~\cite{Steck_2019,DBLP:conf/nips/Steck20}. Note that both aim to regularize the squared Frobenius matrix norm. What results in such discrepancy?  

Another interesting problem is about model complexity of these two approaches. The regression-based (linear auto-encoder) approach uses the similarity matrix $W$ (which has $n\times n$ parameters), and when using low-rank regression, its parameters will be further reduced to $O(n \times k)$ where $k$ is the reduced rank. The MF has both user and item matrices, and thus has $O((m+n)k)$ parameters. This seems to indicate the MF approach should be more flexible than the regression approaches as it tends to have much more parameters (due to number of users is typically much larger than the number of items). But is this the case?   


In this study, our focus is not to experimentally compare these two types of recommendation approaches, but instead to have a better theoretical understanding their differences as well as their connections. Thus, we hope to  understand why and how if any approach maybe more advantageous than the other and along this, we will also investigate why the effective range of their regularization hyper-parameters are so different.
We will also investigate how to learn high dimensional (hyper)parameters and apply it to help perform post-model exploration to learn "nearby" models.

\section{Theoretical Analysis} 
\label{pca}
In this section, we will theoretically investigate the regression and matrix factorization models, and explore their underlying relationships, model complexity, and explain their discrepancy on regularization parameters. 

\subsection{Low-Rank Regression (LRR) Models}
To facilitate our discussion, we consider the following basic low-rank regression models: 
\begin{equation}
\label{eq:lowrank}
    W=\arg \min_{rank(W)\leq k} ||X -XW||_F^2+ 
    ||\mathbf{\Gamma} W||_F^2,
\end{equation}
where  $\mathbf{\Gamma}$ Matrix regularizes the squared Frobenius norm of $W$. (This can be considered as the generalized ridge regression, or multivariant Tikhonov regularization)~\cite{vanwieringen2020lecture}.
For the basic case, $\mathbf{\Gamma}^T \mathbf{\Gamma} =\lambda I$, and $\mathbf{\Gamma}^T \mathbf{\Gamma} =\Lambda=\frac{p}{1-p} diagM(diag(X^T X))$ for DLAE (eq ~\ref{eq:DLAE}) ~\cite{DBLP:conf/nips/Steck20}.
Note that this regularization does not include the zero diagonal requirement for $W$. As we will show in Section~\ref{experiments}, enforcing it only provides minor improvement and thus the basic model can well capture  the essence of (low-rank) regression based recommendation.

To help derive the closed-form solution for above problem, let us represent it as a standard regression problem. 
\begin{equation*}
\overline{Y}= \begin{bmatrix} X  \\ 0    \end{bmatrix} \ \ \ \ \overline{X}=\begin{bmatrix} X  \\ \boldsymbol{\Gamma}   \end{bmatrix}
\end{equation*}

Given this, the original problem can be rewritten as:
\begin{equation*} 
\begin{split}
    \min_{rank(W)\leq k}& ||\overline{Y}-\overline{X}W||_F^2 = \\
    \min_{rank(W) \leq k}& ||\overline{Y}-\overline{X} W^*||_F^2 +
     ||\overline{X}W^*-\overline{X}W||_F^2, 
\end{split}
\end{equation*}
where $W^*=\arg\min ||\overline{Y}-\overline{X} W^*||_F^2$. Basically, the initial loss $||\overline{Y}-\overline{X}W||_F^2$ is decomposed into two parts: $||\overline{Y}-\overline{X} W^*||_F^2$ (no rank constraint), and $||\overline{X}W^*-\overline{X} W||_F^2$. 
Note this holds since the vector ( $\overline{Y}-\overline{X}W^*$) is orthogonal to $\overline{X}W^*-\overline{X}W=\overline{X}(W^*-W)$ (The optimality of Ordinary Least-Square estimation~\cite{vanwieringen2020lecture}).  

Now, the original problem can be broken into two subproblems: 

\noindent{\bf (Subproblem 1:) item-weighted Tikhonov regularization:}
\begin{equation*}
\begin{split}
       W^*&=\arg\min_W ||\overline{Y}-\overline{X}W^*||_F^2=\arg \min_W ||X -XW||_F^2+||\mathbf{\Gamma} W||_F^2 \\
       &= (\overline{X}^T \overline{X})^{-1}\overline{X}^T \overline{Y}  \\
       &= (X^TX+ \Gamma^T \Gamma)^{-1} X^T X
\end{split}
\end{equation*}

\noindent{\bf (Subproblem 2:) low-rank matrix approximation: }
\begin{equation*}
\begin{split}
 \hat{W} & =\arg\min_{rank(W) \leq k} ||\overline{X}W^*-\overline{X}W||_F^2  \\
   &= \arg \min_{rank(W) \leq k} ||X W^* - XW||_F^2+ 
    ||\mathbf{\Gamma} (W^*- W)||_F^2
\end{split}
\end{equation*}
Let $\overline{Y}^*=\overline{X}W^*$, and based on the well-known {\em Eckart-Young} theorem~\cite{eckart1936approximation}, we have the best rank $k$ approximation of  $\overline{Y}^*$ in Frobenius norm is best represented by SVD. 
Let $\overline{Y}^*=P \Sigma Q^T$ ($P,Q$ are orthogonal matrices and $\Sigma$ is the singular value diagonal matrix, and then the best rank $k$ approximation of $\overline{Y}^*$, denoted as $\overline{Y}^*(k)$ is 
\begin{equation}
    \overline{Y}^*(k) =P_k \Sigma_k Q^T_k,
\end{equation} 
where $M_k$ takes the first $k$ rows of matrix $M$. 
We also have the following equation: 
\begin{equation*}
    P \Sigma Q^T (Q_k Q_k^T) = P_k \Sigma_k Q^T_k
\end{equation*}
Given this, we notice that 
\begin{equation*} 
\begin{split}
    \overline{Y}^*(k) &  = P_k \Sigma_k Q^T_k  =P \Sigma Q^T (Q_k Q_k^T) \\
    & = \overline{X} W^* (Q_k Q_k^T)
      =  \overline{X} W
\end{split}
\end{equation*}
Thus, we have
\begin{equation*}
\widehat{W}=W^* (Q_k Q_k^T)=(X^TX+ \Gamma^T \Gamma)^{-1} X^T X (Q_k Q_k^T), 
\end{equation*}
and the complete estimator for $XD$ (interaction/rating inference) is written as: 
\begin{equation} 
\boxed{
\widehat{W}= (X^TX+ \Gamma^T \Gamma)^{-1} X^T X (Q_k Q_k^T)
}
\end{equation}

Next, let us further simplify it using SVD which can better reveal its ``geometric'' insight. 

\subsubsection{Eigen Simplification}
First, let the SVD of $X$ as 
\begin{equation*}
X=U \Sigma V 
\end{equation*}
When $\Gamma=\Lambda^{1/2} V^T $ where $\Lambda$ is a diagonal matrix, we can observe: 
\begin{proposition}
\begin{equation*}
Q_k=V_k
\end{equation*}
\end{proposition}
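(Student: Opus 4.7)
The plan is to substitute the SVD $X = U\Sigma V^T$ and the structured regularizer $\Gamma = \Lambda^{1/2} V^T$ directly into the closed-form expression $\overline{Y}^* = \overline{X} W^*$, show that the result already sits in nearly-diagonal form with $V^T$ on the right, and then read off the SVD.

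First I would compute $X^T X + \Gamma^T\Gamma$. Since $X^T X = V\Sigma^T\Sigma V^T$ and $\Gamma^T\Gamma = V\Lambda V^T$ share the same eigenbasis $V$, the sum factors as $V(\Sigma^T\Sigma + \Lambda)V^T$, so its inverse is $V(\Sigma^T\Sigma + \Lambda)^{-1} V^T$. Plugging into the formula for $W^*$ gives
\begin{equation*}
W^* = V(\Sigma^T\Sigma + \Lambda)^{-1}\Sigma^T\Sigma\, V^T \equiv V D V^T,
\end{equation*}
where $D \coloneqq (\Sigma^T\Sigma+\Lambda)^{-1}\Sigma^T\Sigma$ is diagonal with entries $d_j = \sigma_j^2/(\sigma_j^2+\lambda_j)$. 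A short calculation then shows
\begin{equation*}
X W^* = U\Sigma D V^T, \qquad \Gamma W^* = \Lambda^{1/2} D V^T,
\end{equation*}
so stacking them yields
\begin{equation*}
\overline{Y}^* = \overline{X} W^* = \begin{bmatrix} U\Sigma \\ \Lambda^{1/2} \end{bmatrix} D\, V^T.
\end{equation*}

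Next I would promote this to a genuine SVD. The $j$-th column of the left-hand factor is $\bigl(\sigma_j d_j\, u_j^\top,\; \lambda_j^{1/2} d_j\, e_j^\top\bigr)^\top$. Two different columns $j\neq k$ are orthogonal because the top block inherits orthogonality from $U$ and the bottom block from the standard basis $\{e_j\}$. Their norms are $d_j\sqrt{\sigma_j^2+\lambda_j} = \sigma_j^2/\sqrt{\sigma_j^2+\lambda_j}$. Normalizing the columns yields an orthonormal matrix $P$ and a diagonal matrix $\widetilde{\Sigma}$ of these norms, giving $\overline{Y}^* = P\,\widetilde{\Sigma}\, V^T$. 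By uniqueness of the SVD (up to the usual ambiguities), we can identify $Q = V$, and restricting to the top $k$ singular components yields $Q_k = V_k$.

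The main obstacle is subtle rather than computational: the ordering. For the identification $Q_k = V_k$ to hold column-by-column, we need the mapping $\sigma_j \mapsto \sigma_j^2/\sqrt{\sigma_j^2+\lambda_j}$ to preserve the descending order of singular values. In the basic case $\Lambda = \lambda I$ this is immediate because the map is monotone in $\sigma_j$, and more generally whenever $\lambda_j$ is constant on each eigen-cluster of $X^T X$; otherwise one must interpret the claim up to a permutation that re-sorts the singular values. I would flag this caveat explicitly, since it is precisely the condition that guarantees the reduced-rank truncation selects the top-$k$ right singular vectors of $X$ itself, which is what the subsequent sections will need in order to compare the regression estimator with PureSVD-style factorization.
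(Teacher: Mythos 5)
Your proof is correct and follows essentially the same route as the paper: substitute $X=U\Sigma V^T$ and $\Gamma=\Lambda^{1/2}V^T$ into $W^*=(X^TX+\Gamma^T\Gamma)^{-1}X^TX=V(\Sigma^2+\Lambda)^{-1}\Sigma^2V^T$ and observe that $V$ remains the right singular basis of $\overline{Y}^*=\overline{X}W^*$ (the paper reads this off from $(\overline{Y}^*)^T\overline{Y}^*=V(\cdot)V^T$, while you construct the SVD factors explicitly and obtain the singular values $\sigma_j^2/\sqrt{\sigma_j^2+\lambda_j}$ directly). Your explicit caveat about the ordering — that $Q_k=V_k$ column-for-column requires $\sigma_j\mapsto\sigma_j^2/\sqrt{\sigma_j^2+\lambda_j}$ to preserve the descending order, which is automatic for $\Lambda=\lambda I$ but not for general diagonal $\Lambda$ — is a genuine point the paper silently assumes, and is worth keeping.
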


\begin{equation*}
\overline{Y}^*=\overline{X} W^*= \overline{X} V(\Sigma^{2}+\Lambda)^{-1}\Sigma^2 V^T 
\end{equation*}
Then, from 
\begin{equation*}
\begin{split}
(\overline{Y}^*)^T \overline{Y}^* = V \Sigma^{-2} (\Sigma^{2}+\Lambda) V^T \overline{X}^T \overline{X} V(\Sigma^{2}+\Lambda)^{-1}\Sigma^2 V^T  \\ 
=V (\Sigma^2 + \Lambda) V^T 
\end{split}
\end{equation*}

Then we have the following: 
\begin{equation*}
\begin{split}
\widehat{W} &= (X^TX+ \Gamma^T \Gamma)^{-1} X^T X (Q_k Q_k^T) \\
& = V (\Sigma^2+\Lambda)^{-1}  \Sigma^2 V^T (V_k V_k^T) \\
& = V diag(\frac{\sigma^2_1}{\sigma^2_1+\lambda_1},\dots,\frac{\sigma^2_k}{\sigma^2_n+\lambda_k}) V^T (V_k V_k^T) 
\end{split}
\end{equation*}

Thus, we have the following closed-form solution:  
\begin{equation}
\label{eq:regressionlambda}
\begin{split}
\boxed{
\widehat{W} = V_k  diag(\frac{\sigma^2_1}{\sigma^2_1+\lambda_1},\dots,\frac{\sigma^2_k}{\sigma^2_k+\lambda_k})  V_k^T
}
\end{split}
\end{equation}

Now, if $\lambda_i=\lambda$, we have: 
\begin{equation}\label{eq:regressionpca}
\boxed{
\widehat{W} = V_k diag(\frac{\sigma^2_1}{\sigma^2_1+\lambda}, \cdots, \frac{\sigma^2_k}{\sigma^2_k+\lambda}) V_k^T 
}
\end{equation}
Note that this special case $\Gamma^T\Gamma=\lambda I$ has indeed been used in ~\cite{LFlow16} for implicit recommendation. However, the authors do not realize that it actually has a closed-form solution. 

We also note that using the matrix factorization perspective, we obtain the user ($P$) and item ($Q$) matrices as: 

\begin{small}
\begin{equation}
\label{eq:regressionmf}
\boxed{
\begin{split}
P=& X V_k diag(\frac{\sigma^2_1}{\sigma^2_1+\lambda},\dots,\frac{\sigma^2_k}{\sigma^2_k+\lambda})  = U_k diag(\frac{\sigma_1}{1+\lambda/\sigma^2_1},\dots,\frac{\sigma_k}{1+\lambda/\sigma^2_k}) \\
Q=& V_k 
\end{split}
}
\end{equation}
\end{small}

\subsection{Matrix Factorization}
To study the relationship between the regression approach and matrix factorization, we consider the basic regularized SVD ~\cite{zheng2018regularized}: 

\begin{equation}
\label{eq:matrixfactorization}
     \arg \min_{P,Q} || X- PQ^T||_F^2 + \lambda^\prime (||P||_F^2 + ||Q||_F^2), 
\end{equation}
The solution for this type problem is typically based on Alternating Least Square, but authors in ~\cite{zheng2018regularized} have found  a closed-form solution. 

Let $X=U \Sigma V^T$, and then let 
\begin{small}
\begin{equation}
\label{eq:mf}
\boxed{
\begin{split}
    P&=U_k diag(\sigma_1-\lambda^\prime, \cdots, \sigma_k-\lambda^\prime)   \\
    &=U_k diag(\sigma_1 (1 -\lambda^\prime/\sigma_1), \cdots, \sigma_k(1-\lambda^\prime/\sigma_k)) \\
    Q&=V_k 
\end{split}
}
\end{equation}
\end{small}

Before we further analyze the relationship between them, we ask the following interesting question: Can matrix factorization be represented as a linear encoder? In other words, we seek if there is an $W$ such that $XW=PQ$ (defined by matrix factorization). Let the Moore-Penrose inverse $X^+=V \Sigma^{-1} U^T$, then we have $\widehat{W} =X^+PQ$, 
\begin{equation}
\label{eq:mfpca}
\boxed{
\begin{split}
    \widehat{W}=V_k diag(1-\lambda^\prime/\sigma_1, 
               \cdots, 1-\lambda^\prime/\sigma_k) V_k^T 
\end{split}
}
\end{equation}

\subsection{Model Comparison and Analysis}
When $\lambda=\lambda^\prime=0$ (no regularization), then both approaches (using the matrix factorization) correspond to the standard SVD decomposition, where $P=U_k \Sigma_k$ and $Q=V_k^T$. 
Further, both approaches will  also have  $\widehat{W}=V_k V_k^T$. Then, let us consider $X\widehat{W}=X V_k V_k^T$, which is indeed our standard principal component analysis (PCA), where $V_k$ serves as a linear map which transforms each row vector $x_i^T$ to the new coordinate under the principal component basis. Clearly, when $\lambda=\lambda^\prime \neq 0$, both models start to diverge and behave differently, and results in their difference in terms of regularization penalties, model complexities and eventually model accuracy. 

\noindent{\bf The Geometric Transformation}
From the matrix factorization perspective, the Formulas ~\ref{eq:regressionmf} and ~\ref{eq:mf} essentially tells us that these two different approaches both scale-down the singular values of the user-item interaction matrix (binary) with slightly different manner: $\frac{1}{1+\lambda/\sigma_i^2}$ for low-rank regression and $1-\lambda/\sigma_i$ for matrix factorization. Figure~\ref{fig:ml-20m-sigma} illustrates the compressed singular value for {\em ML-20M} dataset with LRR corresponds to the low-rank regression and MF corresponds to the matrix factorization. Figure~\ref{fig:ml-20m-pca} illustrates the compression ratios, which also has a direct geometric explanation: if we consider both approaches as the linear auto-encoder (or regression), as being described in Formulas~\ref{eq:regressionpca} and ~\ref{eq:mfpca}, then the comprehension ratios directly scale down the coordinates ($XV_k$) for the principal component basis in  $V_k^T$. 


\begin{figure}%
    \centering
    \subfloat[\centering scaling-down singular values \label{fig:ml-20m-sigma} ]{{\includegraphics[width=.615\linewidth]{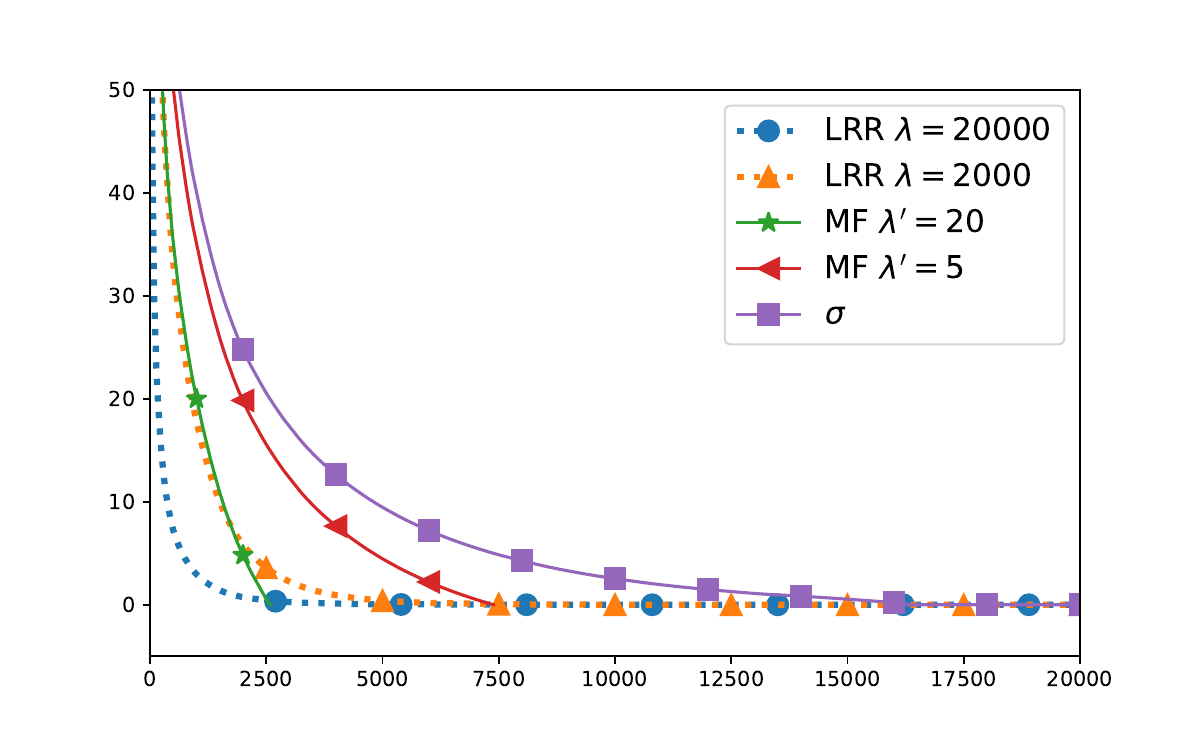} }}%
    \subfloat[\centering compression ratios \label{fig:ml-20m-pca}]{{\includegraphics[width=.385\linewidth]{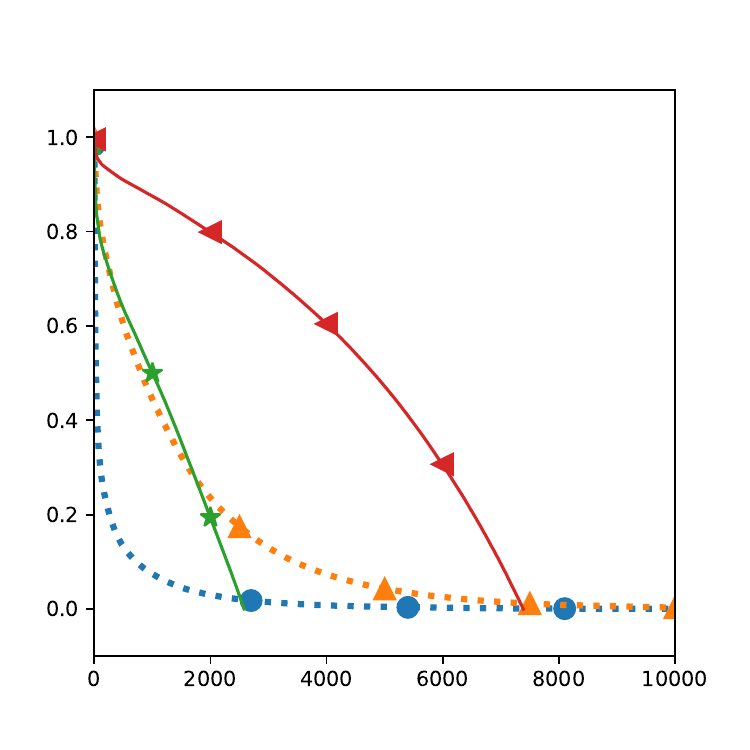} }}%
    \caption{Geometric Transformation for ML-20M.}%
    \label{fig:example}%
\end{figure}

\noindent{\bf The Effective Range of $\lambda$ and $\lambda^\prime$ and Latent Dimension $k$}
For the regression based approach, we note that when $\lambda<\sigma_i^2$, then there is relatively small effect to scale down $\sigma_i$ (or any other singular values larger than $\sigma_i$). 
Given this, $\lambda$ tends to be quite big, and it has close to binary effect: let $\sigma_i^2 \geq \lambda_i > \sigma_{i+1}^2$, then any $\sigma_j<\sigma_i$ has small effect, and  for any $\sigma_j>\sigma_i$, then, the reduction will become more significant (proportionally). Typically, $\lambda$ is within the range of the first few singular values (in other words, $i$ is very small, and $\lambda$ is quite large). 

For the matrix factorization, assuming we consider the $k$ dimensional latent factors, then we note that $\sigma_k> \lambda^\prime$, which effectively limit the range of $\lambda^\prime$. 
Furthermore, we notice that since each singular value is reduced by the same amount $\lambda$, which makes the latent dimensions with smaller singular values are even less relevant than their original value. Thus, this leads to the matrix factorization typically has a relatively small number of dimensions to use (typically less than $1000$). 

For the low-rank regression, as the singular value reduced, its proportion $(\frac{1}{1+\lambda/\sigma_i^2})$ will also be reduced down, the same as the matrix factorization. However, unlike the regularized matrix factorization approach whose absolute reduction may vary: 
\begin{equation}
\Delta_i=\sigma_i-\frac{\sigma_i}{1+\lambda/\sigma_i^2} = \frac{\lambda}{\sigma_i+\lambda/\sigma_i}
\end{equation}
Interesting, when $\sigma_i$ is both very large or very small, its absolute reduction are fairly small, but it may reduce those in between more. Thus, this effectively enables the low-rank approaches to use more latent dimensions, thus larger $k$ (typically larger than $1000$). 

Now, an interesting {\em conjecture} of an optimal set of $\lambda_i$ in Equation~\ref{eq:regressionlambda}, is that they should help (relatively) scale-down those large principal components and help (relatively) scale-up those smaller principal components for better performance. However, how can we search the optimal set of $\lambda_i$ for a large number of $k$? We will introduce a learning algorithm for this purpose in Subsection~\ref{learning}, and then utilize that to study the conjecture (through an experimental study in Section~\ref{experiments}). This help provide a better understanding of the adjustment of singular values.  

\noindent{\bf Model Complexity}
For both types of models, we observe that the gram matrix $X^TX$ serves as the sufficient statistics. This indeed suggest that the complexities of both models (number of effective parameters~\cite{hastie@statisticallearning}) will have no higher than $X^T X$. In the past, we typically consider $P$ and $Q$ (together) are defined as the model parameters for matrix factorization. Thus, the common assumption is that MF has model complexities ($O(mk+nk)$). However, the above analysis based on linear autoencoder/regression  perspective, shows that both models essentially only have $V_k$ (together with the scaled principal components). (See Equations~\ref{eq:regressionpca} and ~\ref{eq:mfpca}) for $W$ estimation. Thus, their model complexity are both $O(nk)$ (but with different $k$ for different models). 

Now relating to the aforementioned discussion on the latent dimension factor, we can immediately observe the model complexity of the basic low-rank regression actually have higher complexity than its corresponding matrix factorization model (as the former can allow larger $k$ than the latter). 

Note that for the real-world recommendation models, the matrix factorization will utilize the weight matrix $C$ (equation \ref{eq:WMFALS}) to increase model complexity (which does not have closed-form solution~\cite{hu2008collaborative}). However, due to the alternating least square solution, its number of effective parameters will remain at $O(nk)$. Thus, it is more restricted and less flexible than the regression based approaches.

\section{Parameter Search and Nearby Models}
\label{learning}

In this section, we first aim to identify a set of {\em optimal} (hyper)parameters $\{\lambda_i:1 \leq i \leq k\}$ for the closed-form solution~\ref{eq:regressionlambda}. Clearly, when the parameter number is small, we can deploy a grid search algorithm as illustrated in Algorithm~\ref{alg:pca} for search $k$ and $\lambda$ for the closed-form in Equation~\ref{eq:regressionpca}. However, for a large number of (hyper)parameters, we have to resort to new approaches (Subsection~\ref{parametersearch}). 
Furthermore, once the parameter searching algorithm is available, we consider to utilize it for searching {\em nearby models} for an existing model (Subsection~\ref{boosting}). As we mentioned before, this can be considered as a post-model fitting exploration in statistical learning~\cite{guan2018post}. 


\begin{algorithm}[t]
\caption{
Hyperparameter Search for Formula~\ref{eq:regressionpca} }\label{alg:pca}
\begin{flushleft}
        \textbf{INPUT:} 
        Hyperparameter candidate lists: $\lambda_l$, $k_l$,
        user-item binary matrix $X$.\\
        \textbf{OUTPUT:} Model performance for all hyperparameter $\lambda_l$, $k_l$ combinations.
\end{flushleft}
\begin{algorithmic}[1]
\STATE $X^TX= V \Sigma^T \Sigma V^T $ \text{ (Eigen Decomposition) }
\FORALL{$\lambda \in$ $\lambda$ list} 

    \STATE\parbox[t]{200pt}{$\Delta \coloneqq (\Sigma^T\Sigma + \lambda I)^{-1}\Sigma^T\Sigma = diag(\frac{d^2_1}{d^2_1+\lambda},\dots,\frac{d^2_n}{d^2_n+\lambda})$}
    \FORALL{$k \in$ $k$ list} 
    \STATE ${}$\hspace{2em} $\Delta_k\leftarrow$ first $k$ columns and rows of $\Delta$
    \STATE ${}$\hspace{2em} $V_k\leftarrow$ first $k$ columns of $V$
    \STATE ${}$\hspace{2em} $W_k\leftarrow$   $V_k\Delta_k V^T_k$
    \STATE ${}$\hspace{2em}evaluate($W_k$) based on nDCG and/or Recall@K
    \ENDFOR
\ENDFOR
\end{algorithmic}
\end{algorithm}

\subsection{Parameter Search}
\label{parametersearch}
In this subsection, we assume the closed-form solution in Equation~\ref{eq:regressionpca} ($
\widehat{W} = V_k diag(\frac{\sigma^2_1}{\sigma^2_1+\lambda},\dots,\frac{\sigma^2_n}{\sigma^2_k+\lambda}) V_k^T$) has optimized hyperparameters $\lambda$ and $k$ through the grid search algorithm Algorithm~\ref{alg:pca}. Our question is how to identify optimized parameter $\lambda_1, \cdots \lambda_k$ in Equation~\ref{eq:regressionlambda}: $
\widehat{W} = V_k  diag(\frac{1}{1+\lambda_1/\sigma^2_1},\dots,\frac{1}{1+\lambda_n/\sigma^2_n})  V_k^T$. 

The challenge is that the dimension (rank) $k$ is fairly large and the typical (hyper)parameter search cannot work in such high dimensional space~\cite{randomhyper12}. Also, as we have the closed-form, it does not make sense to utilize the (original) criterion such as Equation~\ref{eq:lowrank} for optimization. Ideally, we would like to  evaluate the accuracy of any parameter setting (such as in Algorithm~\ref{alg:pca}) based on nDCG or AUC~\cite{charubook}. 
Clearly, for this high dimensional continuous space, this is too expensive.  To deal with this problem, we consider to utilize the {\em BPR} loss function which can be considered as a continuous analogy of AUC~\cite{bpr09}, and parameterize $\lambda_i$ with a search space centered around the optimal $\lambda$ discovered by Algorithm~\ref{alg:pca}:
\begin{equation*}
    \lambda_i(\alpha_i)=\lambda+ c \times tanh(\alpha_i), 
\end{equation*}
where $c$ is the search range, which is typically a fraction of $\lambda$ and $\alpha_i$ is the parameter to be tuned in order to find the right $\lambda_i$. Note that this method effectively provides a bounded search space $(\lambda-c, \lambda+c)$ for each $\lambda_i$. 

Given this, the new objective function based on BPR is:  
\begin{equation*}
    \mathcal{L}=\sum_{u,i\in X_u^+,j \in X_u^-} -\log( \delta(t x_u  (W(\alpha_1,\cdots,\alpha_k)_{*i}-W(\alpha_1,\cdots,\alpha_k)_{*j}))  
\end{equation*}
where $W(\alpha_1,\cdots,\alpha_k)=V_k  diag(\frac{\sigma^2_1}{\sigma^2_1+\lambda_1(\alpha_i)},\dots,\frac{\sigma^2_n}{\sigma^2_n+\lambda_n(\alpha_n)})  V_k^T$, 
and $W(\alpha_1,\cdots,\alpha_k)_{*i}$ is the $i$-th column of $W$ matrix, $x_u$ is the $u-th$ row of matrix $X$ and $t$ is a scaling constant. 
Here $t$ and $c$ are hyper-parameters for this learning procedure.

Note that this is a non-linear loss function for a linear model and the entire loss function can be directly implemented as a simple neural network, and ADAM (or other gradient descent) optimization procedure can be utilized for optimization. We can also add other optimization such as dropout and explicitly enforcing the zero diagonal of $W$~\cite{Steck_2019}. 

\subsection{Nearby Linear Models}
\label{boosting}

In this subsection, we consider how to further leverage the new learning procedure for other linear models   to help identify the (hyper)parameters. Inspired by the recent efforts of post-model fitting exploration~\cite{guan2018post},  we consider to augment the existing learned $W$ from any existing models (or adding on top of the aforementioned closed-form solution) with two types of parameters:
\begin{equation}
\label{eq:augment}
\begin{split}
    W_{HT} &= diagM(H) \cdot W \cdot diagM(T) \\
    W_{S} &= S \odot \widehat{W} \odot (\widehat{W}\geq t)
\end{split}
\end{equation}
where $H=(\delta(h_1), \cdots \delta(h_n))$ and $T=(\delta(t_1),\cdots,\delta(t_n))$ are the {\em head} and {\em tail} vectors with values between $0$ and $1$ (implemented through sigmoid function). We also refer to the diagonal matrices $diagM(H)$ and $diagM(T)$ as the head and tail matrices. Basically, these diagonal matrices $diagM(H)$ and $diagM(T)$ help re-scale the row and column vectors in $W$. 
Furthermore, $S=(\delta(s_{ij}))$ is a matrix with values between $0$ and $1$ (implemented through sigmoid function). Finally, $\widehat{W}\geq t$ is a boolean matrix for sparsification: when $\widehat{W}_{ij}>t$, its element is $1$, otherwise, it it zero.  Thus, this augmented model basically consider to sparsify the learned similar matrix $W$ and re-scale its remaining weights. 
Note that both $W_{HT}$ and $W_S$ can be considered as the {\em nearby} models for the existing models with learned $\widehat{W}$. 
Note that studying these models can also help us understand how close these available learner models are with respect to their limit for recommendation tasks. Since the optimization is more close to the ``true'' objective function, it helps us to squeeze out any potential better models near these existing models. 
In Section~\ref{experiments}, we will experimentally validate if there is any space for improvement based on those simple augmented learning models.

\section{Experimental Results}
\label{experiments}


In this section, we experimentally study the basic linear models as well as the (hyper)parameter search algorithms and its applications to the nearby models. Note that our goal here is not to demonstrate the superiority of these basic/closed-form solutions, but to show they can fare well against the state-of-the-art linear models. This can thus help validate using these basic models to study these advanced linear models~\cite{Steck_2019,DBLP:conf/nips/Steck20,hu2008collaborative}. 
Specifically, we aim to answer: 


\begin{itemize}
    \item (Question 1) How do the basic regression and matrix factorization based models (and their closed-form solutions) compare against the state-of-the-art linear models? Also we hope to compare the two basic models (using their closed-form solutions) to help provide evidence if the matrix factorization approaches have inherently disadvantage for the implicit recommendation task.  
    
    \item (Question 2) How can the learning algorithm to help search the optimal parameter for the closed-form solution of Equation ~\ref{eq:regressionlambda} as well as its augmented models (adding both head and tail matrices)? How does the (augmented) closed-form solution perform against the state-of-the-art methods? We are also interested in understanding how the learned $\{\lambda_i\}$ parameters look like with respect to the constant $\lambda$. 
    
   \item (Question 3)  How does the nearby models based on the head and tail matrices $W_{HT}$ and sparsification $W_S$ introduced in Subsection~\ref{boosting} perform? Can any existing state-of-the-art linear models  be boosted by searching through the augmented nearby models?  
\end{itemize}


\noindent{\bf Experimental Setup:}
We use  three  commonly used datasets for recommendation studies: MovieLens 20 Million (ML-20M) \cite{ml20m}, Netflix Prize (Netflix) \cite{netflix}, and the Million Song Data (MSD)\cite{msddataset}.  The characteristics of first two datasets are in the bottom of Table~\ref{table: maintable}. The characteristics of the  third dataset and its results is in Appendix. 

For the state-of-the-art recommendation algorithms, we consider the following: ALS~\cite{hu2008collaborative} for matrix factorization approaches, SLIM~\cite{slim01}, EASE~\cite{Steck_2019}, and EDLAE~\cite{DBLP:conf/nips/Steck20} for regression models, CDAE ~\cite{cdae16} and MultiVAE~\cite{liang2018variational} for deep learning models.  For most of the experiment settings, we follow ~\cite{liang2018variational,Steck_2019,DBLP:conf/nips/Steck20} for the {\em strong generalization} by splitting the users into training, validation and tests group. 
Also following ~\cite{liang2018variational,Steck_2019,DBLP:conf/nips/Steck20}, we report the results using metrics $Recall@20$, $Recall@50$ and $nDCG@100$. 

Finally, note that our code are openly available (see Appendix).

\begin{table}[]
\resizebox{\linewidth}{!}{%
\begin{tabular}{|c|c|c|c|c|}
\hline
ML-20M &
  \begin{tabular}[c]{@{}c@{}}EASE\\ $\lambda$=400\end{tabular} &
  \begin{tabular}[c]{@{}c@{}}LRR\\ k =2K, $\lambda$ = 10K\end{tabular} &
  \begin{tabular}[c]{@{}c@{}}MF\\ k = 1K, $\lambda$ = 50\end{tabular} &
  \begin{tabular}[c]{@{}c@{}}WMF(ALS)\\ k = 100, C = 10, $\lambda$ = 1e2\end{tabular} \\ \hline
Recall@20 & \textbf{0.39111}  & 0.37635           & 0.36358 & 0.36327  \\ \hline
Recall@50 & \textbf{0.52083} & {0.51144} & 0.50069  & 0.50232\\ \hline
nDCG@100  & \textbf{0.42006} & 0.40760          & 0.39187 & 0.39314 \\ \hline
\end{tabular}%
}
\caption{ML-20M: Basic Model Evaluation}
\label{tab:ml-20mbasic}
\end{table}

\begin{table}[]
\begin{small}
\resizebox{\linewidth}{!}{%
\begin{tabular}{|c|c|c|c|c|}
\hline
Netflix &
  \begin{tabular}[c]{@{}c@{}}EASE\\ $\lambda$= 1000\end{tabular} &
  \begin{tabular}[c]{@{}c@{}}LRR\\ k =3K, $\lambda$ = 40K\end{tabular} &
  \begin{tabular}[c]{@{}c@{}}MF\\ k=1K, $\lambda$=100\end{tabular} &
  \begin{tabular}[c]{@{}c@{}}WMF(ALS)\\ k=100, C = 5, $\lambda$=1e2\end{tabular} \\ \hline
Recall@20 & \textbf{0.36064}  & 0.3478 & 0.33117  & 0.3213 \\ \hline
Recall@50 & \textbf{0.44419}  & 0.4314   & 0.41719   & 0.40629 \\ \hline
nDCG@100  & \textbf{0.39225} & 0.38018 & 0.36462 & 0.35548  \\ \hline
\end{tabular}%
}
\end{small}
\caption{Netflix: Basic Model Evaluation}
\label{tab:netflixbasic}
\end{table}

\noindent{\bf Basic Model Evaluation:} In this experiment, we aim to evaluate the the closed-form (Formulas~\ref{eq:regressionpca}, referred to as $LRR$ and ~\ref{eq:mfpca}, referred to as $MF$)  of the two basic models (Equations~\ref{eq:lowrank} and  ~\ref{eq:matrixfactorization}). We compare them against the state-of-the-art regression model EASE~\cite{Steck_2019} and ALS~\cite{hu2008collaborative}. Since this is mainly for evaluating their prediction capacity (not on how they perform on the real world environment), here we utilize the leave-one-out method to evaluate these models. Note that this actually provides an advantage to the matrix factorization approaches as they prefer to learn the embeddings (latent factors) before its prediction. 

Tables~\ref{tab:ml-20mbasic} and ~\ref{tab:netflixbasic} show the results for these four linear models on the ml-20m and netflix datasets, respectively. We perform a grid search for each of these models (the grid search results are reported in Appendix), and report their better settings (and results) in these tables. From these results, we observe:
(1) Both basic models $LRR$ and $MF$ have very comparable performances against their advanced version. Note that $LRR$ does not have the zero diagonal constraint and use reduced rank regression compared with EASE; and $MF$ does not have the weighted matrix in ALS~\cite{hu2008collaborative}.  This helps confirm the base models can indeed capture the essence of the advanced models and thus our theoretical analysis on these models can help (partially) reflect the behaviors from advanced models.  
(2) Both regression models are consistently and significantly better than the matrix factorization based approaches. This helps further consolidate the observations from other studies~\cite{DacremaBJ21} that the regression methods have the  advantage over the matrix factorization methods. 

\begin{figure}%
    \centering
    \subfloat[\centering optimal adjusted singular value ]{{\includegraphics[width=.615\linewidth]{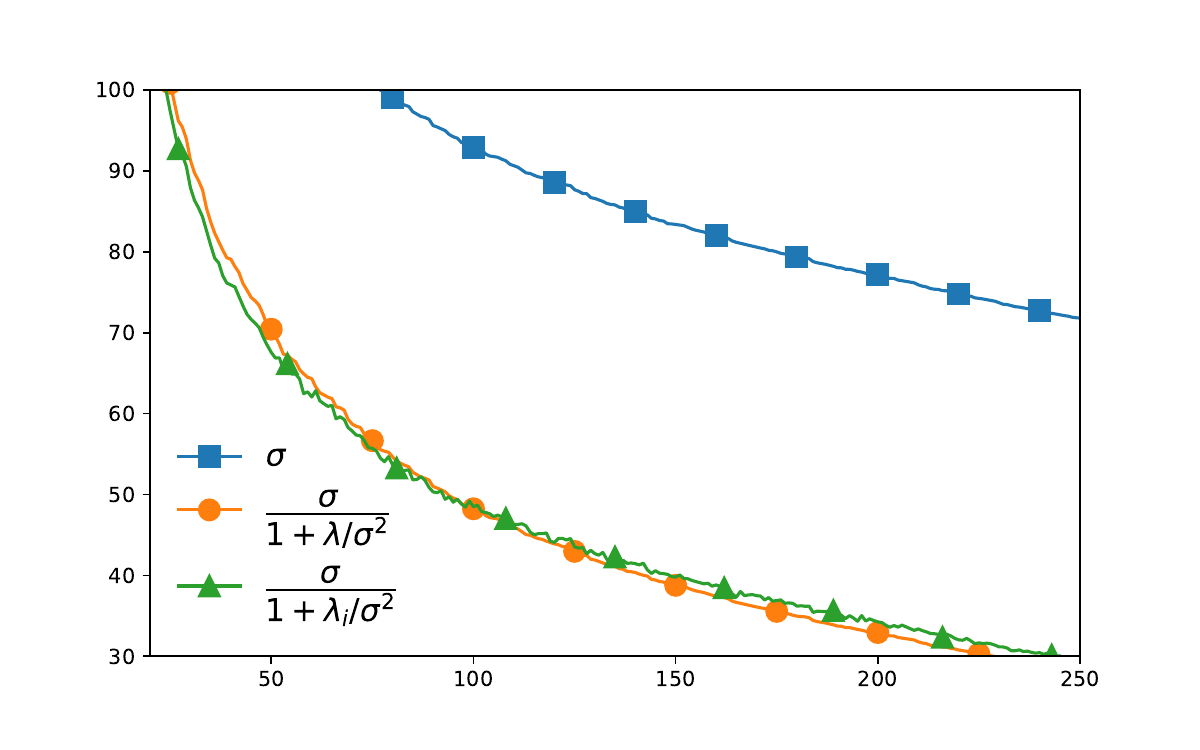} }}%
    \subfloat[\centering optimized (hyper)parameters $\lambda_i$ ]{{\includegraphics[width=.385\linewidth]{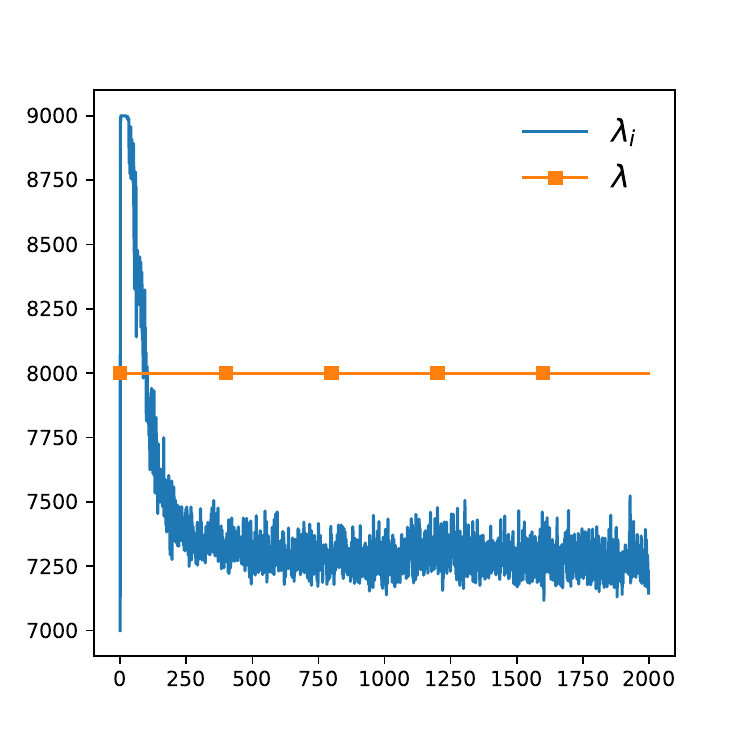} }}%
    \caption{Optimization Results for Formula~\ref{eq:regressionlambda}}%
    \label{fig:parameterresults}%
\end{figure}

\begin{table*}[th]
\resizebox{\textwidth}{!}{%
\begin{scriptsize}
\begin{tabular}{|c|c|c|c|c|c|c|c|}
\hline
\multicolumn{2}{|c|}{\multirow{2}{*}{Model}}   & \multicolumn{3}{c|}{ML-20M} & \multicolumn{3}{c|}{Netflix} \\ \cline{3-8} 
\multicolumn{2}{|c|}{}                         & Recall@20 & Recall@50 & nDCG@100 & Recall@20 & Recall@50 & nDCG@100 \\ \hline
\multicolumn{2}{|c|}{LRR (closed-form) } & 0.376     & 0.513     & 0.406    & 0.347     & 0.432     & 0.380    \\ \cline{3-8} 
\multicolumn{2}{|c|}{LRR + $\lambda_i$}             & 0.380     & 0.515     & 0.410    & 0.348     & 0.433     & 0.381    \\ \cline{3-8} 
\multicolumn{2}{|c|}{LRR + $\lambda_i$ + HT}              & 0.386     & 0.520     & 0.418    & 0.351     & 0.435     & 0.384    \\ \cline{3-8} 
\multicolumn{2}{|c|}{LRR + $\lambda_i$ + HT + RMD}          & 0.386     & 0.522     & 0.418    & 0.351     & 0.435     & 0.384    \\ \hline
\multicolumn{2}{|c|}{EDLAE}          & 0.389     & 0.521     & 0.422    & 0.362     & 0.446     & 0.393    \\ \cline{3-8} 
\multicolumn{2}{|c|}{EDLAE + $HT$}              & 0.394     & 0.527     & 0.424    & 0.361     & 0.446     & 0.393    \\ \cline{3-8} 
\multicolumn{2}{|c|}{EDLAE Full Rank}                & 0.393     & 0.523     & 0.424    & 0.364     & 0.449     & 0.397    \\ \cline{3-8} 
\multicolumn{2}{|c|}{EDLAE Full Rank + $HT$}          & 0.395     & 0.527     & 0.426    & 0.364     & 0.449     & 0.396    \\ \cline{3-8} 
\multicolumn{2}{|c|}{EDLAE Full Rank + Sparsification} & 0.394     & 0.526     & 0.423    & 0.365     & 0.450     & 0.397    \\ \hline
\multicolumn{2}{|c|}{SLIM}                     & 0.370     & 0.495     & 0.401    & 0.347     & 0.428     & 0.379    \\ \cline{3-8} 
\multicolumn{2}{|c|}{ALS/WMF}                      & 0.363    & 0.502     & 0.393    & 0.321     & 0.406     & 0.355    \\ \cline{3-8} 
\multicolumn{2}{|c|}{EASE}                     & 0.391     & 0.521     & 0.420    & 0.360     & 0.444     & 0.392    \\ \cline{3-8} 
\multicolumn{2}{|c|}{CDAE}                     & 0.391     & 0.523    & 0.418   & 0.343     & 0.428    & 0.376    \\ \cline{3-8} 
\multicolumn{2}{|c|}{MULT-DAE}                     & 0.387     & 0.524     & 0.419   & 0.344     & 0.438     & 0.380   \\ \cline{3-8} 
\multicolumn{2}{|c|}{MULT-VAE}                 & 0.395     & 0.537     & 0.426    & 0.351     & 0.444     & 0.386    \\ \hline
\multirow{3}{*}{dataset statistics} & \# items & \multicolumn{3}{c|}{20108}  & \multicolumn{3}{c|}{17769}   \\
               & \# users                      & \multicolumn{3}{c|}{136677}      & \multicolumn{3}{c|}{463435}      \\
               & \# interactions               & \multicolumn{3}{c|}{10 millions}       & \multicolumn{3}{c|}{57 millions}       \\ \hline
\end{tabular}%
\end{scriptsize}
}
\caption{The performance comparison between different models. $\lambda_i$:learned (hyper)parameters; HT: augmented models with head and tail parameter matrix; RMD: with removing the diagonal matrix (enforcing zero diagonal). For more details of the experimental set-up and model, please refer to the appendix.}
\label{table: maintable}
\end{table*}

\noindent{\bf Optimizing Closed-Form Solutions:}
In this and next experiment, we will follow the strong generalization setting by splitting the users into training, validation and testing groups. The top section of Table~\ref{table: maintable} shows the experimental results of using the closed-form solution (Formula~\ref{eq:regressionlambda}). Here, (1) $LRR(closed-form)$ is the starting point for $\lambda$ being constant; (2) $LRR+\lambda_i$ utilizes the BPR learning algorithm in Subsection~\ref{learning} to search the hyperparameter space; (3) $LRR+\lambda_i+HT$ uses $diagM(H)\cdot W \cdot diagM(T)$ (as the targeted similarity matrix), where $W$ is defined in Formula~\ref{eq:regressionlambda} (here the optimization will simultaneously search hyperparameters $\{\lambda_i\}$ and head ($H$), tail ($T$) vectors; (4) finally, $LRR+\lambda_i+HT+RMD$ further enforces the zero diagonal constraints. We also add dropout (with dropout rate $0.5$) for the model training for models ($2-4$). 

We observe the variants of the closed-form solutions are comparable against the state-of-the-art linear models and deep learning models. For instance, on ML-20M, $LRR+\lambda_i+HT+RMD$ reports $0.522$ $Recall@50$, is among the best for the existing linear models (without additional boosting from the augmented nearby models). 

Finally, Figure~\ref{fig:parameterresults} illustrates the parameter search results for  Formula~\ref{eq:regressionlambda} from the learning algorithm.  Specifically, Figure~\ref{fig:parameterresults} (a) shows how the singular value are adjusted vs the compressed singular value for a constant $\lambda=8000$ (Formula~\ref{eq:regressionpca}). We provide $c = 1000$ to allow each individual $\lambda_i$ search between $7000$ to $9000$. Figure~\ref{fig:parameterresults} (b) shows the search results for the parameters $\lambda_i$. 
As we conjectured, we can see that the initial $\lambda_i$ is quite large which leads to smaller singular values compared with adjusted singular value from Formula~\ref{eq:regressionpca}. Then the parameters $\lambda_i$ reduces which make the smaller singular values reduced less. This can help more (smaller) singular values to have better presence in the final prediction.  

\noindent{\bf Nearby Models}
In this experiment, we augment the latest regression models  EDLAE (full rank and reduced rank) ~\cite{DBLP:conf/nips/Steck20} with additional parameters and apply the parameter learning algorithm to optimize the parameters: 
(1) $EDLAE$ is the original reduced rank regression with rank $k=1000$; (2) $EDLAE+HT$ corresponds to the augmented model with head and tail matrices, $W_{HT}$ from Formula~\ref{eq:augment}; (3) $EDLAE\ Full\ Rank$ is the original full rank regression; (4) $EDLAE\ Full\ Rank+HT$ applies the head and tail matrices on the learned similarity matrix from $EDLAE\ Full\ Rank$; (5) $EDLAE\ Full\ Rank+Sparsification$ applies the $W_S$ from Formula~\ref{eq:augment}, which sparsifies the similarity matrix of $EDLAE\ Full\ Rank$ with additional parameters in matrix $S$ to further adjust those remaining entries in the similarity matrix. 

The experimental results on ML-20M and Netflix of these augmented (nearby) models are listed in the middle section in Table~\ref{table: maintable}. We can see that on the ML-20M dataset, the $Recall@50$ has close to $1\%$ boost while other metrics has small improvement. This indeed demonstrates the nearby models may provide non-trivial improvement over the existing models. On the Netflix dataset, the nearby models only have minor changes and indicates the originally learned model may already achieve the local optimum.

\section{Conclusion and Discussion}

In this work, we provide a thorough investigation into the relationship between arguably two of the most important recommendation approaches the neighborhood regression approach and the matrix factorization approach. We show how they inherently connect with each other as well as how they differ from one another. However, our study mainly focuses on the implicit setting: here the goal is not to recover the original ratings (like in the explicit setting), but to recover a ''likelihood'' (or a preference) of the interaction. Thus, the absolute value/rating is not of interest. In fact, for most of the linear regression models, the predicted value can be very small (close to zero than one). What matters here is the relative rank of the predicted scores. Thus it helps to use more latent factors to express the richness of user-item interactions. This can be rather different from the rating recovery, which requires the original singular values to be preserved. 
Especially, the current approaches of explicit matrix factorization which often consider only the positive values, and thus the methodology developed in this work cannot be immediately applied in this setting. Indeed, Koren and Bell in ~\cite{advancedCF} have analyzed the relationship between neighborhood and factorization models under explicit settings. It remains to be seen whether the insights gained here can be applied to the explicit setting.  
 
Also, we would like to point out that this is the first work to investigate the nearby linear models. We consider two basic models which utilize limited additional parameters to help explore the additional models. An interesting question is whether we can explore more nearby models.

Finally, we note that the theoretical models need eigendecomposition which makes them infeasible for the real-world datasets with millions of items. But our purpose here is to leverage such models to help understand the tradeoffs and limitations of linear models, not to replace them.  We hope what being revealed in this work can help design better linear and nonlinear models for recommendation.

\chapter{Revisiting Recommendation Loss Functions through
Contrastive Learning}\label{chpt:ccl}

\section{Introduction}

 Recommendation and personalization play a pivotal role in dealing with information exploration, started from the Internet, User Generated Content, now fused with upcoming AI-generated content.  As recommendation algorithms continue to boom \cite{charubook,zhang2019deep}, largely thanks to deep learning models, many models' real effectiveness and claimed superiority has been called into question, such as various deep learning models vs. simple linear and heuristic methods ~\cite{RecSys19Evaluation}, and the potential  ``inconsistent'' issue of  using item-sampling based evaluation \citep{KricheneR22,li2020sampling}, among others. In fact, a major problem in existing recommendation research is their ``hyper-focus'' on evaluation metrics with often weak or not-fully-tuned baselines to demonstrate the progress of their newly proposed model ~\cite{RecSys19Evaluation}. However, recent research has shown that properly evaluating (simple) baselines is indeed not an easy task~\cite{difficulty@rendle}. 

 To deal with the aforementioned challenges, aside from standardized benchmarks and evaluation criteria, research communities have devoted efforts to developing and optimizing simple and stronger baseline models, such as the latest effort on fine-tuning iALS~\cite{ials_revisiting,ials++} and SimpleX (CCL)~\cite{simplex}. In the meantime, researchers ~\cite{jin@linear} have started to theoretically analyze and compare different models, such as matrix factorization (iALS) vs. low-rank regression (EASE). By simplifying and unifying different models under a similar closed form, they observe how different approaches ``shrink'' singular values differently and thus provide a deep understanding and explain how and why the models behave differently, and the exact factor leads to a performance gap. However, the above work is limited only to linear models, and their regularization landscapes ~\cite{li2022on}. 

Taking the cue from the aforementioned research, this chapter aims to  rigorously analyze and compare the loss functions,  which play a key role in designing recommendation models~\cite{simplex}. This chapter is also particularly driven by the emergence of {\em contrastive learning} loss~\cite{SimCLR,infonce,debiased} has shown great success in various learning tasks, such as computer vision and NLP, etc.  Even though there are a few research works ~\cite{cl-rec,alignment-uniformity} in adopting contrastive learning paradigms into individual recommendation models, a systematic understanding of the recommendation loss function from the lens of contrastive learning is still lacking. For instance, how recommendation models can take advantage of (debiased) contrastive loss~\cite{debiased} (including mutual information-based losses \citep{MINE}? How pairwise BPR~\cite{bpr} (Bayesian Pairwise Ranking) loss relate to these recent contrastive losses~\cite{debiased,infonce,MINE}?  How can we debias the $L2$ (mean-squared-error, MSE) or $L1$ (mean-absolute-error, MAE) loss function with respect to the contrastive learning loss? Following this, does the well-known linear models, such as iALS \citep{ials_revisiting,hu2008collaborative,ials++} and EASE \citep{ease}, needs to be debiased with respect to contrastive learning losses?

To answer these questions, we made the following contribution to this chapter: 
\begin{itemize}
\item (\cref{sec:ccl-softmax}) We revisit the softmax loss (and BPR loss) using the latest debiased contrastive loss~\cite{debiased}, which provides a natural debias approach instead of typically rather sophisticated debias methods through propensity score ~\cite{unbiased-propensity}. We also introduce a mutual information neural estimator (MINE) loss~\cite{MINE} to the recommendation setting. To the best of our knowledge, this is the first study to study MINE for recommendation models.  Experimental results demonstrate its (surprising) performance compared with other known losses.  Finally, through the lower bound analysis, we are able to relate the contrastive learning losses and BPR loss. 
\item (\cref{sec:ccl-linear}) We generalize the debiased contrastive loss ~\cite{debiased} to  the pointwise loss (MSE ~\cite{hu2008collaborative} and CCL~\cite{simplex}) in recommendation models and design the new debiased pointwise loss. We then examine the well-known linear models with pointwise loss, including iALS and EASE; rather surprisingly, our theoretical analysis reveals that both are inherently debiased under the reasonable assumption with respect to the contrastive learning losses.    
\item (\cref{sec:ccl-experiment}) We experimentally validate the debiased InfoNCE and point-wise losses indeed perform better than the biased (commonly used) alternatives. We also show the surprising performance of the newly introduced mutual information-based recommendation loss and its refinement (MINE and MINE+). 
 \end{itemize}
 Finally, Section ~\ref{sec:ccl-related} reviews the related work, and Section ~\ref{sec:ccl-conclusion} concludes. This chapter towards to build a loss function theory for a recommendation based on contrastive learning.

\section{Contrastive Recommendation Loss} 
\label{sec:ccl-softmax}
 In this section, we introduce Debiased InfoNCE~\cite{infonce} and Mutual-Information based Loss (MINE) ~\cite{MINE} to the recommender system settings. Note that, the study in this section, is not to claim those losses are new (they stem from the latest contrastive learning research~\cite{unbiased,debiased}), but to bring them into the recommendation community. Indeed, to our surprise, those losses from contrastive learning have not been fully explored and leveraged by the recommendation models. To reemphasize, our goals are two folds in this section: 1) demonstrating  the benefits of these contrastive learning inspired losses are useful for recommendation models; 2) utilizing the lens of contrastive learning to carefully reexamine the losses recommendation community have designed before towards better understanding and unifying them. 

\noindent{\bf Notation:} 
We use $\mathcal{U}$ and $\mathcal{I}$ to denote the user and item sets, respectively. For an individual user $u\in \mathcal{U}$, we use $\mathcal{I}^+_u$ to denote the set of items that have been interacted (such as Clicked/Viewed or Purchased) by the user $u$, and $\mathcal{I}\backslash \MI^+_u$ to represent the remaining set of items that have not been interacted by user $u$. Similarly, $\MU^+_i$ consists of users interacting with item $i$, and $\MU\backslash \MU_i^+$ includes remaining users.
Often, we denoted $r_{ui}=1$ if item $u$ is known to interact with of item $i$ and $r_{ui}=0$ if such interaction is unknown.

Given this, most of the recommendation systems utilize either matrix factorization (shallow encoder) or deep neural architecture to produce a latent user vector $v_u$ and a latent item vector $v_i$, then use their inner product ($<v_u, v_i>$) or cosine ($<v_u/||v_u||, v_i/||v_i||>$) to produce a similarity measure $\hat{y}_{ui}$. The loss function is then used to produce a (differentiable) measure to quantify and encapsulate all such $\hat{y}_{ui}$. 

\subsection{Basic (Sampled) Softmax and InfoNCE loss}
\noindent{\bf Softmax Loss:} 
In recommendation models, a softmax function is often utilized to transform the similarity measure into a probability measure~\cite{youtube,cl-rec}: $
p(i|u) = \frac{\exp(\hat{y}_{ui})}{\sum_{j \in \mathcal{I}} \exp(\hat{y}_{uj})}$. Given this, the maximal likelihood estimation (MLE) can be utilized to model the fit of the data through the likelihood 
($\Pi_{u\in \mathcal{U}, i \in \MI^+_u} p(i|u)$), and the negative log-likelihood serves as the loss function: 
\begin{equation} 
\mathcal{L}_{soft} = - \E_u  \log \sum_{i \in \mathcal \MI^+_u } \frac{exp(\hat{y}_{ui})}{\sum_{j \in \mathcal{I}} exp(\hat{y}_{uj})}
\end{equation} 

Note that in the earlier research, the sigmoid function has also adopted for transforming similarity measure to probability: $p(i|u) =\sigma(\hat{y}_{ui})$. However, it has shown to be less effective ~\cite{ncfmf}. Furthermore, the loss function is a binary cross-entropy where the negative sample part is eliminated as being treated as $0$. Finally, the main challenge here is that the denominator in the loss sums over all possible items in $\mathcal{I}$, which is often impossible in practice and thus requires approximation.  In the past, the sampled softmax approaches typically utilized the important sampling for estimation~\cite{sample-softmax}: 
\begin{equation}
    \mathcal{L}_{ssoft}= - \E_u \log \sum_{i \in \mathcal \MI^+_u } \frac{exp(\hat{y}_{ui})}{exp(\hat{y}_{ui}) + \sum_{j=1; j \sim p^-_u}^N  exp(\hat{y}_{uj})/p_u^-(j)}
\end{equation}
where, $p_u^-$ is a predefined negative sampling distribution and often is implemented as $p$, which is proportional to the item popularity. It has been shown that sampled softmax performs better than NCE ~\cite{nce} and negative sampling ones ~\cite{bpr,sampling_strategy} when the item number is large~\cite{cl-rec}. 

\noindent{\bf Contrastive Learning Loss (InfoNCE):} 
The key idea of contrastive learning is to contrast semantically similar (positive) and dissimilar (negative) pairs of data points, thus pulling the similar points closer and pushing the dissimilar points apart. There are several contrastive learning losses have been proposed~\cite{SimCLR,infonce,unbiased}, and among them, InfoNCE loss is one of the most well-known and has been adopted in the recommendation as follows: 

\begin{equation}\label{eq:infonce_0}
   \mathcal{L}_{info}= - \E_u \E_{i \sim p_u^+} \log \frac{exp(\hat{y}_{ui})}{exp(\hat{y}_{ui}) + \sum_{j=1; j \sim p_u^{-}}^N  exp(\hat{y}_{uj})}
\end{equation}

where $p_u^+$ ($p_u^-$) is the positive (negative) item distribution for user $u$. 
Note that unlike the sampled softmax, the correction from the sampling distribution is not present in the InfoNCE loss. In fact, this loss does not actually approximate the softmax probability measure $p(i|u)$ while it corresponds to the probability that {\em item $i$ is the positive item, and the remaining points are noise}, and $exp(\hat{y}_{ui})$ measures the density ration ($exp(\hat{y}_{ui}) \varpropto p(i|u)/p(i)$) ~\cite{infonce}. 
Again, the overall InfoNCE loss is the cross-entropy of identifying the positive items correctly, and it also shows maximize the mutual information between user $u$ and item $i$ ($I(i,u)$) and minimize $I(j,u)$ for item $j$ and $u$ being unrelated \citep{infonce}. 

In practice, as the true negative labels (so does all true positive labels) are not available,  negative items $j$ are typically drawn uniformly (or proportional to their popularity) from the entire item set $\mathcal{I}$ or $\mathcal{I}\backslash \MI^+_u$, i.e, $j \sim p_u$: 

\begin{equation}
   \widetilde{\mathcal{L}}_{info}= - \E_u \E_{i \sim p_u^+} \log \frac{exp(\hat{y}_{ui})}{exp(\hat{y}_{ui}) + \sum_{j=1; j \sim p_u}^N  exp(\hat{y}_{uj})}
\end{equation}

Note that this loss $\widetilde{\mathcal{L}}_{info}$ in practice  is sometimes treated as a softmax estimation, despite is not a proper estimator for the softmax loss~\cite{infonce,youtube}. In addition, the $\widetilde{\mathcal{L}}_{info}$ approximation easily leads to {\em sampling bias}: when sampling a large number of supposed negative items, they would include some positive items i.e., those top $K$ items in recommendation setting. It has been known such sampling bias can empirically lead to a significant performance drop for machine learning tasks~\cite{bias-debias,sampling_strategy}.

\vspace*{-2.0ex}
\subsection{Leveraging Debiased Contrastive InfoNCE}\label{subsec:debiased_infonce}
Can we reduce the gap between the approximation $\widetilde{\mathcal{L}}_{info}$ and ideal InfoNCE loss $\mathcal{L}_{info}$ and mitigate sampling bias?
In ~\cite{debiased}, a novel debiased estimator is proposed while only accessing positive items and unlabeled training data. Below, we demonstrate how this approach can be adopted in the recommendation setting. 

Given user $u$, let us consider the implicit (binary) preference probability (class distribution) of $u$: $\rho_u(c)$, where $\rho_u(c=+)=\tau^+_u$ represents the probability of user $u$ likes an item, and $\rho_u(c=-)=\tau^-_u=1-\tau^+_u$ corresponds to being not interested. In  the recommendation setting, we may consider $\rho_u(+)$ as the fraction of positive items, i.e., those top $K$ items, and  $\rho_u(-)$ are the fraction of true negative items.  Let the item class joint distribution be $p_u (i,c)=p_u(i|c)\rho_u(c)$. 
Then $p^+_u(i)=p_u(i|c=+)$ is the probability of observing $i$ as a positive item for user $u$ and $p_u^-(j)=p_u(j|c=-)$ the probability of a negative example. Given this, the distribution of an individual item of user $u$ is:
\begin{equation}
    \begin{split}
        p_u(i) = \tau^+_u \cdot p^+_u(i) + \tau^-_u \cdot p_u^-(i)
    \end{split}
\end{equation}

For implicit recommendation, we can assume the known positive items $\mathcal{I}_u^+$ are being uniformly sampled according to $p^+_u$. For $\tau^+_u$, we can count all the existing positive items ($r_{ui}=1$), and the unknown top $K$ items as being positive, thus, $\tau^+_u=(|\MI_{u}^+|+K)/|\mathcal{I}|$. Alternatively, we can assume the number of total positive items for $u$ is proportional to the number of existing (known) positive items, i.e, $\tau^+_u=(1+\alpha) |\MI_{u}^+|/|\mathcal{I}|$, where $\alpha\geq 0$. As mentioned earlier, the typical assumption of $p_u$ (the probability distribution of selection an item is also uniform). 
Then, we can represent $p^-_{u}$ using the total and positive sampling as: $p^-_{u}(i)=1/\tau^-_u (p_u(i)-\tau^+_u \cdot p^+_u(i))$.

Given this, let us consider the following ideal (debiased) InfoNCE loss: 
\begin{small}
\begin{equation}
\begin{split}
    &\overline{\mathcal{L}}_{Info}=- \E_{u} \E_{i \sim p_u^+}\Bigg[\log\frac{\exp(\hat{y}_{ui})}{\exp(\hat{y}_{ui}) + \lambda \E_{j\sim p_u^-} \exp({\hat{y}_{uj}}) }
    \Bigg] = \\
    &- \E_{u} \E_{i\sim p_u^+}\Bigg[\log\frac{\exp(\hat{y}_{ui})}{\exp(\hat{y}_{ui}) + \lambda \frac{1}{\tau^-}\Big( 
    \E_{t\sim p_u}[\exp(\hat{y}_{ut})] - \tau^+ \E_{j\sim p^+_u}[\exp(\hat{y}_{uj})]\Big) }
    \Bigg]
\end{split}
\end{equation}
\end{small}
Note that when $\lambda=N$, the typical InfoNCE loss $\mathcal{L}_{Info}$ (\cref{eq:infonce_0}) converges into $\overline{\mathcal{L}}_{Info}$ when $N \rightarrow \infty$.
Now, we can use $N$ samples from $p_u$ (not negative samples), an $M$ positive samples from $p_u^+$ to empirically estimate $\overline{\mathcal{L}}_{Info}$ following ~\cite{debiased}: 
\begin{small}
\begin{equation}
\begin{split}
    &\mathcal{L}^{Debiased}_{Info}=-\E_{u} \E_{i \sim p_u^+; j \sim p_u; k \sim p_u^+} \Bigg[\log\frac{\exp(\hat{y}_{ui})}{\exp(\hat{y}_{ui}) + \lambda f(\{j\}_{1}^N, \{k\}_{1}^M})
    \Bigg], where \\
    & f(\{ j\}_{1}^N, \{k\}_{1}^N)= 
    \max \Bigl\{ \frac{1}{\tau_u^-} \Bigl( \frac{1}{N} \sum_{j=1;j\sim p_u}^N exp(f_{uj})- \tau_u^+\frac{1}{M} \sum_{k=1;k\sim p_u^+}^M exp(f_{uk}) \Bigr), e^{1/t}  \Bigr\}
\end{split}
\end{equation}
\end{small}
Here $f$ is constrained to be greater than its theoretical minimum $e^{-1/t}$ to prevent the negative number inside $\log$.
We will experimentally study this debiased InfoNCE in \cref{sec:ccl-experiment}. 

\subsection{Connection with MINE and BPR}
In the following, we first introduce Mutual Information Neural Estimator (MINE) loss and then discuss some common lower bounds for  InfoNCE and MINE, BPR as well as their relationship. Later in the \cref{sec:ccl-experiment}, we evaluate MINE against existing losses for recommendation and show it  performs surprisingly well. To the best of our knowledge,  this work is the first to study and apply MINE \citep{MINE} to the recommendation setting. 




 The Mutual Information Neural Estimator (MINE) loss is introduced in ~\cite{MINE}, which can directly measure the mutual information between user and item: 
 \begin{equation}
    \widehat{I(u,i)}=sup_{(v_u;v_i)} \E_{p_{u,i}}[\hat{y}_{ui}] - \log \E_{p_u \otimes p_i}[e^{\hat{y}_{ui}}]
\end{equation}
Here, the positive $\hat{y}_{ui}$ items sampled $n$ times according to the joint user-item distribution $p_{u,i}$. The negative items are sampled according to the empirical marginal  item distribution of $p_i$, which are proportional to their popularity (or in practice, sampled uniformly).
A simple adaption of MINE to the recommendation setting can simply exclude the positive item $i$ from the denominator of InfoNCE~\cite{infonce}: 

\begin{equation}\label{eq:mine}
\begin{split}
    \mathcal{L}_{mine}& =- \E_u \E_{i\sim p_u^+} \Bigg[\hat{y}_{ui}- \log \E_{j \sim p_i} [\exp \hat{y}_{uj}]\Bigg] 
\end{split}
\end{equation}

Its empirical loss can be written as (the factor of $1/N$ will not affect the optimization): 

\begin{equation}\label{eq:mine}
\begin{split}
    \widetilde{\mathcal{L}}_{mine}& =- \E_u \E_{i\sim p_u^+} \Bigg[\hat{y}_{ui}- \log \sum_{j=1; j \sim p_i}^N [\exp \hat{y}_{uj}]\Bigg] 
\end{split}
\end{equation}

\noindent{\bf Lower Bounds:} 
Given this, we can easily observe (based on Jensen's inequality): 
\begin{equation} 
\begin{split}
    &\widetilde{\mathcal{L}}_{info} = - \E_u \E_{i \sim p_u^+} \log \frac{exp(\hat{y}_{ui})}{exp(\hat{y}_{ui}) + \sum_{j=1; j \sim p_u}^N  exp(\hat{y}_{uj})} \geq \\
    & \widetilde{\mathcal{L}}_{mine} =  
    \E_u \E_{i\sim p_u^+} \log (\sum_{j=1; j\sim p_i}^N exp(\hat{y}_{uj} - \hat{y}_{ui})) \approx \\  
    & \E_u \E_{i\sim p_u^+} \log N \E_{j\sim p_i} exp(\hat{y}_{uj} - \hat{y}_{ui})) \geq  \\
    & \E_u \E_{i\sim p_u^+} \E_{j\sim p_i} \bigl( \hat{y}_{uj} - \hat{y}_{ui}  \bigr) + \log N 
    \geq \E_u \E_{i\sim p_u^+} \E_{j\sim p_i} \bigl( \hat{y}_{uj} - \hat{y}_{ui}  \bigr)
\end{split}
\end{equation}
Alternative, by using LogSumExp lower-bound~\cite{unbiased}, we also observe:  
\begin{equation} 
\begin{split}
    & \E_u \E_{i\sim p_u^+} \max_{j=1; j\sim p_i} ^N \bigl( 0, \hat{y}_{uj} - \hat{y}_{ui}  \bigr)  + \log (N+1) \geq \\
     & \widetilde{\mathcal{L}}_{info} \geq 
     \widetilde{\mathcal{L}}_{mine} =  
    \E_u \E_{i\sim p_u^+} \log (\sum_{j=1; j\sim p_i}^N exp(\hat{y}_{uj} - \hat{y}_{ui})) \geq \\ 
    &\E_u \E_{i\sim p_u^+} \max_{j=1; j\sim p_i} ^N \bigl( \hat{y}_{uj} - \hat{y}_{ui}  \bigr)
\end{split}
\end{equation}
Note that the second lower bound is tighter than the first one. We also observe the looser bound is also shared by the well-known pairwise BPR loss, which can be written as:

\begin{equation}
\begin{split}
&\widetilde{\mathcal{L}}_{bpr} = \E_u \E_{i\sim p_u^+} \sum_{j=1;j\sim p_i}^N -\log \sigma (\hat{y}_{ui} - \hat{y}_{uj})  \\
& =  \E_u \E_{i\sim p_u^+} \sum_{j=1; j\sim p_i}^N \log (1 + exp(\hat{y}_{uj} - \hat{y}_{ui})) 
\end{split}
\end{equation}

Following above analysis, we observe: 
\begin{equation}\label{eq:bound1}
\begin{split}
    &\widetilde{\mathcal{L}}_{info} = \E_u \E_{i \sim p_u^+} \log (1 + \sum_{j=1; j \sim p_i}^N  exp(\hat{y}_{uj}-\hat{y}_{ui})) \approx \\
    & \E_u \E_{i \sim p_u^+} \log (1 + N \E_{j \sim p_i}  exp(\hat{y}_{uj}-\hat{y}_{ui}))
    \geq \\
    & \E_u \E_{i \sim p_u^+} \E_{j \sim p_i} \log (1 + N exp(\hat{y}_{uj}-\hat{y}_{ui})) 
     \geq \\ 
    & \E_u \E_{i \sim p_u^+} \E_{j \sim p_i} \log (1/N + exp(\hat{y}_{uj}-\hat{y}_{ui})) + \log N \geq \\ 
    & \E_u \E_{i \sim p_u^+} \E_{j \sim p_i} \bigl(\hat{y}_{uj}-\hat{y}_{ui}\bigr)) + \log N  \geq \\
    &  \E_u \E_{i \sim p_u^+} \E_{j \sim p_i} \bigl(\hat{y}_{uj}-\hat{y}_{ui}\bigr))
\end{split}
\end{equation}
Thus, the BPR loss shares the same lower bound $E_u E_{i \sim p_u^+} \E_{j \sim p_i} \bigl(\hat{y}_{uj}-\hat{y}_{ui}\bigr))$ as the InfoNCE and MINE, which aims to maximize the average difference between a positive item score ($\hat{y}_{ui}$) and a negative item score ($\hat{y}_{uj}$). 

However, the tighter lower bound of BPR diverges from the InfoNCE and MINE loss using the LogSumExp lower bound: 
\begin{equation}\label{eq:bound2}
\begin{split}
&\widetilde{\mathcal{L}}_{bpr} =  \E_u \E_{i\sim p_u^+} \sum_{j=1; j\sim p_i}^N \log (1 + exp(\hat{y}_{uj} - \hat{y}_{ui})) \geq \\
& \E_u \E_{i\sim p_u^+} \sum_{j=1; j\sim p_i}^N 
max(0, \hat{y}_{uj} - \hat{y}_{ui}) 
\varpropto \\ & \E_u \E_{i\sim p_u^+} \E_{j\sim p_i}
max(0, \hat{y}_{uj} - \hat{y}_{ui})
\end{split}
\end{equation}
Note that the bound in \cref{eq:bound2} only sightly improves over the one in \cref{eq:bound1} and cannot reflect the bounds shared by InfoNCE and MINE, which aims to minimize the largest gain of a negative item (random) over a positive item ($\max_{j=1; j\sim p_i} ^N \bigl( \hat{y}_{uj} - \hat{y}_{ui}  \bigr)$). We also experimentally validate using the first bound $\E_u \E_{i \sim p_u^+} \E_{j \sim p_i} \bigl(\hat{y}_{uj}-\hat{y}_{ui}\bigr)$ leads to poor performance, indicating this can be a potentially underlying factor the relatively inferior performance of BRP comparing with other losses, such as softmax, InfoNCE, and MINE (\cref{sec:ccl-experiment}). 

\noindent{\textbf{MINE+ loss for Recommendation:}}
Utimately, we consider the following refined MINE loss format for recommendation which as we will show in \cref{sec:ccl-experiment}, can provide an additional boost comparing with the Vanilla formula (\cref{eq:mine}): 

\begin{equation}\label{eq:mine+}
\begin{split}
    \widetilde{\mathcal{L}}_{mine+}& =- \E_u \E_{i\sim p_u^+} \Bigg[\hat{y}_{ui}/t- \lambda \log \sum_{j=1; j \sim p_i}^N [\exp  (\hat{y}_{uj}/t]\Bigg] 
\end{split}
\end{equation}
Here, $\hat{y}_{ui}$ uses the cosine similarity between user and item latent vectors, $t$ serves as the temperature  (similar to other contrastive learning loss), and finally, $\lambda$ helps balance the weight between positive scores and negative scores.

\section{Debiased Pointwise Loss}\label{sec:ccl-linear}
Besides the typical listwise (softmax, InfoNCE) and pairwise (BPR) losses, another type of recommendation loss is the pointwise loss. Most of the linear (non-deep) models are based on pointwise losses, including the well-known iALS~\cite{hu2008collaborative,ials_revisiting}, EASE~\cite{ease}, and latest CCL~\cite{simplex}. Here, these loss functions aim to pull the estimated score $\hat{y}_{ui}$ closer to its default score $r_{ui}$. In all the existing losses, when the interaction is known, $i \in \MI^+_u$, $r_{ui}=1$; otherwise, the unlabeled (interaction unknown) items are treated as $r_{ui}=0$. 

Following the analysis of debiased contrastive learning, clearly, some of the positive items are unlabeled, and making them closer to $0$ is not the ideal option. In fact, if the optimization indeed reaches the targeted score, we actually do not learn any useful recommendations. A natural question is whether we can improve  the pointwise losses following the debiased contrastive learning paradigm ~\cite{debiased}. In addition, we will investigate if the popular linear methods, such as iALS and EASE, are debiased or not. 

\subsection{Debiased MSE and CCL}

Let us denote the generic single pointwise loss function $l^+(\hat{y}_{ui})$ and $l^-(\hat{y}_{ui})$, which measure how close the individual positive (negative) item score to their ideal target value. 

\noindent{\bf MSE single pointwise loss:}
Mean-Squared-Error (MSE) is one of the most widely used recommendation loss functions.  Indeed, all the earlier collaborative filtering (Matrix factorization) models and pointwise losses were  based on MSE (with different regularization). Some recent linear models, such as SLIM~\cite{slim} and EASE~\cite{ease}, are also based on MSE. 
Its single pointwise loss function is denoted as: 
\begin{equation}
\begin{cases}
    & l^+_{mse}(\hat{y}_{ui})=(1-\hat{y}_{ui})^2 \\
    & l^-_{mse}(\hat{y}_{ui})=(\hat{y}_{ui})^2
\end{cases}
\end{equation}

\noindent{\bf CCL single pointwise loss:}
The cosine contrastive loss (CCL) is recently proposed loss and has shown to be very effective for recommendation.
The original Contrastive Cosine Loss (CCL) \citep{simplex} can be written as:
\begin{equation}
    \begin{split}
        \mathcal{L}_{CCL}=\E_{u}
        \Bigg(\sum\limits_{i\in \mathcal{I}^+_u} (1-\hat{y}_{ui}) + \frac{w}{N}\sum\limits_{j\in\mathcal{I}}^N ReLU(\hat{y}_{uj}-\epsilon)\Bigg)
    \end{split}
\end{equation}
where $\hat{y}_{uj}$ is cosine similarity, $ReLU(x) = max(0, x)$ is the activation function, $w$ is the negative weight, $N$ is the number of negative samples and $\epsilon$ is margin.
Its single pointwise loss thus can be written as: 
\begin{equation}
\begin{cases}
    & l^+_{ccl}(\hat{y}_{ui})=1-\hat{y}_{ui} \\
    & l^-_{ccl}(\hat{y}_{ui})=ReLU(\hat{y}_{ui}-\epsilon)
\end{cases}
\end{equation}

Given this, we introduce the ideal pointwise loss for recommendation models (following the basic idea of (debiased) contrastive learning) : 

\begin{definition} {\bf (Ideal Pointwise Loss)}
The ideal pointwise loss function for recommendation models is the expected single pointwise loss for all positive items (sampled from $\rho_u(+)=
\tau^+$) and for negative items (sampled from $\rho_u(-)=
\tau^-$):  
\begin{small}
\begin{equation}
    \begin{split}
    \overline{\mathcal{L}}_{point} = \E_{u} \Bigg(\tau_u^+\E_{i\sim p_u^+} l^+(\hat{y}_{ui}) +
    \lambda\tau_u^-\E_{j\sim p_u^-} l^-(\hat{y}_{uj}) \Bigg)
    \end{split}
\end{equation}
\end{small}
\end{definition}

Why this is ideal? When the optimization reaches the potential minimal, we have for all positive items $i$ and all the negative items $j$ reaching all the minimum of individual loss ($l^+$ and $l^-$).  
Here, $\lambda$ help adjusts the balance between the positive and negative losses, similar to iALS ~\cite{hu2008collaborative,ials_revisiting} and CCL~\cite{simplex}. 
However, since $p_u^+$ and $p_u^-$ is unknown, we utilize the same debiasing approach as InfoNCE ~\cite{infonce}, also in \cref{subsec:debiased_infonce}, for the debiased MSE: 

\begin{definition} {\bf (Debiased Ideal pointwise loss)}
\begin{equation}
    \begin{split}
    &\mathcal{L}^{Debiased}_{point}= \E_{u} \Bigl( \tau_u^+\E_{i\sim p_u^+} l^+(\hat{y}_{ui}) + \tau_u^-\lambda \cdot \E_{j\sim p_u^-} l^-(\hat{y}_{ui}) \Bigr)\\
    & = \E_{u} \Big(\tau_u^+\E_{i\sim p_u^+}l^+(\hat{y}_{ui}) + \lambda\Big( 
    \E_{j\sim p_u}l^-(\hat{y}_{ui})  - \tau_u^+ \E_{k\sim k^+_u}l^-(\hat{y}_{uj}) \Big)\Big)
    \end{split}
\end{equation}
\end{definition}

In practice, assuming for each positive item, we sample $N$ positive items and $M$ negative items, then the empirical loss can be written as: 

\begin{equation}
    \begin{split}
    &\widetilde{\mathcal{L}}^{Debiased}_{point}= \E_{u} E_{i \sim p_u^+} \Bigl( \tau_u^+l^+(\hat{y}_{ui}) + \\ 
    &\lambda \Bigl( \frac{1}{N} \sum_{j=1;j\sim p_u}^N l^-(\hat{y}_{uj})- \tau_u^+\frac{1}{M} \sum_{k=1;k\sim p_u^+}^M l^-(\hat{y}_{uk}) \Bigr) \Bigr)
    \end{split}
\end{equation} 

As an example, we have the ideal $\overline{\mathcal{L}}_{CCL}$, debiased $\mathcal{L}^{debiased}_{CCL}$ , empirical debiased CCL $\widetilde{\mathcal{L}}^{Debiased}_{ccl}$ losses as follows: 

\begin{equation*}
    \begin{cases}
    & \overline{\mathcal{L}}_{CCL} = \E_{u} \Bigg(\tau_u^+\E_{i\sim p_u^+}(1-\hat{y}_{ui}) +
    \lambda\tau_u^-\E_{j\sim p_u^-}[ReLU(\hat{y}_{uj}-\epsilon)]\Bigg) \\
    &\mathcal{L}^{debiased}_{CCL} = \E_{u} \Bigg(\tau_u^+\E_{i\sim p_u^+}(1-\hat{y}_{ui}) \\  
    &+  \lambda\Big( 
    \E_{j\sim p_u}[ReLU(\hat{y}_{uj}-\epsilon)] - \tau_u^+ \E_{k\sim p^+_u}[ReLU(\hat{y}_{uk}-\epsilon)]\Big)\Bigg) \\
    &\widetilde{\mathcal{L}}^{Debiased}_{ccl}= \E_{u} E_{i \sim p_u^+} \Bigl( \tau_u^+(1-\hat{y}_{ui}) + \\ 
    &\lambda \Bigl( \frac{1}{N} \sum_{j=1;j\sim p_u}^N ReLU(\hat{f}_{uj}-\epsilon)- \tau_u^+\frac{1}{M} \sum_{k=1;k\sim p_u^+}^M ReLU(\hat{f}_{uj}-\epsilon) \Bigr) \Bigr)
    \end{cases}
\end{equation*}
We will study how (empirical) debiased MSE and CCL in the experimental result section. 

\subsection{iALS, EASE and their Debiasness}  
Here, we investigate how the debiased MSE loss will impact the solution of two (arguably) most popular linear recommendation models, iALS ~\cite{hu2008collaborative,ials_revisiting} and EASE~\cite{ease}. 
Rather surprisingly, we found the solvers of both models can absorb the debiased loss under their existing framework with reasonable conditions (details below). In other words, both iALS and EASE can be considered to be inherently debiased. 

To obtain the aforementioned insights, we first transform the debiased loss into the summation form being used in iALS and EASE. 
\begin{small}
\begin{equation*}
\begin{split}
&\mathcal{L}^{Debiased}_{mse} \approx 
\E_{u} \Bigg[ \frac{\tau^+_u} {|\mathcal{I}^+_u|}\sum\limits_{i\in \mathcal{I}_u^+}(\hat{y}_{ui}-1)^2 + \lambda\Bigg(\frac{1}{|\mathcal{I}|} \sum\limits_{t\in\mathcal{I}} \hat{y}_{ut}^2 -\frac{\tau^+_u}{|\mathcal{I}_u^+|}\sum\limits_{q\in \mathcal{I}_u^+}\hat{y}^2_{uq} \Bigg)\Bigg]\\
&=\sum\limits_{u} \Bigg[ \frac{1}{|\mathcal{I}|} c_u \sum\limits_{i\in \mathcal{I}_u^+}(\hat{y}_{ui}-1)^2 \  \ \ \ \ \ \ \ \ \ \ \ \ \ \text{where, }  c_u=\frac{|\mathcal{I}|}{|\mathcal{I}_u^+|} \tau_u^+
\\
&+ \lambda\Bigg(\frac{1}{|\mathcal{I}|} \sum\limits_{t\in\mathcal{I}} \hat{y}_{ut}^2 -\frac{1}{|\mathcal{I}|} c_u \sum\limits_{q\in \mathcal{I}_u^+}\hat{y}^2_{uq} \Bigg)\Bigg]\\
&\propto \sum\limits_{u} \Bigg[ c_u \sum\limits_{i\in \mathcal{I}_u^+}(\hat{y}_{ui}-1)^2 + \lambda\Bigg(\sum\limits_{t\in\mathcal{I}} \hat{y}_{ut}^2 - c_u\sum\limits_{q\in \mathcal{I}_u^+}\hat{y}^2_{uq} \Bigg)\Bigg]\\
&=\sum\limits_{u} \Bigg[ \sum\limits_{i\in \mathcal{I}_u^+}[c_u(\hat{y}_{ui}-1)^2 -c_u\lambda \hat{y}^2_{uq}]+ \lambda\sum\limits_{t\in\mathcal{I}} \hat{y}_{ut}^2 \Bigg]
\end{split}
\end{equation*}
\end{small}

\noindent{\bf Debiased iALS:}
Following the original iALS paper \citep{ials@hu2008}, and the latest revised iALS \citep{ials_revisiting,ials++}, the objective of iALS is given by \citep{ials_revisiting}:

\begin{equation}
    \begin{split}
    \mathcal{L}_{iALS}&=\sum\limits_{(u,i)\in S} (\hat{y}(u,i)-1)^2 + \alpha_0\sum\limits_{u\in U}\sum\limits_{i\in I}\hat{y}(u,i)^2\\
    &+\lambda\Bigg( \sum\limits_{u\in U}(|\mathcal{I}_u^+|+\alpha_0|\mathcal{I}|)^{\nu}||\mathbf{w}_u||^2 + \sum\limits_{i\in I}(|U_i^+|+\alpha_0|\mathcal{U}|)^{\nu}||\mathbf{h}_i||^2\Bigg)
    \end{split}
\end{equation}
where $\lambda$ is global regularization weight (hyperparameter) and $\alpha_0$ is unobserved weight (hyperparameter). $\mu$ is generally set to be $1$. 
Also, $\mathbf{w}_u$ and $\mathbf{h}_i$ are user and item vectors for user $u$ and item $i$, respectively. 

Now, we consider applying the debiased MSE loss to replace the original MSE loss (the first line in $\mathcal{L}_{iALS}$) with the second line unchanged: 

\begin{small}
\begin{equation*}
\begin{split}
&\mathcal{L}^{Debiased}_{iALS}
=\sum\limits_{u\in U} \Bigg[ \sum\limits_{i\in \mathcal{I}_u^+}[c_u(\hat{y}_{ui}-1)^2 -c_u\cdot \alpha_0 \cdot \hat{y}^2_{ui}]+ \alpha_0\sum\limits_{t\in I} \hat{y}_{ut}^2 \Bigg]\\
 &+\lambda\Bigg( \sum\limits_{u\in U}(|\mathcal{I}_u^+|+\alpha_0|\mathcal{I}|)^{\nu}||\mathbf{w}_u||^2 + \sum\limits_{i\in I}(|\mathcal{U}_i^+|+\alpha_0|\mathcal{U}|)^{\nu}||\mathbf{h}_i||^2\Bigg)
\end{split}
\end{equation*}
\end{small}

Then we have the following conclusion:
\begin{theorem}\label{th1}
For any debiased iALS loss $\mathcal{L}^{Debiased}_{iALS}$ with parameters $\alpha_0$ and $\lambda$ with constant $c_u$ for all users, there are original iALS loss with parameters  $\alpha_0^\prime$ and $\lambda^\prime$, which have the same closed form solutions (up to a constant factor) for fixing item vectors and user vectors, respectively. (proof in \cref{app:proof})
\end{theorem}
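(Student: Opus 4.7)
The plan is to exploit the alternating least squares (ALS) structure: since iALS is solved by closed-form updates for $\mathbf{w}_u$ (fixing all $\mathbf{h}_i$) and symmetrically for $\mathbf{h}_i$, it suffices to show that the normal equations arising from $\mathcal{L}^{Debiased}_{iALS}$ coincide, up to a scalar multiple on both sides, with the normal equations arising from an original iALS objective with suitably re-parametrized $(\alpha_0',\lambda')$. The scalar multiple on both sides of a linear system leaves the solution invariant, while a scalar multiple only on the right-hand side scales the solution by that constant, which matches the statement ``up to a constant factor''.

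Concretely, I would first fix the item vectors and expand the debiased loss as a quadratic in $\mathbf{w}_u$. For each positive item $i\in\mathcal{I}_u^+$, the combined contribution from the two ``per-user'' terms and the $\alpha_0\sum_{t\in\mathcal{I}}\hat{y}_{ut}^2$ term (restricted to $i$) is
\begin{equation*}
c_u(\hat{y}_{ui}-1)^2 - c_u\alpha_0\,\hat{y}_{ui}^2 + \alpha_0\,\hat{y}_{ui}^2
\;=\; [\,c_u(1-\alpha_0)+\alpha_0\,]\,\hat{y}_{ui}^2 - 2c_u\,\hat{y}_{ui} + c_u,
\end{equation*}
so that the gradient condition $\nabla_{\mathbf{w}_u}\mathcal{L}^{Debiased}_{iALS}=\mathbf{0}$ becomes
\begin{equation*}
\Bigl[c_u(1-\alpha_0)\sum_{i\in\mathcal{I}_u^+}\!\mathbf{h}_i\mathbf{h}_i^T + \alpha_0\sum_{i\in\mathcal{I}}\mathbf{h}_i\mathbf{h}_i^T + \lambda\mu_u I\Bigr]\mathbf{w}_u
\;=\; c_u\sum_{i\in\mathcal{I}_u^+}\mathbf{h}_i.
\end{equation*}
Dividing both sides by $c_u(1-\alpha_0)$ (assuming $c_u(1-\alpha_0)>0$, which is the sensible regime for a downweighted positive loss) puts this system into the canonical iALS normal-equation form with
\begin{equation*}
\alpha_0' \;=\; \frac{\alpha_0}{c_u(1-\alpha_0)},
\qquad
\lambda'\mu_u' \;=\; \frac{\lambda\mu_u}{c_u(1-\alpha_0)},
\end{equation*}
and the right-hand side differs from the original one by the overall factor $1/(1-\alpha_0)$. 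Because $c_u=c$ is assumed constant in $u$, the mapping $\alpha_0\mapsto\alpha_0'$ is a single scalar. The item-side update follows by an identical computation after swapping the roles of users and items and using the symmetry of the MSE loss.

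The main obstacle is matching the regularization weight $\lambda'\mu_u'=\lambda\mu_u/(c(1-\alpha_0))$ consistently across users, because $\mu_u=(|\mathcal{I}_u^+|+\alpha_0|\mathcal{I}|)^\nu$ depends on both $\alpha_0$ and $u$. When $\alpha_0'\neq\alpha_0$, the ratio $\mu_u'/\mu_u$ is in general user-dependent, so a single global $\lambda'$ reproduces $\mu_u\mapsto\mu_u'$ only in special regimes (e.g.\ $\nu=0$, or $|\mathcal{I}_u^+|\ll\alpha_0|\mathcal{I}|$ uniformly in $u$, in which case $\mu_u\approx(\alpha_0|\mathcal{I}|)^\nu$). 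I therefore expect the proof to either (i) state the equivalence under the standard $\nu=1$/frequency-regularization assumption of iALS, absorbing a user-independent scaling into $\lambda'$; or (ii) allow $\lambda'$ to be defined implicitly so that the product $\lambda'\mu_u'$ realizes the required value, which still yields the same per-user closed form. Either route justifies ``up to a constant factor,'' and the appendix reference in the paper suggests the authors pursue one of these routes.
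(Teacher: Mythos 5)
Your proposal is correct and follows essentially the same route as the paper's appendix proof: expand the debiased objective as a quadratic in $\mathbf{w}_u$ (respectively $\mathbf{h}_i$), obtain normal equations with system matrix $c_u(1-\alpha_0)H^u_S(H^u_S)^T+\alpha_0 HH^T+\lambda_u I$, divide through by $c_u(1-\alpha_0)$, and read off $\alpha_0'=\alpha_0/((1-\alpha_0)c_u)$ and $\lambda'=\lambda/((1-\alpha_0)c_u)$, with the solution rescaled by the constant left on the right-hand side. The obstacle you flag --- that the frequency-dependent weight $(|\mathcal{I}_u^+|+\alpha_0|\mathcal{I}|)^{\nu}$ changes in a user-dependent way under $\alpha_0\mapsto\alpha_0'$, so a single global $\lambda'$ only matches the required per-user regularization in special regimes --- is genuine and is silently glossed over in the paper's proof, so your explicit caveat makes the argument more honest than the original.
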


\noindent{\bf Debiased EASE:}
EASE~\cite{ease} has shown to be a simple yet effective recommendation model and evaluation bases (for a small number of items), and it aims to minimize:
\begin{equation*}
    \begin{split}
        &\mathcal{L}_{ease}=||X-XW||^2_F + \lambda||W||^2_F\\
        & s.t. \quad diag(W)=0
    \end{split}
\end{equation*}
It has a closed-form solution~\cite{ease}:
\begin{equation}
\begin{split}
&P = (X^TX+\lambda I)^{-1}\\
    &W^* = I-P\cdot dMat(diag(1\oslash P))
\end{split}
\end{equation}
where $\oslash$ denotes the elementwise division, and $diag$ vectorize the diagonal and $dMat$ transforms a vector to a diagonal matrix. 

To apply the debiased MSE loss into the EASE objective, let us first further transform $\mathcal{L}^{Debiased}_{mse}$ into parts of known positive items and parts of unknown items: 
\begin{equation*}
    \begin{split}
&\mathcal{L}^{Debiased}_{mse}=\sum\limits_{u} \Bigg[ \sum\limits_{i\in \mathcal{I}_u^+}[c_u(\hat{y}_{ui}-1)^2 -c_u\lambda \hat{y}^2_{uq}]+ \lambda\sum\limits_{t\in\mathcal{I}} \hat{y}_{ut}^2 \Bigg]\\
        &=\sum\limits_{u} \Bigg[ \sum\limits_{i\in \mathcal{I}_u^+}[c_u(\hat{y}_{ui}-1)^2 -c_u\lambda \hat{y}^2_{ui}]+ \lambda\sum\limits_{t\in\mathcal{I}_u^+} \hat{y}_{ut}^2 + \lambda\sum\limits_{p\in\mathcal{I}\backslash \mathcal{I}_u^+} \hat{y}_{ut}^2 \Bigg]\\
&=\sum\limits_{u} \Bigg[\sum\limits_{i\in \mathcal{I}_u^+}[c_u(\hat{y}_{ui}-1)^2+c_u \sum\limits_{p\in\mathcal{I}\backslash \mathcal{I}_u^+} \hat{y}_{ut}^2 \Bigg]\\
&+ \lambda (1-c_u)\sum\limits_{u}\sum\limits_{i\in \mathcal{I}_u^+}\hat{y}^2_{ui} + (\lambda-c_u) \sum\limits_{p\in\mathcal{I}\backslash \mathcal{I}_u^+} \hat{y}_{ut}^2 \\
 &=||\sqrt{C_u}(X-XW)||^2_F \\
        &-\lambda||X\odot \sqrt{C_u-I}XW||^2_F -||(1-X)\odot \sqrt{C_u-\lambda I}XW||^2_F
    \end{split}
\end{equation*}
Note that 
\begin{equation}
\sum\limits_{u}\sum\limits_{i\in \mathcal{I}}\hat{y}^2_{ui}=||XW||^2_F
\end{equation}

To find the closed-form solution, we further restrict $\lambda=1$ and consider $c_u$ as a constant (always $>1$), thus, we have the following simplified format: 


\begin{equation}
    \begin{split}
    \mathcal{L}^{Debiased}_{mse}=&||\sqrt{C_u}(X-XW)||^2_F
        -||\sqrt{C_u-I}XW||^2_F\\
        &=||X-XW||^2_F
        -\alpha ||XW||^2_F 
    \end{split}
\end{equation}
where, $\alpha=c_u-1$. 
Note that if we only minimize this objective function $\mathcal{L}^{Debiased}_{mse}$, we have the closed form: 
$$W = ((1-\lambda )X^TX)^{-1}X^TX.$$ 

Now, considering the debiased version of EASE:  
\begin{equation*}
    \begin{split}
        &\mathcal{L}^{Debiased}_{ease}=||X-XW||^2_F-\alpha||XW||^2_F + \lambda||W||^2_F\\
        & s.t. \quad diag(W)=0
    \end{split}
\end{equation*}
It has the following closed-form solution (by similar inference as EASE~\cite{ease}):


\begin{equation}
\begin{split}
\widehat{W}&=\frac{1}{1-\alpha} (I -\hat{P}\cdot dMat( \vec{1}\oslash diag(\hat{P}) ), \\ 
where \ \  & \hat{P}= (X^TX+\frac{\lambda}{1-\alpha} I)^{-1}
\end{split}
\end{equation}
Now, we can make the following observation: 
\begin{theorem}
For any debiased EASE loss $\mathcal{L}^{Debiased}_{ease}$ with parameters $\alpha$ and $\lambda$ with constant $c_u>1$ for all users, there are original EASE loss with parameter,  $\lambda^\prime$, which have the same closed form solutions EASE (up to constant factor). 
\end{theorem}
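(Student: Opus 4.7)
The plan is to rederive the closed form of the debiased EASE by the same Lagrangian argument used for the original EASE, and then exhibit an explicit reparametrization that identifies the two solutions up to a scalar factor. First I would collect the quadratic-in-$W$ pieces of $\|X-XW\|_F^2 - \alpha\|XW\|_F^2 + \lambda\|W\|_F^2$ into $(1-\alpha)\|XW\|_F^2+\lambda\|W\|_F^2$, with the cross term $-2\langle X^T X,W\rangle$ unchanged. Introducing a multiplier vector $\gamma$ for the constraint $\mathrm{diag}(W)=0$ and setting the gradient to zero yields the normal equation
\begin{equation*}
\bigl((1-\alpha)X^T X + \lambda I\bigr)\,W \;=\; X^T X - dMat(\gamma).
\end{equation*}

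Dividing by $(1-\alpha)$ and writing $\hat P = (X^T X + \tfrac{\lambda}{1-\alpha} I)^{-1}$ gives $W=\tfrac{1}{1-\alpha}\hat P\bigl(X^T X - dMat(\gamma)\bigr)$. Using $\hat P X^T X = I - \tfrac{\lambda}{1-\alpha}\hat P$, the primal takes the form $W=\tfrac{1}{1-\alpha}\bigl(I-\hat P\bigl(\tfrac{\lambda}{1-\alpha}I + dMat(\gamma)\bigr)\bigr)$. Enforcing $\mathrm{diag}(W)=0$ pins down $\gamma_i = 1/\hat P_{ii} - \lambda/(1-\alpha)$, which collapses the bracket to $\hat P\cdot dMat(\vec 1\oslash \mathrm{diag}(\hat P))$ and reproduces the stated solution $\widehat W = \tfrac{1}{1-\alpha}\bigl(I - \hat P\, dMat(\vec 1\oslash \mathrm{diag}(\hat P))\bigr)$.

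Second, I would compare against the original EASE closed form $W^\star = I - P\,dMat(\vec 1\oslash \mathrm{diag}(P))$ with $P=(X^T X+\lambda' I)^{-1}$. Choosing $\lambda' = \lambda/(1-\alpha)$ makes $P=\hat P$, whence $\widehat W = \tfrac{1}{1-\alpha} W^\star$. Because the recommendation score $XW$ is used only for ranking items per user, a positive global rescaling is irrelevant, so the two models induce identical orderings; this gives the claimed equivalence ``up to a constant factor.''

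The main obstacle I anticipate is not the algebra but verifying the admissible regime. The reparametrization $\lambda' = \lambda/(1-\alpha)$ presupposes $\alpha<1$, both so that the rescaling $1/(1-\alpha)$ is positive and so that $(1-\alpha)X^T X + \lambda I$ is positive definite and the objective is bounded below. I would therefore explicitly flag $\alpha\in[0,1)$ as a precondition (equivalently, after the identification $\alpha = c_u-1$ inherited from the debiased-MSE reduction, $c_u\in[1,2)$), and point out that this regime is automatically satisfied whenever the positive-class prior $\tau^+_u$ is close to the empirical ratio $|\mathcal{I}^+_u|/|\mathcal{I}|$, matching the setting under which the debiased objective was constructed.
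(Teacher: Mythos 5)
Your proposal is correct and takes essentially the same route as the paper: the paper's proof is the one-liner ``following the above analysis, let $\lambda'$ be the rescaled regularizer,'' where ``the above analysis'' is exactly the Lagrangian derivation of the debiased closed form $\widehat W = \tfrac{1}{1-\alpha}\bigl(I-\hat P\, dMat(\vec 1 \oslash \mathrm{diag}(\hat P))\bigr)$ with $\hat P=(X^TX+\tfrac{\lambda}{1-\alpha}I)^{-1}$ that you spell out in full. Two small remarks. First, the paper's stated reparametrization is $\lambda'=\frac{\lambda}{(1-\alpha)c_u}$ rather than your $\lambda'=\frac{\lambda}{1-\alpha}$; the extra $c_u$ is a bookkeeping artifact of whether the overall factor $c_u$ multiplying the reconstruction term is normalized away before writing $\mathcal{L}^{Debiased}_{ease}$, and your choice is the one actually consistent with the displayed objective $\|X-XW\|_F^2-\alpha\|XW\|_F^2+\lambda\|W\|_F^2$ and its displayed solution, so this is not an error on your part. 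Second, your flagging of the admissible regime $\alpha<1$ (equivalently $c_u\in[1,2)$ under $\alpha=c_u-1$) is a genuine point the paper silently skips: for $c_u\geq 2$ the matrix $(1-\alpha)X^TX+\lambda I$ need not be positive definite, the objective is not bounded below, and the ``closed form'' is no longer a minimizer, so the theorem as stated with only $c_u>1$ is strictly speaking too permissive. Your version is the more careful one.
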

\begin{proof}
Following above analysis and let 
$\lambda^\prime=\frac{\lambda}{(1-\alpha)c_u}$.
\end{proof}

These results also indicate the sampling based approach to optimize the debiased $MSE$ loss may also be rather limited. In the experimental results, we will further validate this conjecture on debiased MSE loss. 

\section{Experiment}\label{sec:ccl-experiment}

\begin{table}
\caption{Statistics of the datasets.}
\label{tab:dataset}
\centering
\begin{tabular}{c|c|c|c}
\hline { Dataset } & User \# & Item \# & Interaction \#  \\
\hline 
\hline Yelp2018 & 31,668 & 38,048 & $1,561,406$  \\
\hline Gowalla & 29,858 & 40,981 & $1,027,370$  \\
\hline Amazon-Books & 52,643 & 91,599 & $2,984,108$  \\
\hline
\end{tabular}
\end{table}

\begin{table*}[]
\caption{The performance of different loss functions w.r.t its debiased one on MF (Matrix Factorization). The unit of the metric values is $\%$. We also present and highlight the relative improvement (RI) of the debiased loss function with its vanilla (biased) counterpart.}
\label{tab:debias}
\resizebox{1.0\textwidth}{!}{%
\begin{scriptsize}
\begin{tabular}{|cc|cc|cc|cc|c|}
\hline
\multicolumn{2}{|c|}{\multirow{2}{*}{Loss}}                                                                                                                           & \multicolumn{2}{c|}{Yelp}                              & \multicolumn{2}{c|}{Gowalla}                             & \multicolumn{2}{c|}{Amazon-Books}                        & \multirow{2}{*}{Average RI\%} \\ \cline{3-8}
\multicolumn{2}{|c|}{}                                                                                                                                                & \multicolumn{1}{c|}{Recall@20}       & NDCG@20         & \multicolumn{1}{c|}{Recall@20}        & NDCG@20          & \multicolumn{1}{c|}{Recall@20}        & NDCG@20          &                               \\ \hline
\multicolumn{1}{|c|}{\multirow{3}{*}{CCL}}                                                   & biased                                                                 & \multicolumn{1}{c|}{6.91}            & 5.67            & \multicolumn{1}{c|}{18.17}            & 14.61            & \multicolumn{1}{c|}{5.27}             & 4.22             & -                             \\ \cline{2-9} 
\multicolumn{1}{|c|}{}                                                                       & debiased                                                               & \multicolumn{1}{c|}{6.98}            & 5.71            & \multicolumn{1}{c|}{18.42}            & 14.97            & \multicolumn{1}{c|}{5.49}             & 4.40             & -                             \\ \cline{2-9} 
\multicolumn{1}{|c|}{}                                                                       & RI\%                                                                   & \multicolumn{1}{c|}{\textbf{1.01\%}} & \textbf{0.71\%} & \multicolumn{1}{c|}{\textbf{1.38\%}}  & \textbf{2.46\%}  & \multicolumn{1}{c|}{\textbf{4.17\%}}  & \textbf{4.27\%}  & \textbf{2.33\%}               \\ \hline
\multicolumn{1}{|c|}{\multirow{3}{*}{MSE}}                                                   & biased                                                                 & \multicolumn{1}{c|}{5.96}            & 4.95            & \multicolumn{1}{c|}{14.27}            & 12.39            & \multicolumn{1}{c|}{3.36}             & 2.68             & -                             \\ \cline{2-9} 
\multicolumn{1}{|c|}{}                                                                       & debiased                                                               & \multicolumn{1}{c|}{6.05}            & 5.00            & \multicolumn{1}{c|}{14.98}            & 12.64            & \multicolumn{1}{c|}{3.35}             & 2.61             & -                             \\ \cline{2-9} 
\multicolumn{1}{|c|}{}                                                                       & RI\%                                                                   & \multicolumn{1}{c|}{\textbf{1.51\%}} & \textbf{1.01\%} & \multicolumn{1}{c|}{\textbf{4.98\%}}  & \textbf{2.02\%}  & \multicolumn{1}{c|}{\textbf{-0.30\%}} & \textbf{-2.61\%} & \textbf{1.10\%}               \\ \hline
\multicolumn{1}{|c|}{\multirow{3}{*}{InfoNCE}}                                               & biased                                                                 & \multicolumn{1}{c|}{6.54}            & 5.36            & \multicolumn{1}{c|}{16.45}            & 13.43            & \multicolumn{1}{c|}{4.60}             & 3.56             & -                             \\ \cline{2-9} 
\multicolumn{1}{|c|}{}                                                                       & debiased                                                               & \multicolumn{1}{c|}{6.66}            & 5.45            & \multicolumn{1}{c|}{16.57}            & 13.72            & \multicolumn{1}{c|}{4.65}             & 3.66             & -                             \\ \cline{2-9} 
\multicolumn{1}{|c|}{}                                                                       & RI\%                                                                   & \multicolumn{1}{c|}{\textbf{1.83\%}} & \textbf{1.68\%} & \multicolumn{1}{c|}{\textbf{0.73\%}}  & \textbf{2.16\%}  & \multicolumn{1}{c|}{\textbf{1.09\%}}  & \textbf{2.81\%}  & \textbf{1.72\%}               \\ \hline
\multicolumn{1}{|c|}{\multirow{4}{*}{\begin{tabular}[c]{@{}c@{}}MINE\\ (ours)\end{tabular}}} & MINE                                                                   & \multicolumn{1}{c|}{6.56}            & 5.37            & \multicolumn{1}{c|}{16.93}            & 14.28            & \multicolumn{1}{c|}{5.00}             & 3.93             & -                             \\ \cline{2-9} 
\multicolumn{1}{|c|}{}                                                                       & \begin{tabular}[c]{@{}c@{}}RI\% \\ (w.r.t baised infoNCE)\end{tabular} & \multicolumn{1}{c|}{\textbf{0.31\%}} & \textbf{0.19\%} & \multicolumn{1}{c|}{\textbf{2.92\%}}  & \textbf{6.33\%}  & \multicolumn{1}{c|}{\textbf{8.70\%}}  & \textbf{10.39\%} & \textbf{4.80\%}               \\ \cline{2-9} 
\multicolumn{1}{|c|}{}                                                                       & MINE+                                                                  & \multicolumn{1}{c|}{7.12}            & 5.86            & \multicolumn{1}{c|}{18.13}            & 15.31            & \multicolumn{1}{c|}{5.18}             & 4.05             & -                             \\ \cline{2-9} 
\multicolumn{1}{|c|}{}                                                                       & \begin{tabular}[c]{@{}c@{}}RI\% \\ (w.r.t baised infoNCE)\end{tabular} & \multicolumn{1}{c|}{\textbf{8.87\%}} & \textbf{9.33\%} & \multicolumn{1}{c|}{\textbf{10.21\%}} & \textbf{14.00\%} & \multicolumn{1}{c|}{\textbf{12.61\%}} & \textbf{13.76\%} & \textbf{11.46\%}              \\ \hline
\end{tabular}
\end{scriptsize}
}
\end{table*}

In this section, we experimentally study the most widely used loss functions in the existing recommendation models and their corresponding debiased loss as well as newly proposed MINE loss under recommendation settings. This could help better understand and compare the resemblance and discrepancy between different loss functions, and understand their debiasing benefits.  Specifically, we would like to answer the following questions:
\begin{itemize}
\item Q1. How does the bias introduced by the existing negative sampling impact model performance and how do our proposed debiased losses perform compared to traditional biased ones?
\item Q2. How does our proposed MINE objective function in Section \ref{sec:ccl-softmax} compare to widely used ones?
\item Q3. What is the effect of different hyperparameters on the various debiased loss functions?
\end{itemize}

\subsection{Experimental Setup}

\subsubsection{Datasets}
We use three datasets, \textbf{Amazon-Books}, \textbf{Yelp2018} and \textbf{Gowalla} commonly used by a number of recent studies \citep{simplex,lightgcn,ngcf,sgl-ed}. We obtain the publicly available processed data and follow the same setting as \citep{ngcf,lightgcn,simplex}.

\subsubsection{Evaluation Metrics} Consistent with benchmarks \citep{ngcf,lightgcn,simplex}, we evaluate the performance by $Recall@20$ and $NDCG@20$ over all items \citep{walid@sample}.

\subsubsection{Models}
This chapter focuses on the study of loss functions, which is architecture agnostic. Thus, we evaluate our loss functions on top of the simplest one - Matrix Factorization (MF) in this study. By eliminating the influence of the architecture, we can compare the power of various loss functions fairly. 
Meantime, we compare the performance of the proposed methods with a few classical machine learning models as well as advanced deep learning models, including, iALS \citep{ials_revisiting}, MF-BPR\citep{bpr}, MF-CCL \citep{simplex}, YouTubeNet \citep{youtube}, NeuMF \citep{ncf}, LightGCN \citep{lightgcn}, NGCF \citep{ngcf}, CML \citep{cml}, PinSage \citep{pinsage}, GAT \citep{gat}, MultiVAE \citep{multi-vae}, SGL-ED \citep{sgl-ed}.
Those comparisons can help establish that the loss functions with basic MF can produce strong and robust baselines for the recommendation community to evaluate more advanced and sophisticated models. 

\subsubsection{Loss functions}
As the core of this chapter, we focus on some most widely used loss functions and their debiased counterparts, including InfoNCE \citep{infonce}, MSE \citep{ials@hu2008,ials_revisiting}, and state-of-the-art CCL \citep{simplex}, as well as new proposed MINE \citep{MINE}.

\subsubsection{Reproducibility}

To implement reliable MF-based baselines (with CCL, MSE, InfoNCE objectives), we search a few hyperparameters and present the best one. By default, we set the batch size of training as $512$. Adam optimizer learning rate is initially set as $1e-4$ and reduced by $0.5$ on the plateau, early stopped until it arrives at $1e-6$. For the cases where negative samples get involved, we would set it as $800$ by default. We search the global regularization weight between $1e-9$ to $1e-5$ with an increased ratio of $10$. 


\begin{table*}[]

\resizebox{1.0\textwidth}{!}{%
\begin{scriptsize}
\begin{tabular}{|ccccccc|}
\hline
\multicolumn{1}{|c|}{\multirow{2}{*}{Model}} & \multicolumn{2}{c|}{Yelp}                                               & \multicolumn{2}{c|}{Gowalla}                                              & \multicolumn{2}{c|}{Amazon-Books}                  \\ \cline{2-7} 
\multicolumn{1}{|c|}{}                       & \multicolumn{1}{c|}{Recall@20}     & \multicolumn{1}{c|}{NDCG@20}       & \multicolumn{1}{c|}{Recall@20}      & \multicolumn{1}{c|}{NDCG@20}        & \multicolumn{1}{c|}{Recall@20}     & NDCG@20       \\ \hline
\multicolumn{7}{|c|}{Deep Learning Based}                                                                                                                                                                                                               \\ \hline
\multicolumn{1}{|c|}{YouTubeNet* \citep{youtube}}            & \multicolumn{1}{c|}{6.86}          & \multicolumn{1}{c|}{\textbf{5.67}} & \multicolumn{1}{c|}{17.54}          & \multicolumn{1}{c|}{14.73}          & \multicolumn{1}{c|}{5.02}          & 3.88          \\ \hline
\multicolumn{1}{|c|}{NeuMF* \citep{ncf}}                 & \multicolumn{1}{c|}{4.51}          & \multicolumn{1}{c|}{3.63}          & \multicolumn{1}{c|}{13.99}          & \multicolumn{1}{c|}{12.12}          & \multicolumn{1}{c|}{2.58}          & 2             \\ \hline
\multicolumn{1}{|c|}{CML* \citep{cml}}                   & \multicolumn{1}{c|}{6.22}          & \multicolumn{1}{c|}{5.36}          & \multicolumn{1}{c|}{16.7}           & \multicolumn{1}{c|}{12.92}          & \multicolumn{1}{c|}{\textbf{5.22}} & 4.28          \\ \hline
\multicolumn{1}{|c|}{MultiVAE* \citep{multi-vae}}              & \multicolumn{1}{c|}{5.84}          & \multicolumn{1}{c|}{4.5}           & \multicolumn{1}{c|}{16.41}          & \multicolumn{1}{c|}{13.35}          & \multicolumn{1}{c|}{4.07}          & 3.15          \\ \hline
\multicolumn{1}{|c|}{LightGCN* \citep{lightgcn}}              & \multicolumn{1}{c|}{6.49}          & \multicolumn{1}{c|}{5.3}           & \multicolumn{1}{c|}{\textbf{18.3}}  & \multicolumn{1}{c|}{\textbf{15.54}} & \multicolumn{1}{c|}{4.11}          & 3.15          \\ \hline
\multicolumn{1}{|c|}{NGCF* \citep{ngcf}}                  & \multicolumn{1}{c|}{5.79}          & \multicolumn{1}{c|}{4.77}          & \multicolumn{1}{c|}{15.7}           & \multicolumn{1}{c|}{13.27}          & \multicolumn{1}{c|}{3.44}          & 2.63          \\ \hline
\multicolumn{1}{|c|}{GAT* \citep{gat}}                   & \multicolumn{1}{c|}{5.43}          & \multicolumn{1}{c|}{4.32}          & \multicolumn{1}{c|}{14.01}          & \multicolumn{1}{c|}{12.36}          & \multicolumn{1}{c|}{3.26}          & 2.35          \\ \hline
\multicolumn{1}{|c|}{PinSage* \citep{pinsage}}               & \multicolumn{1}{c|}{4.71}          & \multicolumn{1}{c|}{3.93}          & \multicolumn{1}{c|}{13.8}           & \multicolumn{1}{c|}{11.96}          & \multicolumn{1}{c|}{2.82}          & 2.19          \\ \hline
\multicolumn{1}{|c|}{SGL-ED* \citep{sgl-ed}}                & \multicolumn{1}{c|}{6.75}          & \multicolumn{1}{c|}{5.55}          & \multicolumn{1}{c|}{-}              & \multicolumn{1}{c|}{-}              & \multicolumn{1}{c|}{4.78}          & 3.79          \\ \hline
\multicolumn{7}{|c|}{MF based}                                                                                   \\ \hline
\multicolumn{1}{|c|}{iALS** \citep{ials_revisiting}}                & \multicolumn{1}{c|}{6.06}          & \multicolumn{1}{c|}{5.02}          & \multicolumn{1}{c|}{13.88}          & \multicolumn{1}{c|}{12.24}          & \multicolumn{1}{c|}{2.79}          & 2.25        \\
\hline
\multicolumn{1}{|c|}{MF-BPR** \citep{bpr}}                & \multicolumn{1}{c|}{5.63}          & \multicolumn{1}{c|}{4.60}          & \multicolumn{1}{c|}{15.90}          & \multicolumn{1}{c|}{13.62}          & \multicolumn{1}{c|}{3.43}          & 2.65         \\ \hline
\multicolumn{1}{|c|}{MF-CCL** \citep{simplex}}               & \multicolumn{1}{c|}{\textbf{6.91}} & \multicolumn{1}{c|}{\textbf{5.67}} & \multicolumn{1}{c|}{\textbf{18.17}} & \multicolumn{1}{c|}{14.61}          & \multicolumn{1}{c|}{\textbf{5.27}} & \textbf{4.22} \\ \hline
\multicolumn{1}{|c|}{MF-CCL-debiased (ours)} & \multicolumn{1}{c|}{\textbf{6.98}} & \multicolumn{1}{c|}{\textbf{5.71}} & \multicolumn{1}{c|}{\textbf{18.42}} & \multicolumn{1}{c|}{\textbf{14.97}} & \multicolumn{1}{c|}{\textbf{5.49}} & \textbf{4.40} \\ \hline
\multicolumn{1}{|c|}{MF-MINE+ (ours)}        & \multicolumn{1}{c|}{\textbf{7.12}} & \multicolumn{1}{c|}{\textbf{5.86}} & \multicolumn{1}{c|}{18.13}          & \multicolumn{1}{c|}{\textbf{15.31}} & \multicolumn{1}{c|}{5.18}          & \textbf{4.06} \\ \hline
\end{tabular}
\end{scriptsize}
}
\caption{Perfomance comparison to widely used models. We highlight the top-3 best models in each column. Results of models marked with $*$ are duplicated from \citep{simplex} for consistency. For a fair comparison of the MF-based models marked with $**$ are reproduced by ourselves with thorough parameter searching detailed in \cref{app:reproduce}.}
\label{tab:main}
\end{table*}

\subsection{Q1. (Biased) Loss vs Debiased Loss}
Here, we seek to examine the negative impact of bias introduced by negative sampling and assess the effectiveness of our proposed debiased loss functions. We utilize Matrix Factorization (MF) as the backbone and compare the performance of the original (biased) version and the debiased version for the loss functions of $CCL$ \citep{simplex}, $MSE$ \citep{ials@hu2008,ials_revisiting}, and $InfoNCE$ \citep{infonce}.   Table \ref{tab:debias} shows the biased loss vs debiased loss (on top of the MF) results. 

The results show that the debiased methods consistently outperform the biased ones for the CCL and InfoNCE loss functions in all cases. In particular, the debiased method exhibits remarkable improvements of up to over $4\%$ for the CCL loss function on the $Amazon-Books$ dataset. This highlights the adverse effect of bias in recommendation quality and demonstrates the success of applying contrastive learning based debiasing methods in addressing this issue.
In addition, despite the robustness with respect to the biased MSE loss for iALS and EASE (as being proof in \cref{sec:linear}, the debiased MSE still provides a slight gain over the existing MSE loss. Finally, we note that the performance gained by the debiasing losses is also consistent with the results over other machine learning tasks as shown in ~\cite{debiased}.

\subsection{Q2. Performance of MINE $\backslash$ MINE+}
Next, we delve into the examination of the newly proposed Mutual Information Neural Estimator (MINE) loss function. As we mentioned before, to our best knowledge, this is the first study utilizing and evaluating MINE loss in recommendation settings. We compare MINE with other loss functions, all set MF as the backbone, as displayed in Table \ref{tab:debias},. 
Our results show the surprising effectiveness of MINE-type losses.
In particular, the basic MINE is shown close to $5\%$ average lifts over the (biased) InfoNCE loss, a rather stand loss in recommendation models. MINE+ demonstrates comparable or superior performance compared to the state-of-the-art CCL loss function. Furthermore, MINE+ consistently achieves better results than InfoNCE with a maximum improvement of 14\% (with an average improvement over $11\%$. 

These findings highlight the potential of MINE as an effective objective function for recommendation models and systems.  
Noting that our MINE+ loss is model agnostic, which can be applied to any representation-based model.
But using it with the basic MF already provides strong and robust baselines for evaluating recommendation models (and systems). 

Furthermore, we compare MF-MINE+ as well as our debiased MF-CCL model with some advanced machine learning and deep learning models in \cref{tab:main}. We can find these two methods perform best or second best in most cases, affirming the superiority of the MINE+ loss and debiased CCL.


\subsection{Q3. Hyperparameter analysis}
Here, we analyze the impacts of three key hyperparameters in the debiased framework: negative weight $\lambda$, number of negative samples, and number of positive samples as shown in \cref{fig:hyper_pos_neg,fig:hyper_lambda}.

The left side of \cref{fig:hyper_pos_neg} reveals that, in general, increasing the number of negative samples leads to better performance in models. The MSE loss requires fewer negative samples to reach a stable plateau but performs worse when the number of negative samples increases. The right side of \cref{fig:hyper_pos_neg} shows that the number of positive samples in the debiased formula has a minor impact on performance compared to negative samples.

In \cref{fig:hyper_lambda}, we illustrate how the negative weight affects the performance of the debiased CCL, MSE, and InfoNCE losses. In general, we would expect the $\cap$ shaped curve for quality metric ($Recall@20$ here) for the parameters, which is consistent with our curves. The optimal negative weight for these loss functions locates at different orders of magnitude due to their intrinsic differences of loss functions. In general, the negative weights of MSE and CCLE are relatively small and the negative weights of InfoNCE are in the order of hundreds and related to the number of negative samples (\cref{app:reproduce}). 

In \cref{fig:temp}, we investigate how temperature can affect the performance of $MINE+$ loss function. In general, it would achieve optimal around $0.5$ for all datasets (see \cref{app:reproduce}).

\begin{figure}
    \centering
    \includegraphics[width=0.8\linewidth]{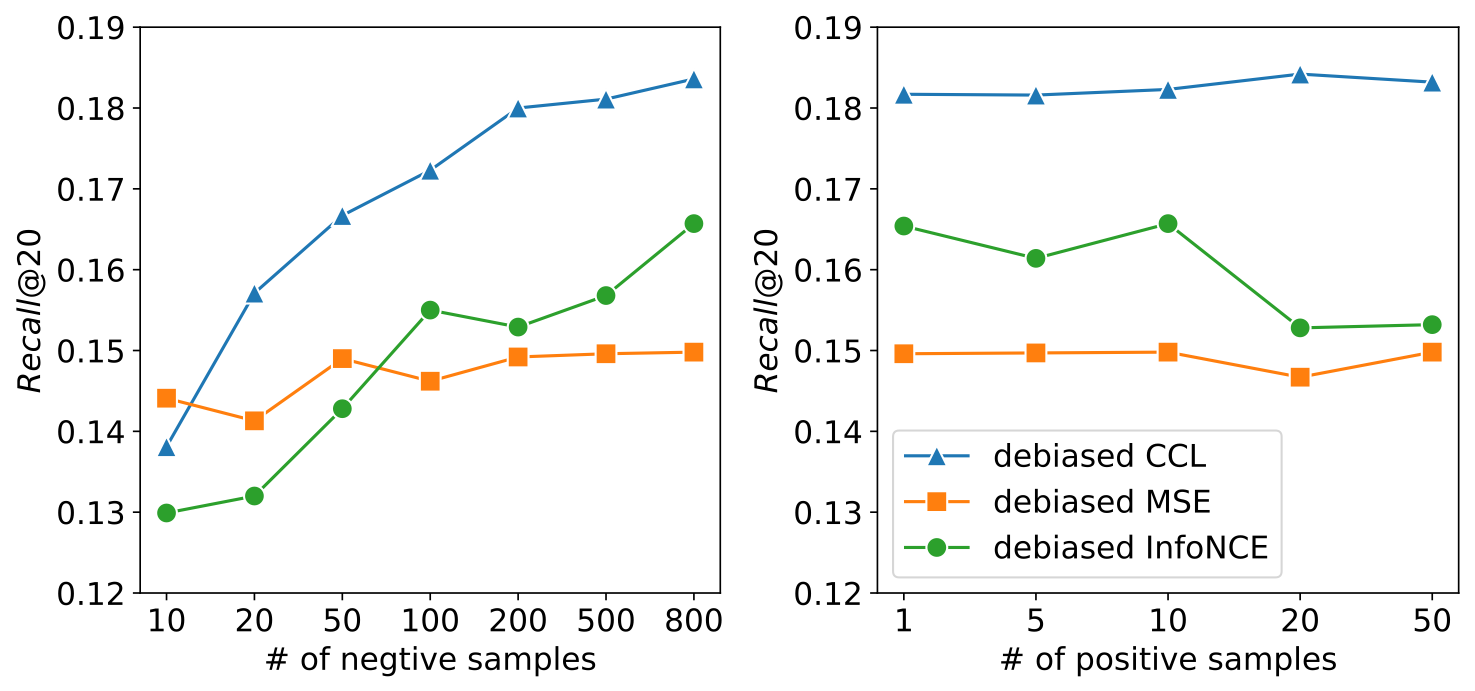}
    \vspace{-5pt}
    \caption{Effect of number of positive and negative samples on $Gowalla$}
    \label{fig:hyper_pos_neg}
\end{figure}

\begin{figure}
    \centering
    \includegraphics[width=0.8\linewidth]{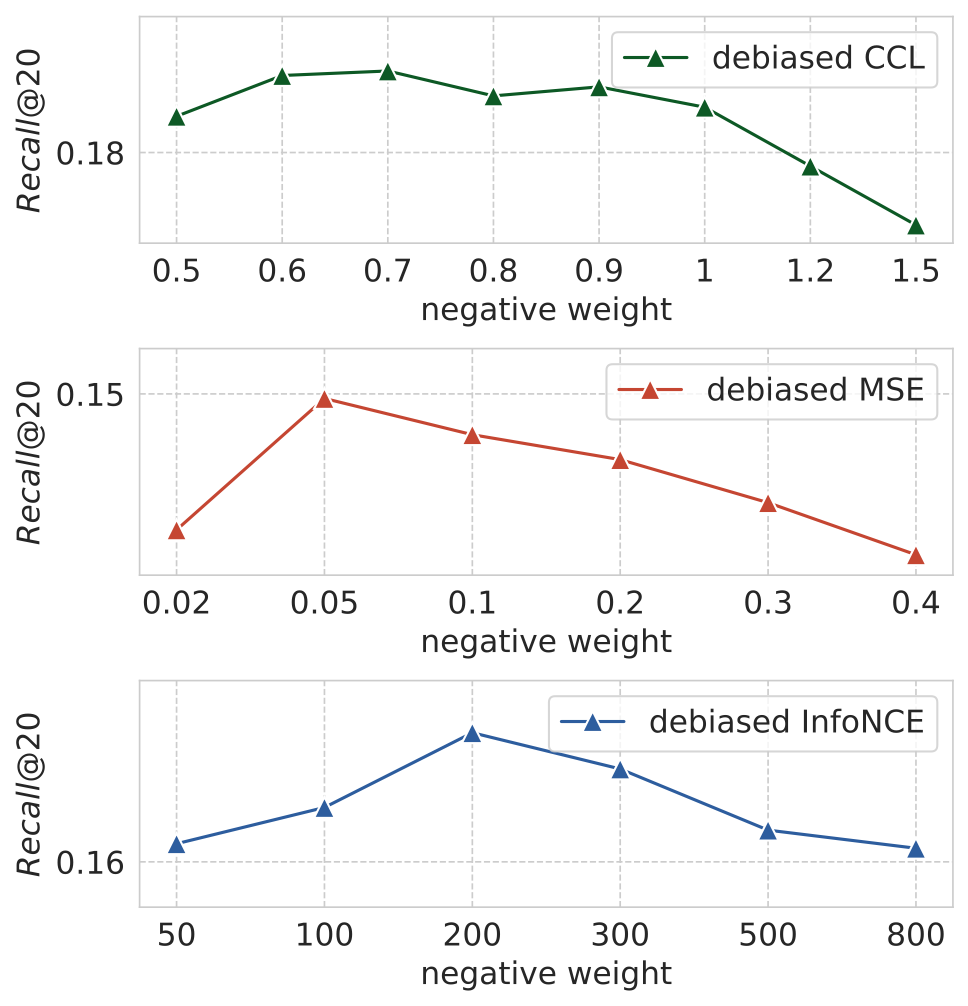}
    \vspace{-5pt}
    \caption{Effect of negative weight on $Gowalla$}
    \label{fig:hyper_lambda}
\end{figure}

\begin{figure}
    \centering
    \includegraphics[width=0.95\linewidth]{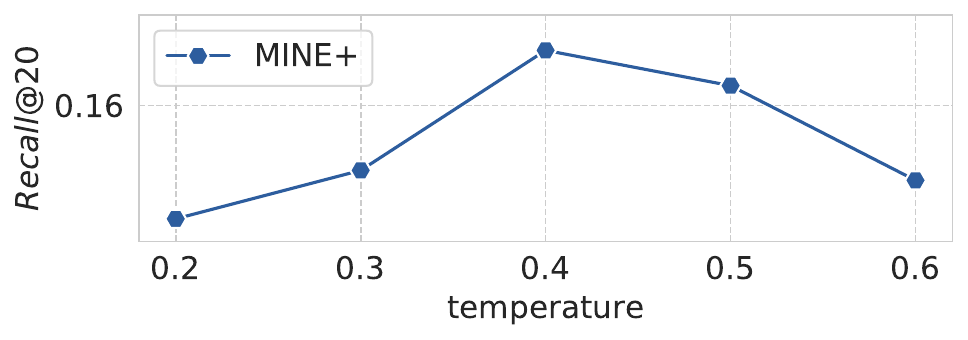}
    \caption{Effect of temperature on $Gowalla$}
    \label{fig:temp}
\end{figure}

\section{conclusion}
\label{sec:ccl-conclusion}
In this chapter, we conduct a comprehensive analysis of recommendation loss functions from the perspective of contrastive learning. 
We introduce a list of debiased losses and  new mutual information-based loss functions - MINE and MINE+ to recommendation setting. We  also show how the point-wise loss functions can be debiased and certified both iALS and EASE are inherently debiased. 
The empirical experimental results demonstrate the debiased losses and new information losses outperform the existing (biased) ones. 
In the future, we would like to investigate how the loss functions work with more sophisticated neural architectures and help discover more effective recommendation models by leveraging these losses.

\chapter{Related Work}\label{chpt:related_all}

\section{Introduction}
Over the past few decades, Machine Learning has gained widespread prominence and captivated attention across numerous domains. Its remarkable potential has led to its application in various fields, including human health \cite{jf@JMIR,junwu@fineehr,pelin}, finance, e-commerce, natural language processing \cite{xinjin@symlm,yubo@ci-poetry,yubo@text}, computer vision \cite{yuxin@remote,yuxin@segmentation,yuxin@cross,xinhong,xiaomin@emod}, autonomous vehicles, audio, and speech processing \cite{yubo@soundsieve}, robotics, wireless and ML \cite{caitao@deepMTL,caitao@localization}, human-computer interaction \cite{kong-eye-track,kong-multi,kong-ignore,kong-invis}, network \cite{caitao@networking,junwu@bottrinet,junwu@cellpad}, information retrieval \cite{yuxin@que2engage,yuxin@image}, supply chain optimization, sentiment analysis, Internet-of-things (IoT) \cite{xinjin@iot}, On-device ML \cite{yubo@device}, bioinformatics, and drug discovery. The versatility of Machine Learning \cite{prml}, especially deep learning \cite{xiaomin@parallel,xiaomin@spp} has allowed it to revolutionize these domains, enabling advancements in diagnostics, personalized recommendations, data analysis \cite{xiaobo@data,xiaobo@queue}, explainable ML \cite{ziheng@relax,ziheng@dark,ziheng@trans}, pattern recognition \cite{kong-group}, time series analysis \cite{xiaomin@tts}, and decision-making. With its continuous evolution and advancements, Machine Learning continues to shape and transform diverse industries, bringing forth new opportunities and possibilities.








In today's era of rapid e-commerce and thriving social media platforms, recommender systems have become indispensable tools for businesses \cite{deeprec,graph-rec}. These systems take on various forms across industries, from suggesting products on online marketplaces like Amazon and Taobao to generating personalized playlists for video and music services such as YouTube, Netflix, and Spotify. With the overwhelming amount of information available, users heavily rely on recommender systems to alleviate the burden of information overload and effortlessly explore their interests among a sea of options, including products, movies, news, and restaurants \cite{multi-vae}.

\section{Recommender System Related Work}
\label{sec:related-main}

\subsection{Graph Neural Networks}
Graph Neural Networks (GNNs) have indeed shown great promise in various fields, including machine learning~\cite{kipf2017semisupervised,yuwang@tree,yuwang@ssl,yuwang@fair,yuwang@imbal}, mobile computing \cite{guimin@graph,guimin@graph-icmla,guimin@graph-iot,guimin@mobile-graph} and recommender systems~\cite{graph-rec,xiaohan@dynamic,xiaohan@pretrain,xiaohan@hyperbolic,basket@xiaohan}. They are particularly effective for tasks that involve structured data, such as social networks, biological networks, or any kind of graph-structured data. Besides the most essential GNN-based work, which directly works on user-item bipartite graph \cite{chen2020revisiting,lightgcn,ngcf}, there are a few advanced techniques that enhance the graph recommendation. PinSage \cite{pinsage} utilized a random-walk-based sampling method to obtain the fixed size of neighborhoods to scale it up. KGAT \cite{kgat} included the item side information to enhance the graph-based recommendation.

\subsection{Objectives of Implicit Collaborative Filtering}

Implicit feedback has been popular for decades in Recommender System \citep{rendle2021item} since Hu et al. first proposed iALS \citep{ials@hu2008}, where a second-order pointwise objective - Mean Square Error (MSE) was adopted to optimize user and item embeddings alternatively. Due to its effectiveness and efficiency, there is a large number of works that take the MSE or its variants as their objectives, spanning from Matrix Factorization (MF) based models \citep{ials_revisiting,ials++,eals}, to regression-based models, including SLIM \citep{slim}, EASE \citep{ease,edlae}, etc. He et al. treat collaborative filtering  as a binary classification task and apply the pointwise objective - Binary Cross-Entropy Loss (BCE) onto it \citep{ncf}. CML utilized the pairwise hinge loss onto the collaborative filtering scenario \citep{cml}. MultiVAE \citep{multi-vae} utilizes multinomial likelihood estimation. Rendle et al. proposed a Bayesian perspective pairwise ranking loss - BPR, in the seminal work \citep{bpr}. YouTubeNet posed a recommendation as an extreme multiclass classification problem and apply the Softmax cross-entropy (Softmax) \citep{youtube,sample-softmax}. Recently, inspired by the largely used contrastive loss in the computer vision area, \citep{simplex} proposed a first-order based Cosine Contrastive Loss (CCL), where they maximize the cosine similarity between the positive pairs (between users and items) and minimize the similarity below some manually selected margin.

\subsection{Contrastive Learning}
Recently, Contrastive Learning (CL) has become a prominent optimizing framework in deep learning \citep{SimCLR,infonce}. The motivation behind CL is to learn representations by contrasting positive and negative pairs as well as maximize the positive pairs, including data augmentation \citep{SimCLR} and multi-view representations \citep{multi-cl}, etc. 
Chuang et al. proposed a new unsupervised contrastive representation learning framework, targeting to minimize the bias introduced by the selection of the negative samples. This debiased objective consistently improved its counterpart - the biased one in various benchmarks \citep{debiased}. Lately, by introducing the margin between positive and negative samples, Belghazi et al. present a supervised contrastive learning framework that is robust to biases \citep{unbiased}.
In addition, a mutual information estimator - MINE \citep{MINE}, closely related to the contrastive learning loss InfoNCE\citep{infonce}, has demonstrated its efficiency in various optimizing settings. \citep{alignment-uniformity} study the alignment and uniformity of user, item embeddings from the perspective of contrastive learning in the recommendation. CLRec \citep{cl-rec} design a new contrastive learning loss which is equivalent to using inverse propensity weighting to reduce exposure bias of a large-scale system. 



\subsection{Negative Sampling Strategies for Item Recommendation}
In most real-world scenarios, only positive feedback is available which brings demand for negative signals to avoid trivial solutions during training recommendation models. Apart from some non-sampling frameworks like iALS \citep{ials@hu2008}, Multi-VAE \citep{multi-vae}, the majority of models \citep{negative_sampling_review,lightgcn,ncf,bpr,rendle2021item,sampling_strategy} would choose to sample negative items for efficiency consideration. Uniform sampling is the most popular one \citep{bpr} which assumes uniform prior distribution. Importance sampling is another popular choice \citep{importance}, which chooses negative according to their frequencies. Adaptive sampling strategy keep tracking and picking the negative samples with higher scores \citep{aobpr,dynamic}. Another similar strategy is SRNS \citep{robustsample}, which tries to find and optimize false negative samples 
In addition, NCE \citep{nce} approach for negative sampling has also been adopted in the recommendation system\citep{sampling_strategy}. 


\subsection{Bias and Debias in Recommendation}
In the recommender system, there are various different bias \citep{bias-debias}, 
including item exposure bias, popularity bias, position bias, selection bias, etc.
These issues have to be carefully handled otherwise it would lead to unexpected results, such as inconsistency of online and offline performance, etc. Marlin et al. \citep{select-bias} validate the existence of selection bias by conducting a user survey, which indicates the discrepancy between the observed rating data and all data. Since users are generally exposed to a small set of items, unobserved data doesn't mean negative preference, this is so-called exposure bias \citep{expose-bias}. Position bias \citep{position-bias} is another commonly 
encountered one, where users are more likely to select items that display in a desirable place or higher position on a list.

Steck et al. \citep{steck-debias} propose an unbiased metric to deal with selection bias. iALS \citep{ials@hu2008} put some weights on unobserved data which helps improve the performance and deal with exposure bias. In addition, the sampling strategy would naturally diminish the exposure bias during model training \citep{bias-debias,bpr}. EXMF \citep{user-bias} is a user exposure simulation model that can also help reduce exposure bias.
To reduce position bias, some models \citep{dbcl,aecc} make assumptions about user behaviors and estimate true relevance instead of directly using clicked data. \citep{unbiased-pair} proposed an unbiased pairwise BPR estimator and further provide a practical technique to reduce variance. Another widely used method to correct bias is the utilization of propensity score \citep{unbiased-propensity,unbiased-rec-miss} where each item is equipped with some position weight.

\bibliographystyle{IEEEtran}
\bibliography{IEEEabrv, dissertation, kdd23,kdd21,more}

\appendix
\chapter{Appendix}
\appendix
\label{appendix}
\section{Reproducibility}
Generally, We follow the (strong generalization) experiment set-up in ~\cite{liang2018variational,DBLP:conf/nips/Steck20} and also the pre-processing of the three public available datasets, MovieLens 20 Million (ML-20M) \cite{ml20m}, Netflix Prize (Netflix) \cite{netflix}, and the Million Song Data (MSD)\cite{msddataset}.

\subsection{Experimental Set-up for Table \ref{tab:ml-20mbasic} and \ref{tab:netflixbasic}}
For the experiment of table \ref{tab:ml-20mbasic} and \ref{tab:netflixbasic}, we utilize the strong generalization protocol for EASE \cite{Steck_2019} and LRR methods. For Matrix Factorization based methods (MF and WMF/ALS), they are trained for with data (except the items to be evaluated in validation and test sets)  Note that this actually provides an advantage to the matrix factorization approaches as they prefer to learn the embeddings (latent factors) before its prediction. The experiment results present in table \ref{tab:ml-20mbasic} and \ref{tab:netflixbasic} are obtained by parameter grid search over the validation set according to $nDCG@100$, the same as \cite{liang2018variational}. The searching results are listed as following : table \ref{tab:ml20m:mf:search}, table \ref{tab:netflix:mf:search},
table \ref{tab:ml20m:lrr:search} and table \ref{tab:netflix-lrr-search}.

\begin{table}[H]
\resizebox{\linewidth}{!}{%
\begin{tabular}{|c|c|c|c|c|c|c|c|c|}
\hline
\multicolumn{2}{|c|}{\multirow{2}{*}{}} & \multicolumn{7}{c|}{$\lambda$}                                                \\ \cline{3-9} 
\multicolumn{2}{|c|}{}                  & 0       & 10      & \textbf{50}      & 100     & 200     & 500    & 1000   \\ \hline
\multirow{5}{*}{k}    & 128             & 0.29874 & 0.30951 & 0.36488          & 0.37826 & 0.30121 & 0.1901 & 0.1901 \\ \cline{2-9} 
                      & 256             & 0.22911 & 0.25104 & 0.37504          & 0.37826 & 0.30121 & 0.1901 & 0.1901 \\ \cline{2-9} 
                      & 512             & 0.1546  & 0.19782 & 0.39682          & 0.37826 & 0.30121 & 0.1901 & 0.1901 \\ \cline{2-9} 
                      & \textbf{1000}   & 0.09177 & 0.18242 & \textbf{0.39893} & 0.37826 & 0.30121 & 0.1901 & 0.1901 \\ \cline{2-9} 
                      & 1500            & 0.06089 & 0.20776 & 0.39893          & 0.37826 & 0.30121 & 0.1901 & 0.1901 \\ \hline
\end{tabular}%
}
\caption{ML-20M, MF, parameter search}
\label{tab:ml20m:mf:search}
\end{table}

\begin{table}[H]
\resizebox{\linewidth}{!}{%
\begin{tabular}{|c|c|c|c|c|c|c|c|c|}
\hline
\multicolumn{2}{|c|}{\multirow{2}{*}{}} & \multicolumn{7}{c|}{$\lambda$}                                                  \\ \cline{3-9} 
\multicolumn{2}{|c|}{}                  & 0       & 10      & 50      & \textbf{100}     & 200     & 500     & 1000    \\ \hline
\multirow{5}{*}{k}    & 128             & 0.31297 & 0.31542 & 0.32607 & 0.34103          & 0.34132 & 0.25831 & 0.17414 \\ \cline{2-9} 
                      & 256             & 0.25036 & 0.25536 & 0.28659 & 0.33521          & 0.34132 & 0.25831 & 0.17414 \\ \cline{2-9} 
                      & 512             & 0.17485 & 0.18314 & 0.26081 & 0.35157          & 0.34132 & 0.25831 & 0.17414 \\ \cline{2-9} 
                      & \textbf{1000}   & 0.12036 & 0.13766 & 0.28868 & \textbf{0.36414} & 0.34132 & 0.25831 & 0.17414 \\ \cline{2-9} 
                      & 1500            & 0.09147 & 0.12449 & 0.32103 & 0.36414          & 0.34132 & 0.25831 & 0.17414 \\ \hline
\end{tabular}%
}
\caption{Netflix, MF, parameter search}
\label{tab:netflix:mf:search}
\end{table}

\begin{table}[H]
\resizebox{\linewidth}{!}{%
\begin{tabular}{|c|c|r|r|r|r|r|r|r|}
\hline
\multicolumn{2}{|c|}{\multirow{2}{*}{}} & \multicolumn{7}{c|}{$\lambda$}                                                  \\ \cline{3-9} 
\multicolumn{2}{|c|}{} &
  \multicolumn{1}{c|}{8000} &
  \multicolumn{1}{c|}{9000} &
  \multicolumn{1}{c|}{\textbf{10000}} &
  \multicolumn{1}{c|}{11000} &
  \multicolumn{1}{c|}{12000} &
  \multicolumn{1}{c|}{13000} &
  \multicolumn{1}{c|}{14000} \\ \hline
\multirow{3}{*}{k}    & 1000            & 0.41063 & 0.41273 & 0.41432          & 0.41476 & 0.41515 & 0.41513 & 0.41478 \\ \cline{2-9} 
                      & \textbf{2000}   & 0.41332 & 0.41469 & \textbf{0.41533} & 0.41509 & 0.41499 & 0.41455 & 0.41394 \\ \cline{2-9} 
                      & 3000            & 0.41282 & 0.41397 & 0.41473          & 0.4146  & 0.41452 & 0.41413 & 0.41347 \\ \hline
\end{tabular}%
}
\caption{ML-20M, LRR, parameter search}
\label{tab:ml20m:lrr:search}
\end{table}

\begin{table}[H]
\resizebox{\linewidth}{!}{%
\begin{tabular}{|c|c|r|r|r|r|r|r|}
\hline
\multicolumn{2}{|c|}{\multirow{2}{*}{}} & \multicolumn{6}{c|}{$\lambda$}                                        \\ \cline{3-8} 
\multicolumn{2}{|c|}{} &
  \multicolumn{1}{c|}{10000} &
  \multicolumn{1}{c|}{20000} &
  \multicolumn{1}{c|}{30000} &
  \multicolumn{1}{c|}{\textbf{40000}} &
  \multicolumn{1}{c|}{50000} &
  \multicolumn{1}{c|}{60000} \\ \hline
\multirow{3}{*}{k}    & 2000            & 0.33632 & 0.37139 & 0.37856 & 0.37942          & 0.37828 & 0.37644 \\ \cline{2-8} 
                      & \textbf{3000}   & 0.34905 & 0.37441 & 0.37934 & \textbf{0.37949} & 0.37807 & 0.37617 \\ \cline{2-8} 
                      & 4000            & 0.35184 & 0.37468 & 0.37919 & 0.37931          & 0.37786 & 0.37601 \\ \hline
\end{tabular}%
}
\caption{Netflix, LRR, parameter search}
\label{tab:netflix-lrr-search}
\end{table}

\subsection{Experimental Set-up for Table \ref{table: maintable}}

In table \ref{table: maintable}, for LRR (closed-form) model ( described in equation \ref{eq:regressionpca}). For ML-20M dataset, we set $k=2000$, $\lambda = 8000$, $c = 1000$ (used to control range of weighted $\lambda_i$). For Netflix dataset the $\lambda = 8000$ , $\lambda = 40000$, $c = 5000$ . Noting that these hyper-parameters are not set as optimal ones (described in table \ref{tab:ml20m:lrr:search}, table \ref{tab:netflix-lrr-search}), which won't affect our claims. For EDLAE (including full rank) model, we obtain the similarity matrix by running the code from \cite{DBLP:conf/nips/Steck20}. For WMF/ALS model and EASE model, we set the hyper-parameters as table \ref{tab:ml-20mbasic} and table \ref{tab:netflixbasic}. Other models' results are obtained form \cite{liang2018variational}, \cite{Steck_2019} and \cite{DBLP:conf/nips/Steck20}.

For fast training augmented model, we sample part of training data. Generally, it takes 2.5 minutes per 100 batch (batch size is 2048) for training.

\subsection{MSD Dataset Results}

The table \ref{tab:msd_data} shows our experiment results carried out on the MSD dataset. Baseline models' results are obtained form \cite{liang2018variational}, \cite{Steck_2019} and \cite{DBLP:conf/nips/Steck20}.

\begin{table}[H]
\resizebox{\linewidth}{!}{%
\begin{tabular}{|c|c|c|c|c|}
\hline
\multicolumn{2}{|l|}{\multirow{2}{*}{}}               & \multicolumn{3}{c|}{MSD}                          \\ \cline{3-5} 
\multicolumn{2}{|l|}{}                                & Recall@20       & Recall@50       & nDCG@100      \\ \hline
\multicolumn{2}{|c|}{LRR}                             & 0.24769         & 0.33509         & 0.30127       \\ \cline{3-5} 
\multicolumn{2}{|c|}{LRR + $\lambda_i$}                    & 0.25083         & 0.33902         & 0.30372       \\ \hline
\multicolumn{2}{|c|}{EDLAE}                           & 0.26391         & 0.35465         & 0.31951       \\ \cline{3-5} 
\multicolumn{2}{|c|}{EDLAE Full Rank}                 & 0.33408         & 0.42948         & 0.39151       \\ \cline{3-5} 
\multicolumn{2}{|c|}{EDLAE Full Rank+HT}              & 0.33423         & 0.43134         & 0.38851       \\ \hline
\multicolumn{2}{|c|}{SLIM}                            & \multicolumn{3}{c|}{did not finished in \cite{slim01}} \\ \cline{3-5} 
\multicolumn{2}{|c|}{WMF}                             & 0.211           & 0.312           & 0.257         \\ \cline{3-5} 
\multicolumn{2}{|c|}{EASE}                            & 0.333           & 0.428           & 0.389         \\ \cline{3-5} 
\multicolumn{2}{|c|}{CDAE}                            & 0.188           & 0.283           & 0.237         \\ \cline{3-5} 
\multicolumn{2}{|c|}{MULT-DAE}                        & 0.266           & 0.363           & 0.313         \\ \cline{3-5} 
\multicolumn{2}{|c|}{MULT-VAE}                        & 0.266           & 0.364           & 0.316         \\ \hline
\multirow{3}{*}{dataset statistics} & \# items        & \multicolumn{3}{c|}{41140}                        \\ \cline{2-5} 
                                    & \# users        & \multicolumn{3}{c|}{571355}                       \\ \cline{2-5} 
                                    & \# interactions & \multicolumn{3}{c|}{34 millions}                  \\ \hline
\end{tabular}%
}
\caption{The performance comparison between models on MSD dataset.}
\label{tab:msd_data}
\end{table}



\section{reproducibility}\label{app:reproduce}


In \cref{tab:mine-repro}, we list the details of the hyperparameters for reproducing the results of $MINE+$ in \cref{tab:main,tab:debias}.

In \cref{tab:ccl-repro}, we list the details of the hyperparameters for reproducing the results of debiased $CCL$ in \cref{tab:debias}.

\begin{table}[]
\caption{Hyperparameter details of $MINE+$ loss}
\label{tab:mine-repro}
\begin{tabular}{|c|c|c|c|}
\hline
\multicolumn{1}{|l|}{} & Yelp2018 & Gowalla & Amazon-Books \\ \hline
negative weight        & 1.1      & 1.2     & 1.1          \\ \hline
temperature            & 0.5      & 0.4     & 0.4          \\ \hline
regularization         & 1e-8       & 1e-8      & 1e-8           \\ \hline
number of negative     & 800      & 800     & 800          \\ \hline
\end{tabular}
\end{table}

\begin{table}[]
\caption{Hyperparameter details of debiased $CCL$ loss}
\label{tab:ccl-repro}
\begin{tabular}{|c|c|c|c|}
\hline
\multicolumn{1}{|l|}{} & Yelp2018 & Gowalla & Amazon-Books \\ \hline
negative weight        & 0.4      & 0.7     & 0.6          \\ \hline
margin                 & 0.9      & 0.9     & 0.4          \\ \hline
regularization         & -9       & -9      & -9           \\ \hline
number of negative     & 800      & 800     & 800          \\ \hline
number of positive     & 10       & 20      & 50           \\ \hline
\end{tabular}
\end{table}

\section{Proof}\label{app:proof}

\subsection{Proof of \cref{th1}}

\begin{proof}
To derive the alternating least square solution, we have: 
\begin{itemize}
    \item Fixing item vectors, optimizing: 
    \begin{equation*}
    \begin{split}
        \mathcal{L}_u &=||\sqrt{c_u}{(H^{u}_S)}^T\mathbf{w}_u - \sqrt{c_u}\mathbf{y}_S||^2 + ||\sqrt{\alpha_0}H^T\mathbf{w}_u||^2 + ||\sqrt{\lambda_u}I \cdot \mathbf{w}_u||^2\\
        &-||\sqrt{c_u\alpha_0} {(H^{u}_S)}^T\mathbf{w}_u||^2, \ \ \ \ where \ \ \ \ \lambda_u = \lambda(|\mathcal{I}_u^+|+\alpha_0|\mathcal{I}|)^{\nu} 
    \end{split}
\end{equation*}
Note that the observed item matrix for user $u$ as $H^u_S$:

\begin{equation}
H^u_S=\begin{bmatrix}
\vrule & \vrule &        & \vrule\\
\mathbf{h}_1 & \mathbf{h}_s & \cdots & \mathbf{h}_{|\mathcal{I}_u^+|}\\
\vrule & \vrule &        & \vrule\\
\end{bmatrix}\in \mathbb{R}^{k\times |\mathcal{I}_u^+|},\quad s\in \mathcal{I}_u^+
\end{equation}

$H$ is the entire item matrix for all items. 
$y_S$ are all $1$ column (the same row dimension as $H^u_S$.

    \item Fixing User Vectors, optimizing: 
 \begin{equation*}
    \begin{split}
        \mathcal{L}_i &=||{(W^{i}_S \sqrt{C_u})}^T\mathbf{h}_i - \sqrt{C_u}\mathbf{y}_S||^2 + ||\sqrt{\alpha_0}W^T\mathbf{h}_i||^2 + ||\sqrt{\lambda_i}I \cdot \mathbf{h}_i||^2\\
        &-||\sqrt{\alpha_0} {(\sqrt{C_u}W^{i}_S)}^T\mathbf{h}_i||^2, \ \ \ \ where \ \ \lambda_i = \lambda(|\mathcal{U}_i^+|+\alpha_0|\mathcal{U}|)^{\nu}
    \end{split}
\end{equation*}
Here, $W^{i}_S$ are observed user matrix for item $i$, and $W$ are the entire user matrix, $C_u=diag(c_u)$ a $|\mathcal{U}|\times|\mathcal{U}|$ diagonal matrix  with $c_u$ on the diagonal.  
\end{itemize}

Solving $\mathcal{L}_u$ and $\mathcal{L}_i$, we have the following closed form solutions: 
\begin{equation*}
    \begin{split}
        & \mathbf{w}_u^*=\Big(c_u (1 -\alpha_0)H^{u}_S{(H^{u}_S)}^T + \alpha_0 HH^T + \lambda_u I\Big)^{-1}\cdot H^{u}_S\cdot \sqrt{c_u}\mathbf{y}_S \\ 
        & \mathbf{h}_i^*=\Big((1 -\alpha_0)W^{i}_SC_u{(W^{i}_S)}^T + \alpha_0 WW^T + \lambda_i I\Big)^{-1}\cdot W^{i}_S\cdot \sqrt{C_u}\mathbf{y}_S
    \end{split}
\end{equation*}

Assuming $c_u$ be a constant for all users, we get 

\begin{equation*}
    \begin{split}
        & \mathbf{w}_u^* \varpropto \Big(H^{u}_S{(H^{u}_S)}^T + \frac{\alpha_0}{(1-\alpha_0)c_u} HH^T + \frac{\lambda_u}{(1-\alpha_0)c_u} I\Big)^{-1}\cdot H^{u}_S\cdot \mathbf{y}_S \\ 
        & \mathbf{h}_i^* \varpropto \Big(W^{i}_S{(W^{i}_S)}^T + \frac{\alpha_0}{(1-\alpha_0)c_u} 
        WW^T + \frac{\lambda_i}{(1-\alpha_0)c_u} I\Big)^{-1}\cdot W^{i}_S \cdot \mathbf{y}_S
    \end{split}
\end{equation*}
Interestingly by choosing the right $\alpha_0$ and $\lambda$, the above solution is in fact the same solution (up to constant factor) for the original $\mathcal{L}_{iALS}$ ~\cite{hu2008collaborative}. Following the above analysis and let 
$\alpha_0^\prime=\frac{\alpha_0}{(1-\alpha_0)c_u}$ and 
$\lambda^\prime=\frac{\lambda}{(1-\alpha_0)c_u}$
\end{proof}

\section{ADMM}
\subsection{ADMM}
ADMM problem: 

$f,g$ convex. ($x,z$ are vectors)

\begin{equation}
    \begin{split}
        &\min f(x) + g(z)\\
        &s.t \quad Ax+Bz = c
    \end{split}
\end{equation}

Its equivalent to optimize:

\begin{equation}
    \begin{split}
        \mathcal{L}_{\rho}(x,y,z)&=f(x)+g(z)\\
        &+y^T(Ax+Bz-c)+(\rho/2)||Ax+Bz-c||^2_2
    \end{split}
\end{equation}

The alternating solution is:

\begin{equation}
\begin{aligned}
x^{k+1} & :=\operatorname{argmin}_x L_\rho\left(x, z^k, y^k\right) \\
z^{k+1} & :=\operatorname{argmin}_z L_\rho\left(x^{k+1}, z, y^k\right) \\
y^{k+1} & :=y^k+\rho\left(A x^{k+1}+B z^{k+1}-c\right)
\end{aligned}
\end{equation}

\subsection{Low-Rank EDLAE}
In \citep{edlae}, equation 10 is to optimize:

\begin{equation}
    \begin{split}
        &||X-X\cdot\{UV^T-dMat\big(diag(UV^T)\big)\}||^2_F\\
        &+||\Lambda^\frac{1}{2}\cdot \{UV^T-dMat\big(diag(UV^T)\big)\}||^2_F
    \end{split}
\end{equation}

This can be re-written as:

\begin{equation}
    \begin{split}
        &||X-XUV^T+X\cdot dMat(\beta)||^2_F+||\Lambda^\frac{1}{2}\cdot (UV^T-dMat(\beta)) ||^2_F\\
        &s.t. \quad \beta = diag(UV^T)
    \end{split}
\end{equation}

Using augmented Lagrangian:

\begin{equation}
    \begin{split}
       & ||X-XUV^T+X\cdot dMat(\beta)||^2_F+||\Lambda^{\frac{1}{2}}UV^T||-||dMat(\beta)||^2_F\\
       &+2\gamma^T\Omega (\beta-diag(UV^T))+||\Omega^\frac{1}{2}(\beta-diag(UV^T))||^2_F
    \end{split}
\end{equation}

$\gamma$ is the vector of Lagrangian multipliers, amd $\Omega = \Lambda + \omega I$ where $\omega>0$ is a scalar.

\begin{equation}
\begin{aligned}
\hat{\mathbf{U}} & \leftarrow\left(\mathbf{X}^{\top} \mathbf{X}+\Lambda+\Omega\right)^{-1}\left(\mathbf{X}^{\top} \mathbf{X} \cdot \operatorname{dMat}(\mathbf{1}+\hat{\beta})+\Omega \cdot \operatorname{dMat}(\hat{\beta}-\hat{\gamma})\right) \hat{\mathbf{V}}\left(\hat{\mathbf{V}}^{\top} \hat{\mathbf{V}}\right)^{-1} \\
\hat{\mathbf{V}}^{\top} & \leftarrow\left(\hat{\mathbf{U}}^{\top}\left(\mathbf{X}^{\top} \mathbf{X}+\Lambda+\Omega\right) \hat{\mathbf{U}}\right)^{-1} \hat{\mathbf{U}}^{\top}\left(\mathbf{X}^{\top} \mathbf{X} \cdot \operatorname{dMat}(\mathbf{1}+\hat{\beta})+\Omega \cdot \operatorname{dMat}(\hat{\beta}-\hat{\gamma})\right) \\
\hat{\beta} & \leftarrow \frac{\operatorname{diag}\left(\mathbf{X}^{\top} \mathbf{X} \hat{\mathbf{U}} \hat{\mathbf{V}}^{\top}\right)-\operatorname{diag}\left(\mathbf{X}^{\top} \mathbf{X}\right)+\operatorname{diag}(\Omega) \odot\left(\operatorname{diag}\left(\hat{\mathbf{U}} \hat{\mathbf{V}}^{\top}\right)+\hat{\gamma}\right)}{\operatorname{diag}\left(\mathbf{X}^{\top} \mathbf{X}\right)+\operatorname{diag}(\Omega-\Lambda)} \\
\hat{\gamma} & \leftarrow \hat{\gamma}+\operatorname{diag}\left(\hat{\mathbf{U}} \hat{\mathbf{V}}^{\top}\right)-\hat{\beta}
\end{aligned}
\end{equation}

\clearpage
\newpage

\section{EASE-debiased}

\begin{equation*}
    \begin{split}
&\mathcal{L}^{Debiased}_{mse}=\sum\limits_{u} \Bigg[ \sum\limits_{i\in \mathcal{I}_u^+}[c_u(\hat{y}_{ui}-1)^2 -c_u\lambda \hat{y}^2_{ui}]+ \lambda\sum\limits_{t\in\mathcal{I}} \hat{y}_{ut}^2 \Bigg]\\
        &=\sum\limits_{u} \Bigg[ \sum\limits_{i\in \mathcal{I}_u^+}[c_u(\hat{y}_{ui}-1)^2 -c_u\lambda \hat{y}^2_{ui}]+ \lambda\sum\limits_{t\in\mathcal{I}_u^+} \hat{y}_{ut}^2 + \lambda\sum\limits_{p\in\mathcal{I}\backslash \mathcal{I}_u^+} \hat{y}_{ut}^2 \Bigg]\\
&=\sum\limits_{u} \Bigg[\sum\limits_{i\in \mathcal{I}_u^+}[c_u(\hat{y}_{ui}-1)^2+c_u \sum\limits_{p\in\mathcal{I}\backslash \mathcal{I}_u^+} \hat{y}_{ut}^2 \Bigg]\\
&+ \lambda (1-c_u)\sum\limits_{u}\sum\limits_{i\in \mathcal{I}_u^+}\hat{y}^2_{ui} + (\lambda-c_u) \sum\limits_{p\in\mathcal{I}\backslash \mathcal{I}_u^+} \hat{y}_{ut}^2 \\
 &=||\sqrt{C_u}(X-XW)||^2_F \\
        &-\lambda||X\odot \sqrt{C_u-I}XW||^2_F -||(1-X)\odot \sqrt{C_u-\lambda I}XW||^2_F\\
    \end{split}
\end{equation*}

For Regression based $W = UV^T$
\begin{equation*}
\begin{split}
 &L=||\sqrt{C_u}(X-XUV^T)||^2_F \\
        &-\lambda||X\odot \sqrt{C_u-I}XUV^T||^2_F -||(1-X)\odot \sqrt{C_u-\lambda I}XUV^T||^2_F\\
        &+\gamma ||UV||^2_F\\
        &s.t. \quad diag(UV)=0\\
    \end{split}
\end{equation*}
\section{Mapping Function}
\subsection{ Properties of Recurrent function $f$}
\label{app_subsec:property}

In the following, we enumerate a list of interesting properties of this recurrent formula of $f$ based on Beta distribution. 

\begin{lemma}[Location of Last Point]
For any $a$, all $f(n)$ converge to $N$: $f(n)=N$. 
\end{lemma}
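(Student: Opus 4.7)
The plan is to exploit the telescoping structure built into the recurrence. Rewriting the recurrence \cref{eq:fk} as
\[
[f(k+1;a)-1]^a - [f(k;a)-1]^a = a[N-1]^a \binom{n-1}{k}\mathcal{B}(a+k, n-k),
\]
I would sum both sides from $k=1$ to $k=n-1$, so that all intermediate $[f(k;a)-1]^a$ terms cancel. Using the initial value $[f(1;a)-1]^a = (N-1)^a \cdot a\,\mathcal{B}(a,n)$ from \cref{eq:f1}, and observing that $\mathcal{B}(a,n) = \binom{n-1}{0}\mathcal{B}(a+0, n-0)$ so that this boundary term can be absorbed into the sum as the $k=0$ contribution, I arrive at
\[
[f(n;a)-1]^a \;=\; a[N-1]^a \sum_{k=0}^{n-1} \binom{n-1}{k}\mathcal{B}(a+k,\, n-k).
\]

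The whole proof then reduces to the single algebraic identity
\[
a \sum_{k=0}^{n-1} \binom{n-1}{k}\mathcal{B}(a+k,\, n-k) \;=\; 1.
\]
The slickest route is via the integral representation of the Beta function. Writing $\mathcal{B}(a+k,n-k) = \int_0^1 x^{a+k-1}(1-x)^{n-1-k}\,dx$, swapping the finite sum and the integral, and applying the binomial theorem to $\sum_{k=0}^{n-1}\binom{n-1}{k} x^k (1-x)^{n-1-k} = 1$, the integrand collapses to $x^{a-1}$, whose integral on $[0,1]$ is exactly $1/a$. This delivers the identity above, and hence $[f(n;a)-1]^a = [N-1]^a$, giving $f(n;a) = N$ as claimed.

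The proof is essentially a one-step telescoping argument plus one clean Beta/binomial identity, so there is no real obstacle; the only thing to be careful about is the indexing. Specifically, I need to verify that the $k=0$ term produced by pulling $[f(1;a)-1]^a$ into the sum matches the summand $\binom{n-1}{0}\mathcal{B}(a,n)$ used in the telescoped expression, which it does since $\binom{n-1}{0}=1$. Assuming $a>0$ (so that the $a$-th root is unambiguously defined and the Beta integral converges), the result then holds for every admissible parameter, independent of the shape of the underlying user-rank distribution encoded in $a$.
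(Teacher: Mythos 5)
Your proposal is correct and follows essentially the same route as the paper's proof: telescoping the recurrence from $k=1$ to $n-1$, absorbing the $f(1;a)$ boundary term as the $k=0$ summand, and evaluating $a\sum_{k=0}^{n-1}\binom{n-1}{k}\mathcal{B}(a+k,n-k)=1$ via the Beta integral representation and the binomial theorem. The only difference is cosmetic — you telescope first and then prove the identity, while the paper establishes the identity first.
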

\begin{proof}
We note that 
\begin{small}
\begin{equation*}
\begin{split}
    &\sum\limits_{k=0}^{n-1}{\binom{n-1}{k}\mathcal{B}(a+k, n-k)} 
    = \int_{0}^{1}{\sum\limits_{k=0}^{n-1}{\binom{n-1}{k}} t^{a+k-1}(1-t)^{n-k-1}dt} \\
    &= \int_{0}^{1}{ t^ {a-1}\left[\sum\limits_{k=0}^{n-1}{\binom{n-1}{k}} t^{k}(1-t)^{n-k-1} \right]dt} 
    = \int_{0}^{1}{ t^ {a-1}} = \frac{1}{a}
    \end{split}
\end{equation*}
\end{small}
Add up \Cref{eq:l2r} from $k=1$ to $n-1$, we have:

\begin{equation*}\label{eq:nN}
\begin{split}
&\frac{[f(n)-1]^a-[f(1)-1]^a}{a[N-1]^a} = \sum\limits_{k=1}^{n-1}{\binom{n-1}{k}\mathcal{B}(a+k, n-k)}\\
&=\frac{1}{a}-\binom{n-1}{0}\mathcal{B}(a,n) \quad 
\text{then, we have } f(n) = N 
\end{split} 
\end{equation*}

\end{proof}

\noindent{\bf Uniform Distribution and Linear Map:} 
When the parameter $a=1$, the Beta distribution degenerates to the uniform distribution. 
From equations \Cref{eq:f1,eq:fk}, we have another simple linear map:
\begin{equation}\label{eq:uni_fk}
    f(k;a=1) = k\frac{N-1}{n} + 1
\end{equation}

Even though the user rank distribution is quite different from the uniform distribution, we found that this formula provides a reasonable approximation for the mapping function, and generally, better than the Naive formula \Cref{eq:fk1}. 
More interestingly, we found that when $a$ ranges from $0$ to $1$ (as they express an exponential-like distribution), they actually are quite close to this linear formula.

\noindent{\bf Approximately Linear:}
When we take a close look at the $f(k;a)$ sequences ($f(1;a), f(2;a), \cdots f(k;a)$ for different parameters $a$ from $0$ to $1$, \footnote{This also holds when $a>1$, but since the $Recall$, aka the user rank distribution, is typically very different from these settings, we do not discuss them here.} we find that when $k$ is large, $f(k;a)$ all gets very close to $f(k;1)$ (the linear map function for the uniform distribution). Figures~\ref{fig:rela_a1} show the relative difference of all $f(k;a)$ sequences for $a=0.2,0.6,0.8$ with respect to $a=1$, i.e., $[f(k;a)-f(k;a=1)]/f(k;a=1)$. Basically, they all converge quickly to $f(k;a=1)$ as $k$ increases.

To observe this, 
  let us take a look at their $f(K)$ locations when $k$ is getting large. To simplify our discussion, let $g(k) = f(K)-1$, and then we have 

\begin{equation*}\label{eq:LinearAp}
\begin{split}
     &[g(k+1)]^a - [g(k)]^a = a(N-1)^a\frac{\Gamma(n)}{\Gamma(n+a)}\frac{\Gamma(k+a)}{\Gamma(k+1)}\\
     &[\frac{g(k+1)}{g(k)}]^a = 1 +[\frac{(N-1)k}{g(k)n}]^a\frac{a}{k} \\
     & \text{when $n$ and $k$ are large,}  \lim_{n\rightarrow \infty}{\frac{\Gamma(n)}{\Gamma(n+a)}} = \frac{1}{n^a}  \\
     &[\frac{g(k+1)}{g(k)}] = \left(1 +[\frac{(N-1)k}{g(k)n}]^a\frac{a}{k}\right)^{1/a}\approx 1 +[\frac{(N-1)k}{g(k)n}]^a\frac{1}{k}
\end{split}
\end{equation*}

When $g(k) = (N-1)\frac{k}{n}$, the above equation holds $\frac{g(k+1)}{g(k)} = 1+\frac{1}{k}$, and this suggests they are all quite similar to the linear map $f(k;a=1)$ for the uniform distribution. 

By looking at the difference $f(k;a)-f(k-1;a)$, we notice we will get very close to the constant $\frac{N-1}{n}=f(k;1)-f(k-1;1)$ even when $k$ is small. To verify this, let $y_{k+1}$
\begin{small}
$$ = [f(k+1;a)-1]^a - [f(k;a) - 1]^a = a[N-1]^a\frac{\Gamma(n)}{\Gamma(n+a)}\frac{\Gamma(k+a)}{\Gamma(k+1)}$$
\end{small}
Then we immediately observe: 

\begin{equation*}
\begin{split}
    &\frac{y_{k+1}}{y_k}=\frac{
    a[N-1]^a\frac{\Gamma(n)}{\Gamma(n+a)}\frac{\Gamma(k+a)}{\Gamma(k+1)}}{a[N-1]^a\frac{\Gamma(n)}{\Gamma(n+a)}\frac{\Gamma(k-1+a)}{\Gamma(k)}}
    =1 + \frac{a-1}{k}
    \end{split}
\end{equation*}

Thus, after only a few iterations for $f(k;a)$, we have found that their (powered) difference will get close to being a constant. 
\section{Fix-Size-Sampling Estimation: EM Algorithm}\label{app_sec:fix-em}
In this section, we give the details of the EM algorithm for $MLE$ estimator.
Recalling \Cref{eq:mb_mle}
Weighted log-likelihood function is:
\begin{equation*}
    \log{\mathcal{L}} = \sum\limits_{u=1}^{M}{w_u \cdot\log{\sum\limits_{k=1}^{N}{p(x_u,z_{uk}|\theta_k)}}}
\end{equation*}

\noindent\textbf{E-step}

\begin{equation*}
\mathcal{Q}(\pmb{\pi},\pmb{\pi}^{old})=\sum\limits_{u=1}^{M}{w_u \sum\limits_{k=1}^{N}{\gamma(z_{uk}) \log{p(x_u,z_{uk}|\theta_k)}}}
\end{equation*}

where
\begin{equation*}
    \gamma(z_{uk})=p(z_{uk}|x_u,\pmb{\pi}^{old})=\frac{{\pi}^{old}_k p(x_u|\theta_k)}{\sum\limits_{j=1}^{N}{\pi^{old}_j p(x_u|\theta_j)}}
\end{equation*}

\noindent\textbf{M-step}
\begin{equation}\label{eq:Q_prime}
   \mathcal{Q}^\prime(\pmb{\pi},\pmb{\pi}^{old})=\mathcal{Q}(\pmb{\pi},\pmb{\pi}^{old}) + \lambda (1-\sum\limits_{k=1}^{N}{\pi_k})
\end{equation}

\begin{equation*}
    \frac{\partial \mathcal{Q}^\prime(\pmb{\pi},\pmb{\pi}^{old})}{\partial \pi_k}=\sum\limits_{u=1}^{M}{ \frac{w_u\gamma(z_{uk}) }{\pi_k} -\lambda}=0
\end{equation*}

\begin{equation*}
    \lambda\pi_k=\sum\limits_{u=1}^M{w_u\cdot\gamma(z_{uk})}
\end{equation*}

\begin{equation}
    \lambda=\sum\limits_{k=1}^{N}\sum\limits_{u=1}^M{w_u\cdot\gamma(z_{uk})}=\sum\limits_{u=1}^M{w_u}
\end{equation}

\begin{equation*}
    \pi^{new}_k=\frac{\sum\limits_{u=1}^M{w_u\cdot\gamma(z_{uk})}}{\sum\limits_{u=1}^M{w_u} }
\end{equation*}

\section{Linear Combination of Coefficients}\label{sec:app_linear}
Considering $X=(X_1,\dots,X_n)$ is the random variables of a sample $M$ times multinomial distribution with $n$ cells probabilities $(\theta_1,\dots,\theta_n)$. We have:$\frac{X_i}{M}\rightarrow \theta_i, \text{ when } M\rightarrow \infty$
\begin{equation}
    \begin{split}
        &\mathbb{E}[{X_i}] = M\theta_i\quad Var[X_i] = M\theta_i(1-\theta_i)\\
        &Cov(X_i, X_j) = -M\theta_i\theta_j
    \end{split}
\end{equation}
Considering the a new random variable deriving from the linear combination: $\mathcal{A} = \sum\limits_{i=1}^n{w_iX_i}$, where the ${w_i}$ are the constant coefficients.
\begin{small}
\begin{equation*}\label{eq:A_equation}
    \begin{split}
    \mathbb{E}[\mathcal{A}]&=  M\cdot\sum\limits_{i=1}^n{w_i\theta_i}\\
        Var[\mathcal{A}] &= \mathbb{E}[\mathcal{A}^2]-(\mathbb{E}[\mathcal{A}])^2=\sum\limits_i^{n}{w^2_i[M\theta_i-M\theta_i^2]}-2\sum\limits_{i\neq j}{w_iw_j[M\theta_i\theta_j]}\\
        &=M\sum\limits_{i}^n{w^2_i\theta_i}-M\cdot\Big(\sum\limits_{i}^n{w^2_i\theta^2_i}+2\sum\limits_{i\neq j}w_iw_j\theta_i\theta_j\Big)\\
        &=M\cdot\Big(\sum\limits_{i}^n{w^2_i\theta_i}-\big(\sum\limits_{i}^n{w_i\theta_i}\big)^2 \Big)
    \end{split}
\end{equation*}
\end{small}

\section{Rewriting of $L_2$}\label{sec:l2_re}

\begin{equation}
    \begin{split}
    \mathcal{L}_2&=\sum\limits_{R=1}^N{P(R)\cdot \mathbb{E}[\sum_{r=1}^n \tilde{P}(r|R) \widehat{\mathcal M}(r)- \sum_{r=1}^n  P(r|R) \widehat{\mathcal M}(r)]^2 }\\
        &=\sum\limits_{R=1}^N{P(R)\cdot \mathbb{E}[\sum_{r=1}^n \frac{X_r}{M\cdot P(R)} \widehat{\mathcal M}(r)- \sum_{r=1}^n  \frac{\mathbb{E}[X_r]}{M\cdot P(R)} \widehat{\mathcal M}(r)]^2 }\\
        &=\sum\limits_{R=1}^N{\frac{1}{M^2\cdot P(R)}\cdot \mathbb{E}[\sum_{r=1}^n {X_r}\widehat{\mathcal M}(r)- \sum_{r=1}^n  {\mathbb{E}[X_r]} \widehat{\mathcal M}(r)]^2 }\\
 &= \sum\limits_{R=1}^{N}{\frac{1}{M}\cdot \Big(\sum\limits_{r}^n{\widehat{\mathcal M}^2(r) P(r|R)}-\big(\sum\limits_{r}^n{\widehat{\mathcal M}(r)P(r|R)}\big)^2 \Big) } \\
 &= \sum\limits_{R=1}^{N}{\frac{1}{M} Var(\widehat{\mathcal M}(r)|R)} \nonumber
    \end{split}
\end{equation}

\section{Adaptive EM steps}\label{sec:app_em}
\noindent{\textbf{E-step}}\\
\begin{equation}
    \begin{split}
        \log\mathcal{L}&=\sum\limits_{u=1}^{M}\log\sum\limits_{k=1}^N{P(r_u, R_{uk}; \boldsymbol{\pi})}\\
        &= \sum\limits_{u=1}^{M}\log\sum\limits_{k=1}^N{\phi(R_{uk})\cdot \frac{P(r_u, R_{uk}; \boldsymbol{\pi})}{\phi(R_{uk})}}\\
       &\ge \sum\limits_{u=1}^{M}\sum\limits_{k=1}^N{\phi(R_{uk})\cdot\log P(r_u, R_{uk}; \boldsymbol{\pi})} + constant\\
       &\triangleq \sum\limits_{u=1}^{M} Q_u(\boldsymbol{\pi}, \boldsymbol{\pi}^{old}) = Q(\boldsymbol{\pi}, \boldsymbol{\pi}^{old})
       \end{split}
       \end{equation}
where
       \begin{equation}
           \begin{split}
        \phi(R_{uk}) &= P(R_u = k|r_u;\boldsymbol{\pi}^{old})\\
        &=\frac{\pi^{old}_k\cdot P(r_u|R_u = k; n_u)}{\sum\limits_{j=1}^N\pi^{old}_j\cdot P(r_u|R_u = j; n_u) }
    \end{split}
\end{equation}
where $\phi$ is the posterior, $\boldsymbol{\pi}^{old}$ is the known probability distribution and $\boldsymbol{\pi}$ is parameter.

\noindent{\textbf{M-step}:} Derived from the Lagrange maximization procedure:  
\begin{equation}
    \begin{split}
        \pi^{new}_k=\frac{1}{M}\sum\limits_{u=1}^M\phi(R_{uk})
    \end{split}
\end{equation}

\end{document}